\definecolor{celadon}{rgb}{0.67, 0.82, 0.59}
\definecolor{cblue}{rgb}{0.6, 0.73, 0.89}
\theoremstyle{remark}
\newtheorem{thm}{Theorem}
\newtheorem{prop}{Proposition}
\newtheorem{cor}{Corollary}
\newtheorem{lem}{Lemma}
\newtheorem{defi}{Definition}
\newtheorem{rem}{Remark}
\newcommand{\norm}[1]{\left\lVert#1\right\rVert}
\begin{document}
	%
	\title{Bounded Estimation over Finite-State~Channels: Relating Topological Entropy and Zero-Error~Capacity}
	
	\author{\IEEEauthorblockN{Amir Saberi, Farhad Farokhi and Girish N.~Nair}
		\thanks{The authors are with the Department of Electrical and Electronic Engineering, The University of Melbourne, VIC 3010, Australia  (e-mails: asaberi@student.unimelb.edu.au, \{ffarokhi, gnair\}@unimelb.edu.au). Aspects of this work were done while F. Farokhi was also affiliated with CSIRO's Data61. This work was supported by the Australian Research Council via Future Fellowship grant FT140100527. This paper was presented in part at the 2019 and 2020 IEEE International Symposia on Information Theory ~\cite{saberi2019state,saberi2020explicit}.}}
	
	\maketitle

\begin{abstract}
	We investigate state estimation of linear systems over channels having a finite state not known by the transmitter or receiver. We show that similar to memoryless channels, zero-error capacity is the right figure of merit for achieving bounded estimation errors. We then consider finite-state, worst-case versions of the common erasure and additive noise channels models, in which the noise is governed by a finite-state machine without any statistical structure. Upper and lower bounds on their zero-error capacities are derived, revealing a connection with the {\em topological entropy} of the channel dynamics. Separate necessary and sufficient conditions for bounded linear state estimation errors via such channels are obtained. These estimation conditions bring together the topological entropies of the linear system and the discrete channel.
\end{abstract}

\section{Introduction}
The communication channels that connect networked controlled systems make the traditional assumption that signals are continuously and perfectly available invalid. For example, data transmissions from wireless sensors are susceptible to noise, fading and interference from other transmitters. Often in communication systems, such effects are modelled by i.i.d. or Markov processes and performance is considered on average over many uses; e.g. in a normal phone call it is not important that every transmitted bit is received, and the average guaranteed performance is enough. In contrast, control systems often deal with safety-critical applications, where stability and performance must be guaranteed for every single use. As an example, in some applications, robots are required to avoid collisions especially operating alongside humans and, for that, they have to be localized with a bounded error \cite{haddadin2017robot}. Moreover, uncertainties such as disturbances and faults, often arise from mechanical and chemical components which may not exhibit i.i.d. randomness. \emph{Adversarial noise} is another uncertainty that may not be described by {\em a priori} known stochastic assumptions. Thus, it is often treated as a bounded unknown variable without statistical assumptions and its worst-case behaviours are considered \cite{teixeira2012attack,linsenmayer2017stabilization}.

As a major engineering problem, estimation over noisy channels has been studied extensively in the communications and networked control systems literature; see, e.g.~\cite{schenato2007foundations,matveev2009estimation,sukhavasi2016linear}, and references therein. This literature largely considers memoryless channels with known probabilistic models. The bounded data loss models used in~\cite{bernat2001weakly,xiong2007stabilization,linsenmayer2017stabilization} are notable departures from the stochastic approach, and use deterministic \textit{weakly hard real-time} constraints to model the loss process  in control and data networks. Furthermore, in low-latency communications where using long block codes is not feasible, deterministic, worst-case models have been proposed for coding \cite{badr2017layered,fong2019optimal}. In addition, in recent years, bounded, nonstochastic approximations have been employed to deal with the complexity of high-dimensional stochastic problems, e.g. in network information theory \cite{avestimehr2011wireless} and multidimensional stochastic optimization \cite{bandi2012tractable}. 

When estimating the state of a linear system via a noisy memoryless channel, it is known that the relevant figure of merit for achieving estimation errors that are almost-surely uniformly bounded over time is the {\em zero-error capacity}, $ C_0 $, of the channel, and not the ordinary capacity \cite{matveev2007shannon, Franceschetti2014,wiese2018secure}. 
The zero-error capacity is defined as the maximum block coding rate yielding exactly zero decoding errors at the receiver. For memoryless channels, the zero-error capacity depends on the graph properties of the channel, not on the values of non-zero transition probabilities~\cite{shannon1956zero}. Unfortunately, the zero-error capacity is zero for common memoryless stochastic channel models, e.g. binary symmetric and binary erasure channels~\cite{korner1998zero}. This means that when estimating unstable plants, the worst-case estimation errors grow unboundedly in the long term. Therefore, these channels are not suitable for safety- or mission-critical applications that must respect hard guarantees at all times. 

Channel models with memory can capture the case when noise patterns are correlated, instead of assuming i.i.d. noise in the channel such as intersymbol-interference. Moreover, these models better reflect communication situations in which network congestion or wireless fading can cause bursty error patterns that are difficult to model stochastically or that the underlying probabilities vary significantly over time~\cite{badr2017layered,badr2017perfecting,wang2018end}. Interestingly, channels with memory can have positive $ C_0 $ unlike memoryless erasure and symmetric channels. A large class of channels with memory can be captured by {\em finite-state channels}, where the transitions between the states of the channel govern the noise pattern. As there might be no probabilistic structure in these channels, worst-case scenarios are considered. Such models have received attention in the recent literature~\cite{nair2013nonstochastic, badr2017layered, wang2018end}, and are useful when probabilistic information  about the channel noise is not available or when the noise itself is not random, e.g., adversarial noises \cite{teixeira2012attack, zhu2019observer, zhu2019quasi}. 
However, very few studies have considered the problem of estimation and control over channels with memory, e.g. stabilization over a special case of a continuous alphabet channel, namely a moving average Gaussian channel  is  studied in \cite{middleton2009feedback}.  This paper intends to fill this gap for discrete channels.

Most studies of finite-state channels are focused on finding the ordinary capacity in stochastic settings. As an example, the {\em Gilbert-Elliot} channel is a well-studied model for bursty error patterns; see \cite{goldsmith1996capacity}, \cite{han2015randomized}, and references therein. However, there is no general result for finite-state channels. In recent years, studies have been done towards determining the zero-error capacity of special channels with memory. In \cite{kovavcevic2017zero}, the zero-error capacity of special symbol shift channels is determined. In \cite{cao2018zero}, Cao \emph{et. al.} studied the zero-error capacity of binary channels with one memory. In \cite{zhao2010zero}, Zhao and Permuter introduced a dynamic programming formulation for computing the {\em zero-error feedback capacity} of a finite-state channel assuming the state of the channel is available at both the encoder and decoder. 

\subsection{Our contribution}
In this paper, we show that  the zero-error capacity of any finite-state channel coincides with the largest possible rate of {\em nonstochastic information} (introduced in \cite{nair2013nonstochastic}) across it. This result generalizes the conditions on bounded state estimation for linear systems from memoryless to finite-state channels, demonstrating the universality of these conditions over more general classes of communication channels. 
It is worth noting that, in the case of probabilistic transitions, the zero-error capacity bounds presented in this paper are still valid since they do not depend on the transition probabilities. However, some examples of finite-state channels without stochastic assumptions are given that can be {\em bursty} channels \cite{fong2019optimal}, {\em sliding window} channels \cite{fong2018optimal}, and Gilbert-Elliot like channels \cite{badr2017layered}. In contrast to common stochastic channel models, the results presented here show that for the finite-state channel, both zero-error capacities can be strictly positive.

Next, we turn our focus to the state estimation problem over discrete erasure and additive noise channels governed by a finite-state machine. Such channels possess memory and thus fall outside the framework of~\cite{nair2013nonstochastic, matveev2007shannon,shannon1956zero}. Moreover, the input does not affect the noise process, making this problem different than \cite{cao2018zero}.
An upper bound on $C_0$ for finite-state erasure channels is derived by applying the dynamic programming equation in~\cite{zhao2010zero} which gives the exact value of zero-error feedback capacity, $ C_{0f} $. Novel bounds on $C_0$ are then derived in terms of the {\em topological entropy} of the channel state dynamics. The topological entropy is a metric to capture the asymptotic growth rate of uncertainty in a continuous dynamical system  evolving on a compact space, first introduced by Adler \emph{et. al.} \cite{adler1965topological}. We refer the reader to \cite{downarowicz2011entropy} and recent papers \cite{kawan2019metric,liberzon2017entropy} for detailed discussions. In addition, discrete topological entropy in {\em symbolic dynamics} is defined as the asymptotic growth rate of the number of possible  state sequences \cite{lind1995introduction}. 
The upper bound derived for $ C_0 $ of finite-state additive noise channels here is shown to be the exact value of $ C_{0f} $ for such channels in \cite{saberi2020explicit}. This paper goes well beyond \cite{saberi2020explicit}; in particular, it includes the estimation problem as well as erasure channels.

These bounds, in combination with the derived bounded state estimation conditions, yield separate necessary and sufficient conditions for achieving uniformly bounded state estimation errors via such channels, in terms of the  topological entropies of the linear system and the channel. Here, both continuous and discrete topological entropies appear in one equation, which can be intriguing for further studies. In a recent conference paper \cite{saberi2018estimation}, a worst-case {\em consecutive window} model of binary erasure channels was studied. Such a model can be made memoryless by a lifting argument, after which classical techniques can be applied. In contrast, the finite-state models studied here are state-dependent and require a different approach.

\subsection{Paper organization}
The rest of the paper is organized as follows. In section~\ref{sec:est}, the problem of linear state estimation over finite-state channels is considered, in the  nonstochastic framework of \cite{nair2013nonstochastic}. The zero-error capacity of such channels is characterized in terms of {\em nonstochastic information} (Theorem \ref{thm:fci}), leading to a necessary and sufficient condition for having uniformly bounded estimation errors (Theorem \ref{thm:finitem}). In Section~\ref{sec:fscm} general finite-state channels and their topological entropy are defined. Two classes of such channels are then described, namely finite-state erasure and additive noise channels. In Sections \ref{sec:est_ec} and \ref{sec:est_anc}, we return to the problem of state estimation over channels in these classes, and derive separate necessary and sufficient conditions for bounded estimation errors. 
Finally, concluding remarks and future extensions are discussed in section~\ref{sec:conclusions}.

\subsection{Notation}
Throughout the paper, $ q $ denotes the channel input alphabet size, logarithms are in base $q$, and coding rates and channel capacities are in symbols (or \emph{packets}) per channel use. The cardinality of a set is denoted by $ |\cdot| $. Define $ V_r^n(q):=1+{n \choose 1} (q-1) + \dots +{n \choose r}(q-1)^r=\sum_{i=0}^{r}{n \choose i}(q-1)^i$, where $ r \in \{0,1,\dots,n\} $. Let $ \mathbf{B}_l $ be an $ l $-ball $ \{ v: \norm{v} \leq l\} $ centred at origin with $ \norm{\cdot} $  denoting a norm on a finite-dimensional real vector space. The signal segment $ (x(t))_{t=a}^{t=b} $ is denoted by $ x(a:b) $. Further if there is no ambiguity in time segment, it is denoted by a vector $ \mathbf{x} $. The set of natural numbers including zero is denoted by $ \mathbb{N}_0 :=\mathbb {N}\cup \{0\} $.

\section{State Estimation over Finite-state Channels}\label{sec:est}
In this section, we first briefly provide some necessary background on the {\em uncertain variable} framework \cite{nair2013nonstochastic}. Next, the finite-state uncertain channel is defined and its zero-error capacity is characterized in terms of {\em maximin information}. Using these, a necessary and sufficient condition for linear state estimation with uniformly bounded estimation errors via finite-state uncertain channels is derived, extending the memoryless channel analysis in \cite{nair2013nonstochastic}. 

\subsection{Finite-state uncertain channels}
Let $\Pi$ be a sample space. An {\em uncertain variable} $Z$ is a mapping from $ \Pi $ to a set  $\mathcal{Z}$. 
Given another uncertain variable $ W $,  the marginal, joint and conditional ranges are denoted by $
\llbracket Z\rrbracket:= \{ Z(\pi) : \pi \in \Pi \} \subseteq \mathcal{Z}, 
\llbracket Z,W\rrbracket:= \{ (Z(\pi),W(\pi)) : \pi \in \Pi \} \subseteq \llbracket Z\rrbracket\times \llbracket W\rrbracket,
\llbracket W|z\rrbracket:= \{ W(\pi) : Z(\pi)=z, \pi \in \Pi \},
\llbracket W|Z\rrbracket:= \{ \llbracket W|z\rrbracket : z \in \llbracket Z \rrbracket\}
$, respectively. 
Uncertain variable $Z$ and $W$ are said to be {\em mutually unrelated} if $\llbracket Z,W\rrbracket = \llbracket Z\rrbracket \times \llbracket W\rrbracket$, i.e., if the joint range is equal to the Cartesian product of the marginal ones.

In what follows, assume that $ \mathcal{X} $ and $ \mathcal{Y} $ are the input and output spaces of the channel, respectively. Now, a finite-state uncertain channel can be defined as follows.

\begin{defi} [Finite-state uncertain channel] \label{def:fuv}
	An uncertain channel with any admissible input sequence $x(0:t) \in \mathcal{X}^{t+1}$ and output sequence $y(0:t-1) \in \mathcal{Y}^{t}$ is said to be finite-state if for any $ t \in \mathbb{N}_0$ (setting $y(0:-1)=\emptyset$),
	\begin{align}
		\begin{split}
			\llbracket Y(t),S(t+1)|x(0:t),y(0:t-1),s&(0:t) \rrbracket=\\
			\llbracket &Y(t)|x(t),s(t) \rrbracket, 
		\end{split}\label{m1}
	\end{align}
	where $ s(t) \in \mathcal{S} $ is state of the channel at time $ t $ and $ \mathcal{S} $ is a finite set of states.
\end{defi}
In other words, given channel inputs, states and past outputs, the current output and the next state are {\em conditionally unrelated} with the past inputs, states and outputs. 

In a {\em zero-error code} $\mathcal{F}$, any two distinct codewords $ x(0:n),\, x'(0:n) \in \mathcal{F} $ can never result in the same channel output sequence, regardless of the channel noise and initial state. For a finite-state channel, the zero-error capacity is defined as follows.
\begin{defi} \label{def:c0}
	The zero-error capacity, $C_0$, is the largest block-coding rate that permits zero decoding errors, i.e., 
	\begin{align}
		C_0:=\sup_{n \in \mathbb{N}_0, \, \mathcal{F} \in {\mathscr{D}(n)}}
		\frac{\log |\mathcal{F}|}{n+1}, \label{c0def}
	\end{align}
	where $ {\mathscr{D}(n)}\subseteq \mathcal{X}^{n+1}$  is the set of all zero-error codes of length $ n+1 $. 
\end{defi}
Note that since the logarithms are in base $q$, where $q$ is the size of the input alphabet, $C_0$ takes maximum value one $q$-ary symbol per sample.
\subsection{Zero-error capacity and maximin information}
Consider the following definition.
\begin{defi}[Overlap Connectivity/Isolation] \textcolor{white}{.}
	\begin{itemize}
		\item A pair of points $ x $ and $ x' \in \llbracket X \rrbracket$ are {\em $\llbracket X|Y \rrbracket$-overlap connected} if a finite sequence of conditional ranges, $\{ \llbracket X|y_i \rrbracket \}_{i=1}^n$ exists such that $ x \in \llbracket X|y_1 \rrbracket $, $ x' \in \llbracket X|y_n \rrbracket $ and each conditional range has nonempty intersection with its predecessor, i.e., $ \llbracket X|y_i \rrbracket \cap \llbracket X|y_{i-1} \rrbracket \neq \emptyset $ for each $ i \in \{2,\dots,n\} $. Furthermore, a set $ \mathcal{B} \subset \llbracket X|Y \rrbracket $ is called $\llbracket X \rrbracket$-overlap connected if every pair of points in $ \mathcal{B} $ are overlap connected;
		\item A pair of sets $ \mathcal{B},\mathcal{C} \subset \llbracket X|Y \rrbracket $ are {\em $\llbracket X|Y \rrbracket$-overlap isolated} if there are no point in $ \mathcal{B}$ that is $\llbracket X|Y \rrbracket$-overlap connected with any point in $\mathcal{C}$;
		\item An {\em $\llbracket X|Y \rrbracket$-overlap isolated partition}, denoted by $\mathscr{P}$ (of $ \llbracket X \rrbracket $) is a  partition of $\llbracket X \rrbracket$ where every pair of distinct member-sets is $\llbracket X|Y \rrbracket$-overlap isolated.
		\item An {\em $\llbracket X|Y \rrbracket$-overlap partition} is an $\llbracket X|Y \rrbracket$-overlap isolated partition in which each member-set is $\llbracket X|Y \rrbracket$-overlap connected.
	\end{itemize}
\end{defi}
Furthermore, there exists a unique overlap partition $ \llbracket X|Y \rrbracket_* $ that satisfies $ |\mathscr{P}| \leq | \llbracket X|Y \rrbracket_*| $ for any $ \llbracket X|Y \rrbracket $, cf. \cite{nair2013nonstochastic} for a detailed treatment.

{\em Maximin information} is defined as 
\begin{align}
	I_*(X;Y):=\log \big| \llbracket X|Y \rrbracket_* \big|. \label{maximin}
\end{align}
Given an input sequence $ x(0:n) $ and the initial state $ s_0 $ which by Def. \ref{def:fuv} a finite-state uncertain channel maps to an uncertain output signal $ Y(0:n) $ so that $ \llbracket Y(0:n)|x(0:n),s_0 \rrbracket \in \mathcal{Y}^{n+1} $. Since the initial condition generally is not known to the encoder or decoder, it is considered as another source of uncertainty and thus
\begin{align}
	\llbracket X(0:n)|y(0:n) \rrbracket = \bigcup_{s_0 \in \mathcal{S}}\llbracket X(0:n)|y(0:n),s_0 \rrbracket. \label{condra}
\end{align}

We now can give the following theorem.
\begin{thm}\label{thm:fci} 
	For any finite-state uncertain channel~(Def. \ref{def:fuv}), 
	\begin{align}
		C_0= \sup_{n \in \mathbb{N}_0, X(0:n):\llbracket X(0:n) \rrbracket \in \mathcal{X}^{n+1} }\frac{1}{n+1} I_*[X(0:n);Y(0:n)]. \label{minmaxinfo}
	\end{align}
\end{thm}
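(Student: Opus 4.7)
The plan is to establish the equality by proving both inequalities separately, each time reducing to a correspondence between zero-error codes and overlap partitions. Throughout, the finite-state structure of the channel is encapsulated in Def.~\ref{def:fuv} and the initial-state union (\ref{condra}); this is what distinguishes the argument from the memoryless case treated in \cite{nair2013nonstochastic} and what I would need to handle with care.

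For the converse direction $C_0 \le \text{RHS}$, I would fix any zero-error code $\mathcal{F} \subseteq \mathcal{X}^{n+1}$ and take an uncertain variable $X(0:n)$ whose marginal range is exactly $\mathcal{F}$. The zero-error property says that no two distinct codewords share a common output for any pair of initial states; by (\ref{condra}) this translates to every conditional range $\llbracket X(0:n)|y(0:n) \rrbracket$ being either empty or a singleton in $\mathcal{F}$. Distinct singletons are automatically overlap-isolated, so the overlap partition $\llbracket X(0:n)|Y(0:n) \rrbracket_*$ has $|\mathcal{F}|$ cells and $I_*[X(0:n);Y(0:n)] = \log|\mathcal{F}|$. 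Dividing by $n+1$ and taking the supremum over $\mathcal{F}$ and $n$ closes this side.

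For the achievability direction $C_0 \ge \text{RHS}$, I would take any $X(0:n)$ with $\llbracket X(0:n)\rrbracket \subseteq \mathcal{X}^{n+1}$, form the overlap partition $\mathscr{P} = \llbracket X(0:n)|Y(0:n) \rrbracket_*$, and select one representative from each member-set. The resulting set $\mathcal{F}$ is a zero-error code: if two distinct representatives $x,x' \in \mathcal{F}$ could produce a common output $y$ under some initial states, then by (\ref{condra}) both lie in the single conditional range $\llbracket X(0:n)|y(0:n) \rrbracket$, so they are overlap-connected, contradicting their selection from distinct overlap-isolated cells of $\mathscr{P}$. Hence $|\mathcal{F}| = |\mathscr{P}| = q^{I_*[X(0:n);Y(0:n)]}$, and dividing by $n+1$ and taking suprema delivers the inequality.

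The main obstacle I anticipate is pinning down cleanly the equivalence that drives both steps: two input sequences are confusable at the output \emph{for some} pair of initial states if and only if they lie jointly in some conditional range $\llbracket X(0:n)|y(0:n) \rrbracket$. This equivalence is exactly what (\ref{condra}) encodes, and once it is in hand, the remaining overlap-partition bookkeeping mirrors the memoryless argument of \cite{nair2013nonstochastic}. Notably, no further structural assumption on the state transitions of the finite-state channel is required, which is precisely why the characterization extends from memoryless channels to the much larger finite-state class considered in Def.~\ref{def:fuv}.
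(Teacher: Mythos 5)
Your proposal is correct and follows essentially the same route as the paper's proof: one direction builds a zero-error code by selecting one representative per cell of the overlap partition (using, via \eqref{condra}, that inputs confusable under some initial states lie in a common conditional range and are hence overlap connected), while the other sets $\llbracket X(0:n)\rrbracket$ equal to a maximum-cardinality zero-error code so that every conditional range is a singleton and the overlap partition has exactly $|\mathcal{F}|$ cells. No gaps.
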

\begin{proof}
	See Appendix \ref{app:cfi}.
\end{proof}
This result shows that the largest average rate that can be transmitted via a finite-state uncertain channel with zero decoding error coincides with the largest average maximin information rate across it. We note that this generalizes the corresponding result in  \cite{nair2013nonstochastic}, which was limited to channels without memory.
\subsection{State estimation of LTI systems over uncertain channels}
Consider a linear time-invariant (LTI) dynamical system 
\begin{align}
	\begin{split}
		X(t+1)&=AX(t)+V(t) \in\mathbb{R}^{n_x},\\
		Y(t)&=CX(t)+W(t) \in\mathbb{R}^{n_y},  \label{lti1}
	\end{split}
\end{align}
where $A$ and $C$ are constant matrices, and the uncertain variables $ V(t)$  and $ W(t)$ represent process and measurement disturbances. Here, the goal is to keep the estimation error \emph{uniformly bounded}. In this paper, this means that for any noise ranges $ \llbracket V(t) \rrbracket $ and $ \llbracket W(t) \rrbracket $ with $ \sup_{t \geq 0} \llbracket \norm{V(t)} \rrbracket < \infty$ and  $ \sup_{t \geq 0} \llbracket \norm{W(t)} \rrbracket < \infty$, $ \exists l >0 $ such that for any initial condition range $ \llbracket X(0) \rrbracket  \subseteq \mathbf{B}_l $, $ \sup \|\hat{X}(t)-X(t)\|$ is bounded. Here, $ \hat{X}(t)$ denotes the state estimate with $ \hat{X}(0)=0$  and the supremum is over all $t\geq 0$ and all valid noise and initial state realizations. The following assumptions are made:
\begin{itemize}
	\item[A1:] The pair $ (C,A) $ is observable;
	\item[A2:] There exist uniform bounds on the initial condition $X(0)$ and the noises $ V(t)$, $W(t)$;
	\item[A3:] The initial state $X(0)$, the noise signals $V$, $W$, and the channel error patterns are {\em mutually unrelated};
	\item[A4:]  The zero-noise sequence pair $ (V,W)=(0,0) $ is valid;
	\item[A5:] $A$ has one or more eigenvalues $\lambda_i$ with magnitude greater than one.
\end{itemize}

\tikzstyle{block1} = [draw, fill=celadon, rectangle, 
minimum height=3em, minimum width=5em]
\tikzstyle{block2} = [draw, fill=cblue, rectangle, 
minimum height=3em, minimum width=5em]
\tikzstyle{input} = [coordinate]
\tikzstyle{output} = [coordinate]
\tikzstyle{pinstyle} = [pin edge={to-,thin,black}]
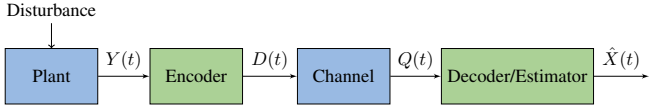
\begin{figure}[t]
	\centering
	\scalebox{.7}{\begin{tikzpicture}[auto, node distance=2cm,>=latex']
			\node [block2, pin={[pinstyle]above: Disturbance }] (plant) { Plant };
			\node [block1, right of=plant, node distance=2.75cm] (encoder) { Encoder };
			\node [block2, right of=encoder, node distance=2.8cm] (channel) {Channel };
			\node [block1, right of=channel, node distance=3.3cm] (decoder) { Decoder/Estimator} ;
			\node [output, right of =decoder, node distance=2.5cm] (output){};
			
			\draw [draw,->] (plant) -- node {$ Y(t) $} (encoder);
			\draw [->] (encoder) -- node {$ D(t) $} (channel);
			\draw [->] (channel) -- node {$ Q(t) $} (decoder);
			\draw [->] (decoder) -- node  {$ \hat{X}(t) $}(output);
	\end{tikzpicture}}
	\caption{\small{State estimation via a communication channel.}}
	\label{est}
\end{figure}	
Another way of formulating the system dynamics with bounded noise is as a {\em difference inclusion}; see, e.g. \cite{kellett2005robustness}.

The {\em topological entropy} of the linear system is given by 
\[
h_{lin}= \sum_{|\lambda_i|\geq 1} \log|\lambda_i|,
\]
and can be viewed as the rate at which it generates uncertainty. We have the following theorem.
\begin{thm} \label{thm:finitem}
	Consider an LTI system \eqref{lti1} satisfying conditions A1--A5. Assume that outputs are coded and estimated via a finite-state uncertain channel~(Def. \ref{def:fuv}) having zero-error capacity $ C_0 > 0$.  If there exists a coder-estimator that yields uniformly bounded estimation errors then
	\begin{align}
		C_0 \geq h_{lin}. \label{estdis}
	\end{align}
	Conversely, if \eqref{estdis} is satisfied as a strict inequality then there exists a coder-estimator that achieves uniformly bounded estimation errors.
\end{thm}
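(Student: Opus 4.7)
The plan is to prove the theorem in two halves. For necessity (uniformly bounded errors imply $C_0 \geq h_{lin}$), I will use a volume-packing argument on initial conditions in the unstable subspace, combined with the observation that distinguishable initial states force a zero-error code on the induced channel inputs. For sufficiency ($C_0 > h_{lin}$ implies the existence of a bounded-error coder-estimator), I will construct a block coder-estimator pair that transmits a quantized cell index via a zero-error code of length $n+1$ drawn from the supremum in \eqref{c0def}, balancing the channel rate $C_0$ against the volume-growth rate $h_{lin}$ of the unstable dynamics.

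For the necessity direction, I fix any admissible coder-estimator and suppose $\|\hat{X}(t)-X(t)\|\leq l$ uniformly, for some $l>0$. By A4 I may take $V\equiv W\equiv 0$ and consider initial conditions in $\mathbf{B}_{r_0}$ for some fixed $r_0$. For two noise-free trajectories starting at $X(0)\neq X'(0)$, if the channel can produce the same output sequence $Q(0:n-1)$ under some admissible channel state/noise realizations, then the decoder outputs a single $\hat{X}(n)$ that must satisfy both $\|\hat{X}(n)-A^nX(0)\|\leq l$ and $\|\hat{X}(n)-A^nX'(0)\|\leq l$, forcing $\|A^n(X(0)-X'(0))\|\leq 2l$ by the triangle inequality. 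Contrapositively, initial conditions with $\|A^n(X(0)-X'(0))\|>2l$ must induce channel input sequences $D(0:n-1)$ that form a zero-error code. A standard packing of $\mathbf{B}_{r_0}$ projected onto the unstable eigenspace of $A$ yields $q^{n h_{lin}-o(n)}$ such pairwise-separated starting points, so Definition~\ref{def:c0} gives $C_0 \geq h_{lin}-o(1)$, and $n\to\infty$ completes the bound.

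For sufficiency, choose $\delta\in(0,C_0-h_{lin})$ and, by Definition~\ref{def:c0}, a block length $n+1$ and zero-error code $\mathcal{F}$ with $|\mathcal{F}|\geq q^{(n+1)(h_{lin}+\delta)}$. The coder-estimator operates block-by-block: at each block boundary $t_k=k(n+1)$, both ends maintain a bounded set $\Omega_k\subset\mathbb{R}^{n_x}$ guaranteed to contain $X(t_k)$. The encoder, which by A1 runs a Luenberger-type observer on $Y$ with uniformly bounded innovation, locates the cell of a $|\mathcal{F}|$-element quantization of $\Omega_k$ containing its own state estimate and transmits the corresponding codeword; zero-error decoding recovers that cell exactly. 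Propagating the cell through $A^{n+1}$ and inflating by the bounded noise ranges yields $\Omega_{k+1}$ whose diameter contracts by at least a factor $q^{-\delta(n+1)}$ on the unstable subspace (after absorbing the expansion $q^{h_{lin}(n+1)}$) and remains bounded on the stable one. A geometric-series estimate then caps $\|\hat{X}(t)-X(t)\|$ uniformly over $t$.

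The chief difficulty is the asymmetry between encoder and decoder: the encoder observes only $Y(t)=CX(t)+W(t)$, while both ends must agree on the same $\Omega_k$ for zero-error decoding of the cell index to be meaningful. I would handle this via a standard symmetric coder-estimator design in which both ends run identical observer dynamics driven only by quantities exchanged through the channel, decomposing the state into a deterministically predictable component and a bounded innovation that is absorbed into the noise budget of $\Omega_k$. A related subtlety on the necessity side is that the channel's initial state $s_0$ is unknown to both parties; by \eqref{condra}, this only enlarges each conditional range $\llbracket D(0:n-1)|Q(0:n-1)\rrbracket$, which \emph{strengthens} rather than weakens the overlap-isolation requirement, and therefore the necessity bound.
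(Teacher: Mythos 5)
Your necessity argument is sound and follows the same packing idea as the paper's, though you route it more directly: the paper first partitions $\llbracket X(0)\rrbracket$ into separated hypercubes (using the Yamamoto identity to control $\sigma_{\min}(A^t_j)$ on each real Jordan block), shows this partition is $\llbracket X(0)|Q(0:\nu-1)\rrbracket$-overlap isolated, lower-bounds the maximin information $I_*[X(0);Q(0:\nu-1)]$, and only then invokes Theorem~\ref{thm:fci} to convert that into a bound on $C_0$. You instead extract a zero-error code directly from the pairwise non-confusability of the induced input sequences and apply Definition~\ref{def:c0}; this is more elementary and equally valid, since a zero-error code only needs pairwise non-confusability over all noise realizations and initial channel states, which is exactly what the uniform error bound forces. (Your closing remark that the unknown $s_0$ ``strengthens the overlap-isolation requirement'' has the logic slightly backwards --- enlarging the conditional ranges makes isolation \emph{harder} to establish, not the bound stronger --- but the argument survives because the bounded-error hypothesis already quantifies over all initial channel states, matching the quantifier in the definition of a zero-error code.)

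The genuine gap is in the sufficiency half. You reuse the length-$(n+1)$ zero-error code $\mathcal{F}$ in consecutive blocks, but for a channel with an internal state this needs justification: at the start of the $k$-th block the channel sits in some state $s(k(n+1))$ that depends on the noise realization of all previous blocks and is unknown to both ends, so one must check that $\mathcal{F}$ is still uniquely decodable from the $k$-th block's outputs. This is precisely the point that distinguishes the finite-state case from the memoryless one, and it is the only channel-specific step in the paper's sufficiency proof: since a zero-error code is by definition decodable for \emph{every} initial state, and $\llbracket S(k(n+1))\rrbracket \subseteq \mathcal{S} = \llbracket S(0)\rrbracket$, the output ranges in later blocks are subsets of those in the first block and non-confusability is preserved. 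Without this observation your block scheme is only established for memoryless channels. The remainder of your construction (quantizing a commonly maintained uncertainty set $\Omega_k$, contracting it by a factor $q^{-\delta(n+1)}$ per block, and the encoder/decoder symmetry) matches the paper, which delegates those standard steps to the data-rate theorem for errorless channels rather than re-deriving them.
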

\begin{proof}
	See Appendix \ref{app:e}.
\end{proof}
\begin{rem}
	Theorem \ref{thm:finitem} extends the results of \cite{nair2013nonstochastic} for memoryless channels to finite-state channels.
	It states that uniformly reliable estimation is possible if and only if the zero-error capacity of the channel exceeds the rate at which the system generates uncertainty. The necessity argument of the proof relies on Theorem 1, which proves that any overlap isolated partition of $ X(0) $ cannot exceed the number of messages transmittable without error, and this does not depend on the channel state (memory) realization. Furthermore, sufficiency follows from showing the existence of a zero-error code with a rate arbitrarily close to  $ C_0 $. By choosing a large enough blocklength, the dependence of the code on the channel memory becomes negligible. 
\end{rem}
In the sequel, we consider finite-state channel models and give some preliminaries that we use later on.

\section{Finite-state Channel Models and Topological Entropy} \label{sec:fscm}
Consider a channel with output $y(t) \in \mathcal{Y} $ at time $ t $ which is a function of current input $ x(t) \in \mathcal{X}$ and correlated noise sequence $(v(t))_{t \in \mathbb{N}_0}$ governed by a finite-state machine or state transition graph. The directed graph describing the finite-state machine is defined as $\mathscr{G}=(\mathcal{S},\mathcal{E})$, where $\mathcal{S}=\{s_1,\dots,s_m\}$ denotes the vertex set ({\em states} of the channel) and $\mathcal{E}\subseteq\mathcal{S}\times \mathcal{S}$ denotes the edge set with $(s,s')\in\mathcal{E}$ capturing the possibility of state transition between $s,s'\in\mathcal{S}$. 

Let $ s_0 \in\mathcal{S}$ and $ x(0:n-1) $ denote the starting state and input sequence, respectively. Define the {\em adjacency matrix}  $ \mathcal{A} \in \{0,1\}^{|\mathcal{S}|\times |\mathcal{S}|}$ such that the $(s,s')$th entry $\mathcal{A}_{s,s'}$ equals 1 if the state of the channel can transition from $s$ to $s'$, and equals 0 otherwise. As an example, Fig. \ref{fig:mealy} shows a finite-state machine corresponding to a (noise) process $(v(t))_{t \in \mathbb{N}_0}$ at which no consecutive 1's can happen. Here,
\[\mathcal{A} =\begin{bmatrix}
	1 & 1\\
	1 & 0
\end{bmatrix}.\]
In symbolic dynamics, topological entropy  is defined as the asymptotic growth rate of the number of possible  state sequences. For a finite-state machine with an irreducible transition matrix\footnote{A matrix is irreducible if and only if its associated graph is strongly-connected \cite[Ch. 8]{brualdi2008combinatorial}.} $\mathcal{A}$, the topological entropy (in base $q$) is known to coincide with  $\log\lambda$ , where $\lambda$ is the {\em Perron} eigenvalue of $\mathcal{A}$~\cite{lind1995introduction}. This is essentially due to the fact that the number of the paths from state $ s_i $ to $ s_j $ in $n$ steps is the $ (i,j) $-th element of $ \mathcal{A}^n$, which grows proportionally with $\lambda^n$ for large $n$.

For a given initial state $s_0\in\mathcal{S}$, define the binary indicator vector $ \mathbf{z}_0\in\{0,1\}^{|\mathcal{S}|} $ consisting of all zeros except for a 1 in the position corresponding to initial state $s_0$; e.g. in Fig. \ref{fig:mealy}, if starting from state $ s_1 $, then $\mathbf{z}_0 =[1,0]^\top$. Let $ \mathscr{Y}(s_0,x(0:n-1)) $ denote the set of all output sequences that can occur by  transmitting the input sequence $ x(0:n-1) $ with initial channel state $ s_0 $. Observe that since each output of the channel (which can be a correctly received symbol or with error) triggers a different state transition, each sequence of state transitions has a one-to-one correspondence to the output sequence, given the input sequence. 

Based on these observations and Perron-Frobenius Theorem \cite[Thm. 4.2.3]{lind1995introduction}, we have the following result.

\begin{prop}[\cite{saberi2020explicit}]\label{thm:out}
	For a finite-state channel with irreducible adjacency matrix, there exist positive constants $ \alpha $ and  $ \beta $ such that, for any input sequence $ \mathbf{x}=x(0:n-1) \in \mathcal{X}^n $,
	\begin{align}
		\alpha \lambda^n \leq |\mathscr{Y}(s_0,\mathbf{x}) | \leq  \beta  \lambda^n , \label{outlam}
	\end{align}
	where $\lambda $ is the Perron eigenvalue of the adjacency matrix.
\end{prop}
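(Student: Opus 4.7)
My plan is to reduce the counting of output sequences to counting paths of length $n$ in the state transition graph, and then apply Perron--Frobenius theory to obtain the claimed two-sided bound uniformly in the input.

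First, I would invoke the bijection noted just before the statement: once the input $\mathbf{x}$ and the starting state $s_0$ are fixed, the map from an admissible output sequence $\mathbf{y}\in\mathscr{Y}(s_0,\mathbf{x})$ to the induced sequence of channel states $s(1:n)$ is one-to-one. Hence $|\mathscr{Y}(s_0,\mathbf{x})|$ equals the number of directed walks of length $n$ in $\mathscr{G}$ starting at $s_0$. This number is algebraic: it equals $\mathbf{z}_0^\top \mathcal{A}^n \mathbf{1}$, where $\mathbf{1}\in\{1\}^{|\mathcal{S}|}$ is the all-ones vector and $\mathbf{z}_0$ is the indicator of $s_0$. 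In particular this count does not depend on $\mathbf{x}$, so any bound derived for it automatically holds uniformly over all input sequences.

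Next I would bring in Perron--Frobenius. Because $\mathcal{A}$ is nonnegative and irreducible, it admits a strictly positive right eigenvector $\mathbf{u}>0$ associated with the Perron eigenvalue $\lambda$, so that $\mathcal{A}^n \mathbf{u} = \lambda^n \mathbf{u}$. Let
\[
c_1 := \min_{s\in\mathcal{S}} u_s > 0, \qquad c_2 := \max_{s\in\mathcal{S}} u_s < \infty,
\]
so that the componentwise sandwich $\frac{1}{c_2}\,\mathbf{u}\leq \mathbf{1}\leq \frac{1}{c_1}\,\mathbf{u}$ holds. Multiplying on the left by the nonnegative vector $\mathbf{z}_0^\top \mathcal{A}^n$ preserves the inequalities, giving
\[
\frac{u_{s_0}}{c_2}\,\lambda^n \;\leq\; \mathbf{z}_0^\top \mathcal{A}^n \mathbf{1} \;\leq\; \frac{u_{s_0}}{c_1}\,\lambda^n.
\]
Finally, since $|\mathcal{S}|$ is finite, I would take $\alpha := \min_{s_0} u_{s_0}/c_2 > 0$ and $\beta := \max_{s_0} u_{s_0}/c_1 < \infty$, which makes the bound hold uniformly in the initial state as well. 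Combined with the first step, this delivers $\alpha\lambda^n \leq |\mathscr{Y}(s_0,\mathbf{x})|\leq \beta\lambda^n$ for every admissible $\mathbf{x}$ and every $s_0$.

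The only nontrivial point is that I am deliberately avoiding any use of a limit of $\mathcal{A}^n/\lambda^n$, since such a limit need not exist when $\mathcal{A}$ is irreducible but periodic. The Perron eigenvector sandwich bypasses this: it gives exact two-sided bounds rather than an asymptotic equivalence, which is exactly what the proposition requires. I expect this to be the main conceptual point; the rest is bookkeeping.
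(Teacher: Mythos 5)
Your proof is correct and takes essentially the same route the paper relies on: the proposition is imported from \cite{saberi2020explicit} and justified in the text by the same reduction to counting walks via powers of the adjacency matrix together with the Perron--Frobenius theorem. Your positive-eigenvector sandwich is the standard way to make the paper's informal ``grows proportionally with $\lambda^n$'' rigorous, and your remark that this avoids relying on convergence of $\mathcal{A}^n/\lambda^n$ (which can fail for periodic irreducible matrices) is a valid point of care rather than a deviation.
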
	
In other words, Proposition \ref{thm:out} shows that the evolution of output set, $ \mathscr{Y}(s_0,\mathbf{x}) $ size, starting from any initial state, is controlled by the maximum eigenvalue of the adjacency matrix. We denote the topological entropy of the channel (which is logarithm of the maximum eigenvalue) with $ h_{ch}:=\log \lambda $.

We study channels with two types of errors; erasure and additive noise which are generalizations of binary erasure and symmetric channels, respectively. Based on the type of error, we separate channel models into erasure and additive noise types which are defined as follows.
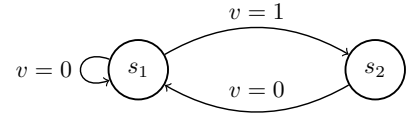
\begin{figure}[t]
	\centering
	\scalebox{0.9}{\begin{tikzpicture}
			[->, auto, semithick, node distance=3.5cm]
			\tikzstyle{every state}=[fill=white,draw=black,thick,text=black,scale=1]
			\node[state]   (S1)       {$ s_1 $};
			\node[state]   (S2)[right of=S1]   {$ s_2 $};
			\path
			(S1) edge[out=160, in=200,looseness=5] node [left] {$ v=0 $}(S1)
			edge[bend left] node [above] {$ v=1 $}  (S2)
			(S2) edge[bend left]  node [above=0.1] {$ v=0 $}  (S1);
	\end{tikzpicture}}
	\caption{\small{A finite-state machine describing the transition of a noise process in a channel at which no consecutive 1's can happen.}}
	\label{fig:mealy}
\end{figure}
\begin{defi}[Finite-state erasure channels] \label{def:fsechannel}
	A channel is called {\em finite-state erasure} if the output $y(t) \in \mathcal{Y} := \mathcal{X} \cup \{*\}$ at time $t \in \mathbb{N}_0$ is obtained by
	\begin{align*}
		y(t) = \phi(x(t),v(t)):= \begin{cases}
			x(t), & \text{ if } v(t)=0, \\
			*, & \text{ otherwise},
		\end{cases},
	\end{align*}\\		
	where $*$ denotes erasure in the channel and the noise $ v(t) \in \{0,1\}$ is governed by a finite-state machine such that each outgoing edge from a state $ s(t) $ corresponds to different values $ v(t) $ of the noise starting. Thus, there are at most two outgoing edges from each state. Moreover, it is assumed the graph is strongly connected. 
\end{defi}
\begin{rem}
	The strong connectivity of the underlying graph is needed to be able to use the Perron-Frobenius Theorem. In a stochastic setting, strong connectivity is a common assumption leading to recurrence, see, e.g. \cite{sabag2016single}.
\end{rem}

The above finite-state machine is similar to the Mealy machine, where each edge is denoted by input and output pair (see, e.g. \cite[Ch. 2 ]{holcombe1982algebraic}). However, in this model, the input (to the communication channel) has no effect on the transitions, so it is ignored in the representation of Fig. \ref{fig:mealy}, and the noise (as the output of the finite-state machine) is shown only.\footnote{Note that each state may have different possible noise (output) set, e.g. in Fig. \ref{fig:mealy}, there is one outgoing edge from state $ s_2 $ labelled with $ v=0 $ which means no error occurs visiting this state. Also, $ s_1 $ has two outgoing edges, which mean visiting this state may cause an error or error-free transmission over the channel.} 

In finite-state erasure channels, the erasure appears as an extra symbol in the output. Therefore, the receiver knows the locations of the erased symbols; however, this information is not available to the sender.

\begin{defi}[Finite-state additive noise channels]
	\label{def:fanchannel}	
	A channel is called {\em finite-state additive noise} if the output at time $t$ is obtained by
	\begin{align}
		y(t) = x(t) \oplus v(t) \in \mathcal{X}, \, t \in \mathbb{N}_0, \label{fan}
	\end{align}
	where the correlated additive noise $ v(t) $  is governed by a finite-state machine such that each outgoing edge from a state $ s(t) $ corresponds to different values $ v(t) $ of the noise. Thus, there are at most $ |\mathcal{X}| $ outgoing edges from each state.
	We assume the state transition diagram of the channel is strongly connected.
\end{defi}
The finite-state channels in Defs.~\ref{def:fsechannel} and~\ref{def:fanchannel} generalize their stochastic counterparts, the binary erasure and symmetric channels, respectively. Here, instead of having a probability of error for every single use of the channel (memoryless channel), the errors may occur based on a finite-state machine. 

The state process described by the finite state machine is an uncertain Markov chain \cite{nair2013nonstochastic}, which, in a stochastic setting, corresponds to a topological Markov chain \cite[Ch.2]{renyi1970foundations}. This is a more general property than Markovianity. It allows the next state probability to depend not just on the current state, but also on previous states. Our results remain valid in these situations, since the zero-error capacity is not a function of the transition probabilities, but only of the topological structure of the state machine.

In the remainder of this section, some examples of finite-state channels with no stochastic assumptions are given which are studied in the recent literature. Such models usually arise in worst-case approaches where no probability assumptions are made, and some bounds on errors are considered \cite{badr2017layered,wang2018end}. Here, an error can refer to both erasure (in erasure channels) or error (in additive noise channels). We show that these channels can be modelled as a finite-state machine. 
\begin{figure}[t]
	\includegraphics[width=60mm]{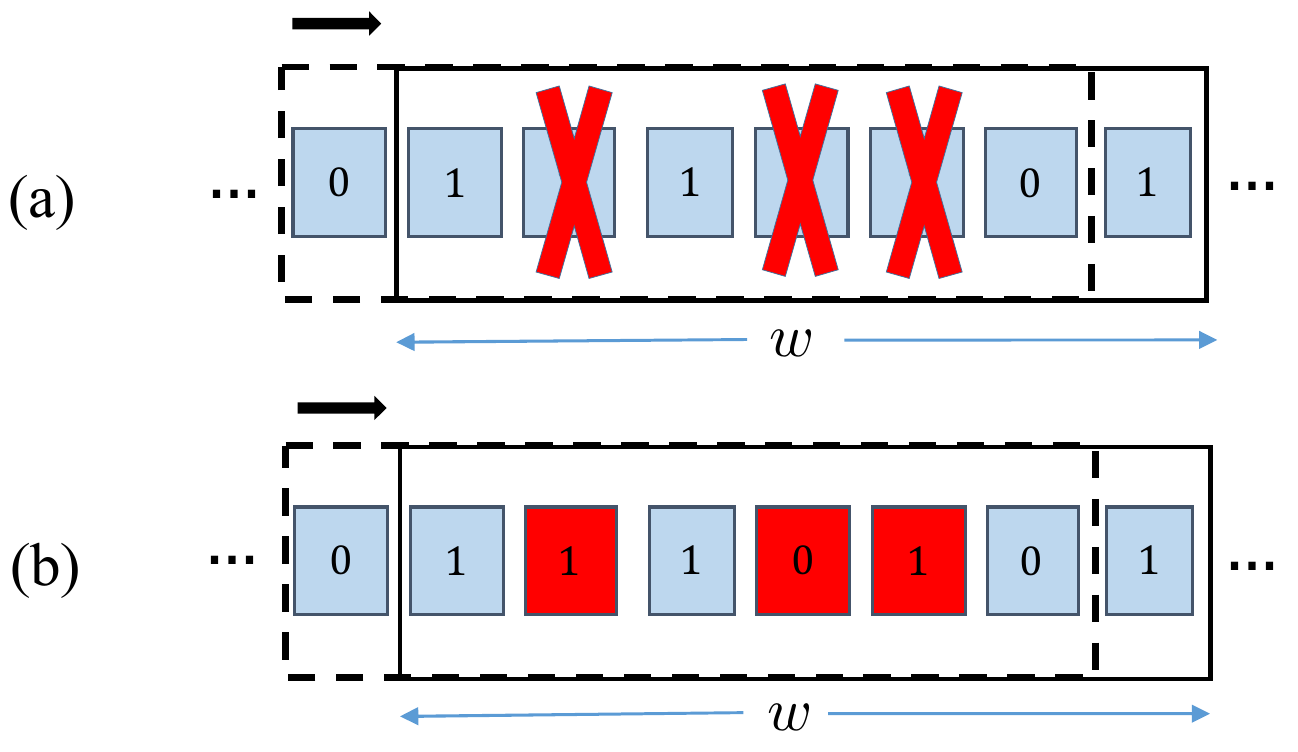}
	\centering
	\caption{\small{Bounded error structure for binary $(w=7,d=3)$ sliding-window erasure~(a) and sliding-window symmetric~(b) channels. }}
	\label{fig:slidep}
\end{figure}	
\subsection{Sliding-window channel models} \label{swcm}
A channel is called $(w,d)$ {\em sliding-window} if the number of errors that may occur over a sliding-window of length $w$ is upper bounded by a non-negative integer $ d $. The maximum error rate is then $d/w$. Two channel models for erasure and additive noise cases are considered and are referred to as {\em sliding-window erasure} and {\em sliding-window symmetric} channels. In the symmetric channel, the input symbol may get mapped to any symbol in the output alphabet and an error occurs when the received symbol is different than the input one.\footnote{The symmetric structure can be seen as a special case of the additive noise channel model where the symbols may get mapped to some subsets of the output alphabet.}  Figure~\ref{fig:slidep} illustrates simple sliding-window erasure and symmetric channels with binary input alphabets, for the case of $ w=7$ and $ d =3$. 
The channel output depends on the errors in the previous window; thus these channels have memory.
Equivalently, they can be represented as finite-state channels.

For a sliding-window erasure channel, the current state of the channel is naturally represented as a $w$-bit word, with $*$ and $\circ$, respectively, indicating the locations of erroneous and error-free transmissions in the previous window. 
Note that this state depends only on the locations of the errors in the past $w$ symbols, not on the transmitted symbols. Moreover, all the possible binary locations of erasures in a sequence of length $w$ gives the number of states which is selecting $i=0, 1,\ldots , d$ erasure locations, i.e., $|\mathcal{S}|=V_d^w(2)=1+{w \choose 1} + \dots +{w \choose d}$.

Due to restrictions on the number of errors in each sliding window, not all states can be visited from any starting state in a single step. For example,  Fig.~\ref{fig:transition} shows the state transition diagram for the $(w=3,d=1)$ sliding-window erasure channel. Here, instead of labelling the edges, red-dashed edges correspond to erasure ($ v(t)=1 $) and solid edges are for error-free ($ v(t)=0 $) transmission over the channel.
\begin{figure}[t]
	\centering
	\begin{tikzpicture}[->, >=stealth', auto, semithick, node distance=1.7cm]
		\tikzstyle{every state}=[fill=white,draw=black,thick,text=black,scale=0.9]
		\node[state]   (S1)       {$  s_1 $};
		\node[state]    (S2)[left of=S1]   {$ s_2 $};
		\node[state]    (S3)[below of=S2]   {$ s_3 $};
		\node[state]    (S4)[right of=S3]   {$ s_4 $};
		\path
		(S1) edge[out=30,in=70,looseness=8] (S1)
		edge[dashed, red,bend right]   (S2)
		(S2) edge[bend right]   (S3)
		(S3) edge[bend right]   (S4)
		(S4) edge[bend right]   (S1)
		(S4) edge[dashed, red]   (S2);
	\end{tikzpicture}
	\scalebox{0.8}{\begin{tabular}{c c} 
			\hline
			State & Binary representation \\  
			$ s_1 $ & $ \circ \circ \circ $  \\ 
			
			$ s_2 $ & $ \circ \circ  * $  \\
			
			$ s_3 $ & $ \circ  * \circ $  \\
			
			$ s_4 $ & $ * \circ \circ $   \\ \hline 
	\end{tabular}}
	\caption{\small{States transition graph governing the noise in a $ (w =3,d = 1) $ sliding-window erasure channel.}}
	\label{fig:transition}
\end{figure}
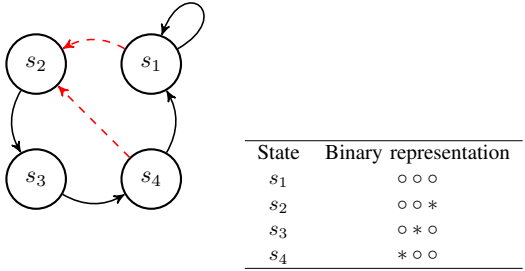

For an $(w,d)$ sliding-window symmetric channel, we define the state as a $q$-ary word of length $w$, in which $\circ$ indicates no error and $\circ',\circ'',\ldots$ label
the $(q-1)$ erroneous symbol swaps that can occur. 
This is not the most compact state representation; however, for a given input sequence it  yields a one-to-one relationship between the state and  output sequences, which will be useful in deriving lower bounds on zero-error capacity. The set $\mathcal{S}$ of possible states can be shown to be of size $V^w_d(q)$
(by selecting $i=0, 1,\ldots , d$ error locations, each with $q-1$ distinct possibilities, in a window of length $w$).

To illustrate this, see Fig.~\ref{fig:trans2} for the possible states and transitions for a $(w=3,d=1)$ sliding-window symmetric channel with alphabet size $q=3$.
For example, the state $s_1$ is error-free and can have $q=3$ transitions: (\textit{i}) there is no error, resulting in no change in the state of the channel, (\textit{ii}) there is an error with output $(x(i)+1)\mod 3$, resulting in transition to state $s_2$, and (\textit{iii}) there is an error with output $(x(i)+2)\mod 3$, resulting in transition to state $s_5$. Note that both $s_2$ and $s_5$ represent only one error; hence, in the case of the sliding-window erasure channel, they could be combined into one state.
\subsection{Bursty errors}
In this channel, errors happen up to $d$ consecutive symbols in a sliding window of length $w$. An example of this channel which has up to 2 errors in a sliding window of size 3 is shown in Fig. \ref{fig:models} (a). The states of this channel can be defined similarly to Section \ref{swcm}.
\subsection{Guard space between errors}
In this channel, after each bounded number of errors, there is a minimum number of error-free transmissions. Therefore, a guard space between errors is present. Figure \ref{fig:models} (b) shows an example of this channel in which between any errors (of maximum length 2) a guard space of length 3 exists.
\begin{figure}[t]
	\centering
	\begin{tikzpicture}[->, >=stealth', auto, semithick, node distance=2cm]
		\tikzstyle{every state}=[fill=white,draw=black,thick,text=black,scale=.75]
		\node[state]   (S1)       {$  s_1 $};
		\node[state]    (S2)[below left of=S1]   {$ s_2 $};
		\node[state]    (S3)[left of=S1]   {$ s_3 $};
		\node[state]    (S4)[below of=S2]   {$ s_4 $};
		\node[state]    (S5)[below right of=S1]   {$ s_5 $};
		\node[state]    (S6)[right of=S1]   {$ s_6 $};
		\node[state]    (S7)[below of=S5]   {$ s_7 $};
		\path
		(S1) edge[out=70,in=110,looseness=8] (S1)
		edge[bend right,dashed, red]  node [above] {$ 1 $} (S2)
		edge[bend left,dashed, red]  node [above] {$ 2 $}  (S5)
		(S2) edge[]   (S3)
		(S3) edge[bend right]   (S4)
		(S4) edge[]   (S1)
		(S4) edge[dashed, red] node [left] {$ 1 $}  (S2)
		(S4) edge[dashed, red]  node [below left] {$ 2 \quad \; $} (S5)
		(S7) edge[dashed, red]  node [below right]  {$ \quad \; 1 $} (S2)
		(S5) edge[]   (S6)
		(S6) edge[bend left]   (S7)
		(S7) edge[]   (S1)
		(S7) edge[dashed, red]  node [right] {$ 2 $} (S5);
	\end{tikzpicture}\qquad
	\scalebox{.85}{\begin{tabular}{c c}
			State & Representation\\  
			\hline
			$ s_1 $ & $ \circ \circ \circ $  \\ 
			$ s_2 $ & $ \circ \circ  \circ' $  \\
			$ s_3 $ & $ \circ \circ'\circ $  \\
			$ s_4 $ & $ \circ'\circ \circ $   \\
			$ s_5 $ & $ \circ \circ \circ'' $  \\
			$ s_6 $ & $ \circ\circ'' \circ $  \\
			$ s_7 $ & $\circ'' \circ \circ $   \\\hline 
	\end{tabular}}
	\caption{\small{States and transition graph governing the noise in a $(3,1)$ sliding-window symmetric channel with $ q= 3$. Here $ \circ' $ and $ \circ'' $ denote possible swaps with respect to sent symbol. Each labeled transition edge corresponds to the non-zero noise ($ v(t)\neq 0 $).}}
	\label{fig:trans2}
\end{figure}
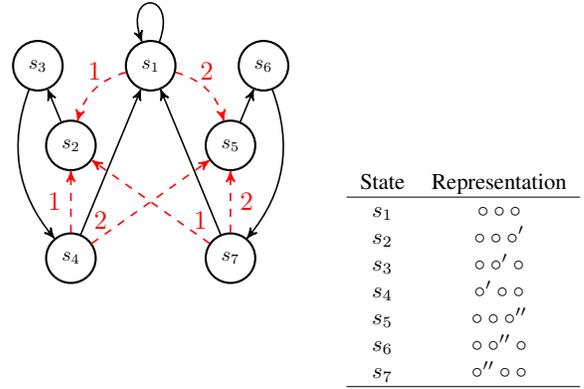
\subsection{Gilbert-Elliot like channels}
The celebrated Gilbert-Elliot channel is used to describe bursty errors and in its stochastic form is modelled by two states in transmission. In the {\em good} state, the probability of error is low, but in the {\em bad} state the probability of error is high. In other terms, the Gilbert-Elliot channel has two main operation states, wherein the good state errors in the channel are sparse, while in the bad state, it behaves as a burst-error channel. Recently, in \cite{badr2017layered}, Badr \emph{et. al.} have introduced a worst-case model which is an approximation of stochastic Gilbert-Elliot channels. In their proposed model in any sliding window of length $ W $, the channel can have two patterns: (i) a single burst or consecutive erasure of maximum length $ B $, or (ii) a maximum of $ N $ erasures in arbitrary positions within the window. They use the notation $ C(N, B, W ) $ for such a channel. Again, this model can be described with the introduced finite-state machine. Figure \ref{fig:models} (c) shows the state transition of a $ C(1, 3, 4 ) $ channel.		
\section{State Estimation over Finite-state Erasure Channels} \label{sec:est_ec}
In this section, we investigate bounded state estimation of an LTI system over the finite-state erasure channels introduced in the previous section. It is shown that the topological properties of the channel state transition diagram are linked with the topological entropy of the LTI system in order to estimate the states with bounded error. Note that all topological entropies and capacities are measured in base $ q $, i.e. in units of $q$-ary symbols per sample.

We give the following theorem.

\begin{prop} 
	\label{prop:nseest} 
	Consider an LTI system in \eqref{lti1} satisfying conditions A1--A5. Assume that the measurements are coded and transmitted via a finite-state erasure channel~(Def. \ref{def:fsechannel}) with topological entropy $h_{ch}$ and maximal ratio $ \tau $. Then uniformly bounded estimation errors can be achieved if
	\begin{align}
	h_{lin} + h_{ch}+\tau<1 \label{stabi3}.
	\end{align}
	Conversely, there exist sequences of process and measurement noise for which the estimation error grows unbounded if
	\begin{align}
	h_{lin}+\tau>1. 
	\end{align}
\end{prop}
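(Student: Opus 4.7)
The plan is to sandwich the zero-error capacity $C_0$ of the finite-state erasure channel by $1-\tau-h_{ch} \le C_0 \le 1-\tau$, and then invoke Theorem~\ref{thm:finitem}, which ties uniformly bounded estimation to the strict inequality $C_0 > h_{lin}$. Sufficiency of the estimation condition will follow from the lower bound on $C_0$, and the converse from the upper bound.

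For the converse, I would fix any zero-error block code $\mathcal F\subseteq\mathcal X^{n}$ and let $e^\ast$ be a valid erasure pattern of maximal size, so that $|e^\ast|\ge \tau n - o(n)$ by the definition of the maximal ratio $\tau$. For any two distinct codewords $x,x'\in\mathcal F$ their disagreement set cannot lie inside $e^\ast$; otherwise the channel outputs under erasure pattern $e^\ast$ would coincide (both erased on $e^\ast$, both matching on its complement), contradicting the zero-error property. Thus the restrictions of codewords to the $n-|e^\ast|$ complementary positions are pairwise distinct, giving $|\mathcal F|\le q^{(1-\tau)n+o(n)}$ and hence $C_0\le 1-\tau$. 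When $h_{lin}+\tau>1$ this forces $C_0<h_{lin}$, and the contrapositive of the necessity part of Theorem~\ref{thm:finitem} yields noise sequences producing unbounded estimation errors under any coder-estimator.

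For the forward direction I would use a Gilbert--Varshamov--style greedy construction guided by Proposition~\ref{thm:out}. The total number of valid length-$n$ erasure sequences, summed over initial states, is at most $m\beta\lambda^{n} = m\beta q^{nh_{ch}}$ where $m=|\mathcal S|$. Each pattern $e$ declares at most $q^{|e|}\le q^{\tau n}$ sequences confusable with a given codeword (those agreeing with it off $e$, with arbitrary values on $e$), so the confusion ball around any $x\in\mathcal X^{n}$ has cardinality at most $m\beta q^{n(h_{ch}+\tau)}$. Iteratively choosing a new codeword from $\mathcal X^{n}$ and deleting its confusion ball produces a zero-error code of size at least $q^{n}/\bigl(m\beta q^{n(h_{ch}+\tau)}\bigr)$, whose rate approaches $1-h_{ch}-\tau$ as $n\to\infty$. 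When $h_{lin}+h_{ch}+\tau<1$ this opens a strict gap $C_0>h_{lin}$, and the sufficiency half of Theorem~\ref{thm:finitem} supplies a coder-estimator with uniformly bounded errors. The main obstacle I foresee is executing this greedy construction uniformly across initial channel states: Proposition~\ref{thm:out} controls $|\mathscr Y(s_0,\mathbf x)|$ for a fixed $s_0$, so one must union over $\mathcal S$ (costing only the constant $m$, absorbed asymptotically) and ensure that pairwise distinguishability is preserved no matter which state the channel actually starts from. It is also worth noting that the $h_{ch}$ gap between the two conditions reflects the suboptimality of a non-algebraic greedy construction and is not obviously closable within this approach.
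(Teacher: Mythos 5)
Your overall skeleton is exactly the paper's: Proposition~\ref{prop:nseest} is obtained by sandwiching $C_0$ as $1-\tau-h_{ch}\le C_0\le 1-\tau$ (Theorem~\ref{thm:ec}) and then invoking Theorem~\ref{thm:finitem}. Your forward direction is essentially the paper's proof in Appendix~\ref{app:nselw} verbatim in spirit: the confusion set of a codeword is bounded by (number of valid noise walks, controlled by Proposition~\ref{thm:out} as $O(\lambda^n)=O(q^{nh_{ch}})$, unioned over the $|\mathcal S|$ initial states) times (at most $q^{\tau n+O(1)}$ inputs per erasure pattern, via Lemma~\ref{lem:maxer}), followed by the same greedy covering of $\mathcal X^n$. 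Where you genuinely depart from the paper is the upper bound: you argue directly that for a maximal valid erasure pattern $e^\ast$ (whose density approaches $\tau$ by Lemma~\ref{lem:miner}, choosing the adversarial initial state on the maximal cycle --- legitimate since a zero-error code must work for every initial state), distinct codewords must differ off $e^\ast$, giving $|\mathcal F|\le q^{n-|e^\ast|}$ and hence $C_0\le 1-\tau$ directly. The paper instead bounds $C_0\le C_{0f}$ and computes $C_{0f}=1-\tau$ exactly via the Zhao--Permuter dynamic-programming recursion (Theorem~\ref{thm:nsec0f}, Appendix~\ref{app:c}). Your route is more elementary and self-contained for the purpose of this proposition; the paper's route is heavier but yields the exact feedback capacity as a result of independent interest. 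Both are correct, and your closing observation that the $h_{ch}$ gap is an artifact of the non-algebraic greedy construction matches the paper's own remarks.
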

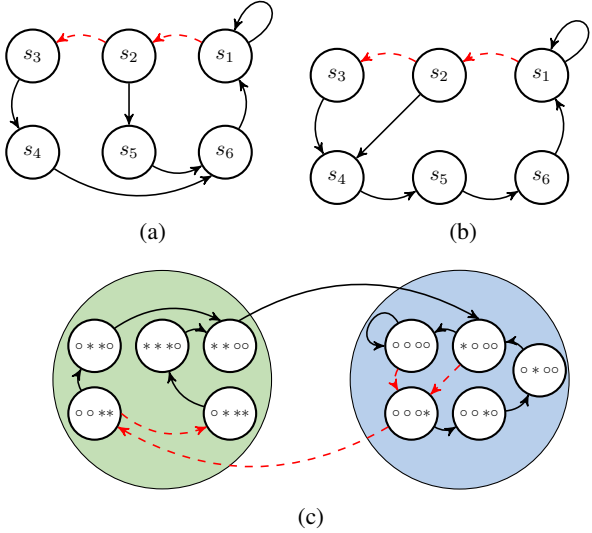
\begin{figure}[t]
	\centering 
	\subcaptionbox{}{\begin{tikzpicture}[->, >=stealth', auto, semithick, node distance=1.6cm]
		\tikzstyle{every state}=[fill=white,draw=black,thick,text=black,scale=0.8]
		\node[state]   (S1)       {$  s_1 $};
		\node[state]    (S2)[left of=S1]   {$ s_2 $};
		\node[state]    (S3)[left of=S2]   {$ s_3 $};
		\node[state]    (S4)[below of=S3]   {$ s_4 $};
		\node[state]    (S5)[right of=S4]   {$ s_5 $};
		\node[state]    (S6)[right of=S5]   {$ s_6 $};
		\path
		(S1) edge[out=30,in=70,looseness=8] (S1)
		edge[dashed, red,bend right]   (S2)
		(S2) edge[dashed, red,bend right]   (S3)
		edge[]   (S5)
		(S3) edge[bend right]   (S4)
		(S4) edge[bend right]   (S6.240)
		(S5) edge[bend right]   (S6)
		(S6) edge[bend right]   (S1);
		\end{tikzpicture}}
	\subcaptionbox{}{\begin{tikzpicture}[->, >=stealth', auto, semithick, node distance=1.7cm]
		\tikzstyle{every state}=[fill=white,draw=black,thick,text=black,scale=0.8]
		\node[state]   (S1)       {$  s_1 $};
		\node[state]    (S2)[left of=S1]   {$ s_2 $};
		\node[state]    (S3)[left of=S2]   {$ s_3 $};
		\node[state]    (S4)[below of=S3]   {$ s_4 $};
		\node[state]    (S5)[right of=S4]   {$ s_5 $};
		\node[state]    (S6)[right of=S5]   {$ s_6 $};
		\path
		(S1) edge[out=30,in=70,looseness=8] (S1)
		edge[dashed, red,bend right]   (S2)
		(S2) edge[dashed, red,bend right]   (S3)
		edge[]   (S4)
		(S3) edge[bend right]   (S4)
		(S4) edge[bend right]   (S5)
		(S5) edge[bend right]   (S6)
		(S6) edge[bend right]   (S1);
		\end{tikzpicture}} \\ \vspace{3mm}
	\subcaptionbox{}{\begin{tikzpicture}[->, >=stealth', auto, semithick, node distance=1.5cm]
		\tikzstyle{every state}=[fill=white,draw=black,thick,text=black,scale=.6]
		\tikzstyle{surround} = [circle,fill=cblue!70,draw=black]
		\node[state]    (S1)[]      {$  \circ\circ\circ\circ $};
		\node[state]    (S2)[below of=S1]   {$ \circ\circ\circ* $};
		\node[state]    (S3)[right of=S2]   {$ \circ\circ*\circ $};
		\node[state]    (S4)[above right=0.07cm and .3cm of S3]   {$ \circ*\circ\circ $};
		\node[state]    (S5)[right of=S1]   {$ *\circ\circ\circ $};
		\node[state]    (S7)[left=1.7cm of S1]   {$ **\circ\circ $};
		\node[state]    (S8)[left of=S7]   {$ ***\circ $};
		\node[state]    (S9)[left of=S8]   {$ \circ**\circ$};
		\node[state]    (S10)[below of=S9]   {$ \circ\circ** $};
		\node[state]    (S11)[below of=S7]   {$ \circ*** $};
		\path
		(S1) edge[out=120,in=180,looseness=3.5] (S1)
		edge[dashed, red,bend right]   (S2)
		(S2) edge[dashed, red,bend left]   (S10)
		edge[bend right]   (S3)
		(S3) edge[bend right]   (S4)
		(S4) edge[bend right]   (S5)
		(S5) edge[bend right]   (S1)
		edge[dashed, red]   (S2)
		(S7.70) edge[bend left]   (S5.90)
		(S8) edge[bend left]   (S7)
		(S9) edge[bend left]   (S7.110)
		(S10) edge[bend left]   (S9)
		(S10.0) edge[dashed, red,bend right]   (S11)
		(S11) edge[bend left]   (S8);
		\begin{pgfonlayer}{background}
		\node[surround,xshift=-6pt] (good)[fit =(S1)(S2)(S3)(S4)(S5),scale=0.9] {};
		\node[surround] (bad)[fit =(S7)(S8)(S9)(S10)(S11),fill=celadon!70,scale=0.88] {};
		\end{pgfonlayer}
		\end{tikzpicture}}
	\caption{\small{States transition diagram {governing} the noise sequence in (a) a channel that may have bursty errors up to length two in each sliding window of size 3, (b) a channel that has a guard space between any error with maximum length two, and (c)  Gilbert-Elliot like channel. {The red-dashed and black lines corresponds to transmissions with and without error,  respectively. B}lue circled group of states is \textit{good} group of states and green circled one is \textit{bad} group of states.}}
	\label{fig:models}
\end{figure}	
\begin{rem}
	The achievability part of this proposition involves the topological entropies of both the linear system and the channel. If their sum, which can be regarded as a total rate of uncertainty generation, is less than the worst-case rate at which symbols can be transported without error across the channel, i.e., $1-\tau$, then uniformly bounded estimation errors are possible. This can be seen as a {\em small-uncertainty} version of the small-gain theorem.
\end{rem}
To prove {Proposition \ref{prop:nseest}, we derive bounds on $ C_0 $ of the finite-state erasure channel.	
	Explicit formulas for the zero-error capacity typically do not exist except in special cases, even for memoryless channels. We derive an upper bound on $C_0$ by calculating the zero-error feedback capacity and a lower bound by constructing a zero-error code.}

In the following subsection, the zero-error feedback capacity of the finite-state erasure channel is investigated.
\subsection{Zero-error feedback capacity of finite-state erasure channel}

The zero-error feedback capacity $C_{0f}$ is the zero-error capacity of a channel in the presence of a noiseless feedback from the output.

Let $ \mathcal{M} $ be the message set, $ x(t) \in \mathcal{X} $ be the channel input and $ y(t) \in \mathcal{Y} $ the channel output. For  channels with feedback, the input to the channel is a function of the message and previous outputs, i.e., $ x(t)=f_t(m, y(0:t-1)), \, t=0,1,\dots, n,\,\, m\in \mathcal{M}$, where $ f_t(\cdot) $ is the encoding function (setting $y(0:-1)=\emptyset$). The set of functions $ \mathcal{F}=\{f_{0:n}(m,\cdot)|m\in \mathcal{M}\} $ is called a {\em zero-error feedback code} if no two distinct messages $ m \neq m'  \in \mathcal{M} $ can result in the same channel output sequence, regardless of the channel noise and initial state.
\begin{defi} \label{def:c0f}
	The zero-error feedback capacity of a channel is 
	\begin{align*}
	C_{0f}:=\sup_{n \in \mathbb{N}_0, \,  \mathcal{F} \in  {\mathscr{F}(n)}}
	\frac{\log |\mathcal{M}|}{n+1},
	\end{align*}
	where $\mathcal{F}$ is chosen from the set {$ \mathscr{F}(n)$ of all zero-error feedback codes with blocklength $ n+1 $.}
\end{defi}
Unlike the zero-error capacity, in zero-error feedback capacity, the encoder has access to the previous channel outputs. Since the family of zero-error codes with feedback includes the family of zero-error codes  without feedback, we have $C_0 \leq C_{0f}$. 

For discrete memoryless channels, $C_{0f}$ can be obtained through an optimization problem~\cite{shannon1956zero}. For finite-state channels with causal state information at the transmitter and receiver, by adopting Shannon's approach, it has been shown that $C_{0f}$ can be obtained by solving a dynamic programming problem~\cite{zhao2010zero}.	
In the finite-state erasure channel, the state is revealed by the output sequence. Thus, in the presence of an error-free feedback channel, the state is known to the encoder and the decoder, and we can apply the techniques of \cite{zhao2010zero}.

A {\em walk} in the state diagram is a contiguous sequence of directed edges, and corresponds one-to-one with a valid noise sequence. Each channel noise sequence starting from state $ s_0 $, i.e., $ \mathbf{v}:=v(0:n-1) \in \llbracket V(0:n-1)|s_0 \rrbracket$,  is equivalent to a walk through the finite-state machine graph denoted by a tuple (or {\em list}) $ \varpi:=(\xi(i))_{i=0}^{n-1} $ where $ \xi(i) \in \mathcal{E}$ is an edge between two vertices or states. Moreover, $ E(\varpi)$ or $E(\mathbf{v}) $ denotes the number of erasure edges ($ v(i)=1 $) in this walk.

Before giving the main result of this section, we give the following definition.
\begin{defi} [Maximal ratio] \label{def:mr}
	For any cycle in the state diagram of a finite-state channel and let $ \tau_i:=e_i/l_i $ where $ e_i $ is the number of error edges and $ l_i $ is the total number of edges in the cycle. The \emph{maximal ratio} is defined as the maximum $ \tau_i $ over all cycles, i.e., $\tau := \max_i \{\tau_i\}$.
\end{defi}
In the following lemmas, we show that when $n$ is large, walks on the cycle with maximal ratio lead to the maximum number of erasures.
\begin{lem} \label{lem:maxer}
	For the finite-state erasure channel with maximal ratio $ \tau=e/l $ and initial state $ s_0 $, the number of erasures $ E(\mathbf{v}) $ in any sequence $ \mathbf{v} $ is upper bounded by
	\begin{align}
		E(\mathbf{v}) \leq \tau n+ |\mathcal{S}|. \label{maxeras}
	\end{align}
\end{lem}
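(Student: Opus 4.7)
The plan is to prove the bound by decomposing the walk $\varpi$ of length $n$ corresponding to $\mathbf{v}$ into simple cycles and a residual simple path, then applying the definition of the maximal ratio on each cycle and a trivial bound on the residual path.

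First, I would exhibit a decomposition of $\varpi$ into a disjoint union of simple (directed) cycles $\varpi_1,\dots,\varpi_k$ plus a simple path $\varpi_0$ from $s_0$ to the terminal state of $\varpi$. Operationally, one walks along $\varpi$ while recording the visited vertices; whenever a vertex $s$ is revisited, the sub-walk between the two occurrences of $s$ forms a closed walk that is then peeled off and recursively reduced to simple cycles, while the rest is spliced together. Continuing until no repeated vertex remains yields the claimed decomposition. This gives an edge-disjoint partition $n = \ell_0 + \sum_{j=1}^{k} \ell_j$ and $E(\mathbf{v}) = e_0 + \sum_{j=1}^{k} e_j$, where $\ell_j$ and $e_j$ are respectively the number of edges and of erasure edges in the $j$-th piece.

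Next I would bound each contribution. For the simple path $\varpi_0$, since it visits distinct vertices it has at most $|\mathcal{S}|$ vertices and hence at most $|\mathcal{S}|-1$ edges, so $e_0 \le \ell_0 \le |\mathcal{S}|-1 \le |\mathcal{S}|$. For each simple cycle $\varpi_j$, the definition of the maximal ratio (Def.~\ref{def:mr}) gives $e_j/\ell_j \le \tau$, i.e.\ $e_j \le \tau \ell_j$. Summing,
\begin{align*}
E(\mathbf{v}) \;=\; e_0 + \sum_{j=1}^{k} e_j \;\le\; |\mathcal{S}| + \tau \sum_{j=1}^{k} \ell_j \;\le\; \tau n + |\mathcal{S}|,
\end{align*}
which is the claimed inequality.

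The only non-routine step is the walk-to-cycles decomposition, but this is a standard application of the fact that in the edge-multiset of a walk every intermediate vertex has equal in- and out-multiplicity, so the multigraph of edges splits into simple cycles plus an $s_0$-to-$s_n$ trail; the rest is bookkeeping. The use of $\tau$ being defined over simple cycles is legitimate because any closed walk inherits an erasure ratio that is a weighted average of those of its constituent simple cycles and thus cannot exceed $\tau$.
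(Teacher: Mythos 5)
Your proof is correct and follows essentially the same route as the paper's: both decompose the walk into an acyclic residual path (bounded by $|\mathcal{S}|$ edges) plus a collection of cycles, and bound the erasures on each cycle by $\tau$ times its length via Definition~\ref{def:mr}. Your explicit justification that the maximal ratio over simple cycles also controls arbitrary closed walks is a small added precision, but the argument is the same.
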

\begin{proof} See Appendix \ref{app:a}.
\end{proof}
Now we show that there are some sequences that get close to the upper bound in \eqref{maxeras} when $n$ is large enough.
\begin{lem} \label{lem:miner}
	For the finite-state erasure channel with maximal ratio $ \tau=e/l $ and initial state $ s_0 $, there is a sequence $ \mathbf{v} $ such that
	\begin{align}
		E(\mathbf{v}) > \tau n -l- |\mathcal{S}|. \label{worsteras}
	\end{align}
\end{lem}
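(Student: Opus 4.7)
The plan is to construct an explicit walk of length $n$ starting from $s_0$ whose induced noise sequence attains nearly the upper bound of Lemma~\ref{lem:maxer}. First, I would fix a cycle $C^{\star}$ in the state diagram that realises the maximal ratio, so $C^{\star}$ has $l$ edges of which exactly $e$ are erasure edges, with $e/l = \tau$. By the strong connectivity assumed in Def.~\ref{def:fsechannel}, there is a walk $\pi_0$ from $s_0$ to some vertex $s^{\star}$ on $C^{\star}$ whose length $k$ satisfies $k \leq |\mathcal{S}| - 1$; otherwise some vertex would be unreachable from $s_0$, contradicting strong connectivity.

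Second, I would extend $\pi_0$ by cycling around $C^{\star}$ as many times as possible within the remaining $n - k$ steps. Writing $n - k = q l + r$ with $q := \lfloor (n - k)/l \rfloor$ and $0 \leq r < l$, define $\varpi$ to be the concatenation of $\pi_0$, $q$ full traversals of $C^{\star}$, and the first $r$ edges of one further traversal. This is a valid walk of length exactly $n$, and by the one-to-one correspondence between walks and noise sequences noted before Prop.~\ref{thm:out}, it yields a valid noise sequence $\mathbf{v}$ with initial state $s_0$.

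Third, counting only the erasure edges contributed by the $q$ full traversals of $C^{\star}$ (and discarding whatever additional erasures occur along $\pi_0$ or in the final partial traversal), I would estimate
\[
E(\mathbf{v}) \;\geq\; qe \;>\; \left(\frac{n-k}{l} - 1\right)e \;=\; \tau(n - k) - e.
\]
Since $\tau \leq 1$, $e \leq l$, and $k \leq |\mathcal{S}| - 1 < |\mathcal{S}|$, this rearranges to $E(\mathbf{v}) > \tau n - |\mathcal{S}| - l$, as claimed. I do not expect a genuine obstacle: the construction is essentially the matching converse to Lemma~\ref{lem:maxer}, with the only mildly delicate point being the use of strong connectivity to guarantee the short prefix $\pi_0$ onto $C^{\star}$, and the strict inequality coming for free from the slack $q > (n-k)/l - 1$ in the floor function.
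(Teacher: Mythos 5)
Your construction is correct and matches the paper's own proof: the paper likewise takes the walk onto the maximal-ratio cycle (reachable within $|\mathcal{S}|$ steps by strong connectivity) and then loops around it, obtaining at least $\lfloor n/l\rfloor e > \tau n - l$ erasures on the cycle and absorbing the approach prefix into the $-|\mathcal{S}|$ term. Your version just spells out the bookkeeping ($n-k=ql+r$, $\tau k\le k$, $e\le l$) that the paper leaves implicit.
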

\begin{proof} See Appendix \ref{app:b}.
\end{proof}
According to the above lemmas, for a finite-state erasure channel, we have the following 
\begin{align}
	\lim_{n \rightarrow \infty} \max_{s_0,\mathbf{y}} \frac{E(\mathbf{v})}{n}  =  \tau. \label{ratio}
\end{align}

The above lemmas are used to give the following formula for finite-state erasure channels.
\begin{thm} \label{thm:nsec0f} The zero-error feedback capacity of a finite-state erasure channel~(Def. \ref{def:fsechannel}) is 
	\begin{align}
		C_{0f} =1-\tau.\label{nsec0f}
	\end{align}
\end{thm}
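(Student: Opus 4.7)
My plan is to prove both directions separately, using Lemmas~\ref{lem:maxer} and~\ref{lem:miner} to control the fraction of erasures that the adversarial noise can induce in any block. For the converse $C_{0f}\leq 1-\tau$, I would fix any zero-error feedback code of blocklength $n+1$ with message set $\mathcal{M}$, and apply Lemma~\ref{lem:miner} to obtain a specific valid noise realisation $\mathbf{v}^{\star}$ of length $n+1$ starting from some $s_0$ with $E(\mathbf{v}^{\star}) > \tau(n+1) - l - |\mathcal{S}|$. Because erasures in Definition~\ref{def:fsechannel} depend only on $\mathbf{v}$ and not on the input, the erasure positions under $\mathbf{v}^{\star}$ are fixed; at every non-erased position the output equals the input, which is itself a deterministic function of the message $m$ (the inputs depend on past outputs via feedback, and past outputs are in turn deterministic functions of $m$ once $\mathbf{v}^{\star}$ is fixed). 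Consequently, the map $m\mapsto y^m_{\mathbf{v}^{\star}}$ has range of size at most $q^{(n+1)-E(\mathbf{v}^{\star})}$, and the zero-error property forces this map to be injective, giving
\[
|\mathcal{M}| \;\leq\; q^{(n+1)-E(\mathbf{v}^{\star})} \;<\; q^{(1-\tau)(n+1)+l+|\mathcal{S}|}.
\]
Taking $\log_q$, dividing by $n+1$, and letting $n\to\infty$ yields $C_{0f}\leq 1-\tau$.

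For the achievability $C_{0f}\geq 1-\tau$, I would construct a simple ARQ-style feedback scheme. Fix $M$ with $\log_q M = \lfloor(1-\tau)(n+1) - |\mathcal{S}|\rfloor$ and identify each message with a word in $\mathcal{X}^{\log_q M}$. The encoder transmits the current message symbol and, using the noiseless feedback, checks whether it was received cleanly or erased; on an erasure it retransmits the same symbol, otherwise it advances to the next message symbol. Once all $\log_q M$ symbols have been received, the encoder pads the remainder of the block with an arbitrary filler symbol. Lemma~\ref{lem:maxer} bounds the number of erasures in any $n+1$ channel uses by $\tau(n+1)+|\mathcal{S}|$ uniformly in the initial state, so the number of clean transmissions is at least $(1-\tau)(n+1) - |\mathcal{S}| \geq \log_q M$, guaranteeing that the full message is delivered before the block ends. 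The decoder recovers $m$ by reading the first $\log_q M$ non-$\ast$ outputs (the erased positions are labelled, so it can tell them apart). This is a valid zero-error feedback code whose rate tends to $1-\tau$ as $n\to\infty$, giving the matching lower bound.

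The main obstacle is the converse, specifically the step of pinning down a single worst-case noise realisation $\mathbf{v}^{\star}$ rather than reasoning over all adversarial strategies jointly. Lemma~\ref{lem:miner} is exactly what makes this clean: it promotes the per-cycle erasure ratio $\tau$ into a sharp per-block lower bound on the number of erasures that is essentially tight up to an additive constant. A secondary subtlety is initial-state uncertainty, resolved for both directions by the fact that Lemmas~\ref{lem:maxer}--\ref{lem:miner} hold uniformly over $s_0$, so the ARQ scheme succeeds from every initial state while the converse adversary may freely select the worst $s_0$. Finally, since outputs of a finite-state erasure channel reveal the channel state, the feedback hypothesis is equivalent to state-at-both-ends, so the result could alternatively be obtained by specialising the dynamic programming formulation of~\cite{zhao2010zero}; the direct combinatorial argument above is more transparent and yields the closed form $1-\tau$ immediately.
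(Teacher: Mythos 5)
Your proof is correct, but it takes a genuinely different route from the paper. The paper proves this theorem by invoking the dynamic-programming characterization of $C_{0f}$ from \cite{zhao2010zero} (justified because the erasure channel's output reveals the state, so with feedback both ends know it), assigning a multiplicative gain $a(s,s')\in\{1,q\}$ to each edge of the state graph, and then using Lemmas~\ref{lem:maxer} and~\ref{lem:miner} to sandwich the worst-case product of gains along a length-$k$ walk between $q^{k-\tau k-|\mathcal{S}|}$ and $q^{k-\tau k+l+|\mathcal{S}|}$. You instead argue directly: a counting converse that pins down a single worst-case noise realisation via Lemma~\ref{lem:miner} and bounds $|\mathcal{M}|$ by the number of output sequences compatible with its erasure pattern, and an explicit ARQ achievability scheme whose success is guaranteed by Lemma~\ref{lem:maxer}. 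Both proofs ultimately rest on the same two lemmas; your version is more elementary and self-contained, avoids the genie/state-information machinery, and makes the capacity-achieving scheme explicit (the paper only gestures at retransmission in a remark on the sliding-window special case), while the paper's DP route is the one that generalizes to the non-erasure settings it needs later (e.g.\ the sliding-window symmetric upper bound, where the gains become $q/(q-1)$ and $q$). One small point to tighten in your converse: the per-blocklength bound $\frac{1}{n+1}\log|\mathcal{M}|\leq 1-\tau+\frac{l+|\mathcal{S}|}{n+1}$ does not immediately bound the supremum in Definition~\ref{def:c0f}, since the additive overhead is large for small $n$; you need to add that $\log M_n$ is superadditive under block concatenation (valid here because a code that works for every initial state works for every subsequent block), so the supremum equals the limit --- the same step the paper makes explicit in its additive-noise upper bound.
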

\begin{proof}
	See Appendix \ref{app:c}.
\end{proof}
Theorem~\ref{thm:nsec0f} states that the zero-error capacity of the finite-state erasure channel with feedback coincides with the minimum fraction of the $q$-ary packets that may be successfully received. 

We now relate the zero-error capacity of the channel to its topological entropy.

\subsection{Zero-error capacity bounds for finite-state erasure channel}
In the following theorem, we show a lower bound linking the topological entropy of the channel to the zero-error capacity. The upper bound is from Theorem \ref{thm:nsec0f}.

\begin{thm}\label{thm:ec}
	The zero-error capacity of a finite-state erasure channel~(Def. \ref{def:fsechannel}) with topological entropy $h_{ch}$ and maximal ratio $ \tau $~(Def. \ref{def:mr}) is bounded by
	\begin{align}
	1-\tau-h_{ch} \leq C_0 \leq 1-\tau. \label{ecc0}
	\end{align}
\end{thm}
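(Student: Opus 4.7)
\emph{Plan.} The upper bound is immediate: any zero-error code without feedback is a zero-error feedback code with trivial feedback, so $C_0 \leq C_{0f}$, and Theorem~\ref{thm:nsec0f} gives $C_{0f} = 1 - \tau$. The substance of the result is therefore the lower bound $C_0 \geq 1 - \tau - h_{ch}$, which I would establish by random coding with expurgation, exploiting the fact that the channel's walk structure strictly limits the adversary's choice of erasure patterns compared with an unconstrained erasure adversary.

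Concretely, draw $M$ codewords $\mathbf{x}^{(1)},\dots,\mathbf{x}^{(M)}$ independently and uniformly from $\mathcal{X}^{n}$. For the erasure channel, two codewords $\mathbf{x}^{(i)}\neq\mathbf{x}^{(j)}$ are confusable under a valid noise sequence $\mathbf{v}$ exactly when they agree on all non-erased positions, an event of probability $q^{-(n-E(\mathbf{v}))}$ for a uniformly random pair. By Proposition~\ref{thm:out}, for each initial state $s_0$ the number of distinct noise walks of length $n$ is at most $\beta\lambda^{n}$, so summing over the $|\mathcal{S}|$ initial states bounds the total number of valid noise sequences the code must defend against by $|\mathcal{S}|\beta\lambda^{n} = O(q^{n h_{ch}})$. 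By Lemma~\ref{lem:maxer} every such $\mathbf{v}$ satisfies $E(\mathbf{v}) \leq \tau n + |\mathcal{S}|$, so a union bound over noise sequences yields
\begin{align*}
\Pr\!\big[\mathbf{x}^{(i)},\mathbf{x}^{(j)}\text{ confusable}\big] \leq |\mathcal{S}|\beta\lambda^{n}\, q^{-(1-\tau)n + |\mathcal{S}|}.
\end{align*}
Hence the expected number of bad codeword pairs is at most $\binom{M}{2}|\mathcal{S}|\beta\lambda^{n} q^{-(1-\tau)n + |\mathcal{S}|}$. Choosing $M = \lfloor q^{(1-\tau-h_{ch})n - c}\rfloor$ for a sufficiently large constant $c$ makes this quantity at most $M/2$, so some realization of the random code has at most $M/2$ bad pairs. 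Expurgating one codeword from each bad pair leaves a zero-error code of size $\geq M/2$ and rate $\log_{q}(M/2)/(n+1) \to 1 - \tau - h_{ch}$, as desired.

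The main obstacle I anticipate is correctly handling initial-state uncertainty: the zero-error condition must hold for \emph{every} admissible $s_0$ simultaneously, so the union bound must range over walks from every starting vertex rather than one distinguished initial state. Proposition~\ref{thm:out} supplies the per-state bound, and since $|\mathcal{S}|$ is a fixed constant, summing over initial states only inflates the count by a constant factor that is absorbed into the expurgation parameter $c$. A secondary subtlety is the additive $|\mathcal{S}|$ in Lemma~\ref{lem:maxer} (the bound is $\tau n + |\mathcal{S}|$ rather than $\tau n$ exactly), but this contributes only $O(1)$ to the exponent and vanishes in the asymptotic rate. Provided these bookkeeping steps are carried out carefully, together with the immediate upper bound from Theorem~\ref{thm:nsec0f}, the argument delivers both inequalities in~\eqref{ecc0}.
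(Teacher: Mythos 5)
Your proposal is correct, and both inequalities match the paper's. The upper bound is handled identically (via $C_0\le C_{0f}=1-\tau$ from Theorem~\ref{thm:nsec0f}). For the lower bound you take a genuinely different, though closely parallel, route: the paper constructs the code \emph{deterministically} by a greedy covering argument --- it bounds the confusability set $\mathscr{Q}(\mathbf{c})$ of any codeword by $\zeta(q^{\tau}\lambda)^n$ (combining Proposition~\ref{thm:out} for the number of output/noise walks with Lemma~\ref{lem:maxer} for the number of preimages per output), then picks pairwise non-adjacent inputs until these sets cover $\mathcal{X}^n$, giving $M\ge q^n/(\zeta(q^{\tau}\lambda)^n)$ --- whereas you obtain the same exponent by random coding with expurgation (the deletion method on the confusability graph). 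The two arguments rest on exactly the same two counting facts and the same decomposition of the confusability bound into (number of valid noise walks over all initial states) $\times$ (agreement probability determined by the erasure count), so they are equivalent in strength here; the greedy version is a degree bound on the confusability graph, yours is the standard probabilistic independent-set bound, and both give rate $1-\tau-h_{ch}-O(1/n)$. Your observation that, for erasure channels, two codewords can only collide through a \emph{single shared} erasure pattern (because the output reveals the erasure positions) is the correct justification for the pairwise collision probability $q^{-(n-E(\mathbf{v}))}$, and your handling of the initial-state union and the additive $|\mathcal{S}|$ slack is sound; these constants are absorbed into the $o(1)$ rate loss exactly as in the paper.
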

\begin{proof}
	See Appendix \ref{app:nselw}.
\end{proof}

\begin{rem}
	The topological entropy $h_{ch}$ can be viewed as the rate at which the channel dynamics generate uncertainty. Intuitively, this uncertainty cannot increase the zero-error capacity of the channel, which explains why it appears as a negative term in \eqref{ecc0}.
\end{rem}

\begin{rem}
	There are various results that bound $ h_{ch} = \log\lambda$. For instance, for any graph with maximum out-degree $ d_{max} $ and minimum out-degree $ d_{min} $, we have $d_{min} \leq \lambda \leq d_{max}$~\cite{berman1994nonnegative}. Therefore, a loose lower bound would be $1-\tau-\log d_{max}$. Moreover, note that $ d_{max}=2 $ for the state diagram of any finite-state erasure channel. Thus for large  alphabet size $q$, the lower bound meets the upper bound obtained in \eqref{nsec0f}, i.e., $\lim_{q \to \infty}C_0= 1-\tau$. In other words, for large packet sizes, the zero-error capacity is equal to the fraction of the packets that are not dropped in the worst-case scenario. This is comparable with the (small-error) capacity of the erasure channel which is equal to the expected fraction of the packets that are not dropped.
\end{rem}
\begin{rem}
	The proof of {Proposition} \ref{prop:nseest} follows from Theorems \ref{thm:finitem} and \ref{thm:ec}.
\end{rem}
In what follows, the zero-error capacity of the sliding-window erasure channel as an example of finite-state erasure channel is discussed and compared with the general results of Theorem \ref{thm:ec}.
\subsection{Example: sliding-window erasure channel}
Based on the sliding-window erasure channel's structure some new bounds are derived to compare with the general bounds discussed in the previous section.

Consider the structure of the state diagram of a $ (w,d) $ sliding window erasure channel. The cycles with maximal ratio~(Def. \ref{def:mr}) are the ones corresponding to the maximum number of erasures in the past $ n $ transmission. The following Lemma gives the reason.
\begin{lem} \label{lem:tec}
	For a $ (w,d) $ sliding-window-erasure channel $ \tau~=~d/w $.
\end{lem}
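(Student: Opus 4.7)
The plan is to prove equality by establishing $\tau \leq d/w$ via a double-counting argument on every cycle, and $\tau \geq d/w$ by exhibiting an explicit cycle of length $w$ carrying exactly $d$ erasure edges.

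First I would prove the upper bound. Fix any cycle $C$ of length $l$ in the state diagram having $e$ erasure edges. Traversing $C$ produces a walk in the state graph, which by Def.~\ref{def:fsechannel} corresponds to a valid noise sequence respecting the sliding-window constraint; repeating the traversal $k$ times produces an $l$-periodic noise sequence of length $kl$ containing $ke$ erasures. For $k$ large enough that $kl \geq w$, every window of length $w$ in this sequence contains at most $d$ erasures. I would now double-count ``(erasure position, window)'' pairs over the $kl$ windows of length $w$ that start at consecutive positions (taken cyclically within one block of length $kl$). Each of the $ke$ erasure positions lies in exactly $w$ of these windows, yielding $kew$ pairs; each of the $kl$ windows contributes at most $d$ pairs, giving $\leq kld$. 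Hence $kew \leq kld$, i.e., $e/l \leq d/w$. Since the cycle was arbitrary, $\tau \leq d/w$.

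Next I would prove the lower bound by exhibiting a cycle achieving $d/w$. Consider the periodic noise pattern $\mathbf{v}^\star = (\underbrace{1,\ldots,1}_{d},\underbrace{0,\ldots,0}_{w-d})$ repeated, which has exactly $d$ erasures in every window of length $w$ (any window of length $w$ in a $w$-periodic binary sequence with $d$ ones per period contains $d$ ones, regardless of alignment); hence it satisfies the $(w,d)$ sliding-window constraint. The corresponding state sequence, where the state at time $t$ records the last $w$ noise symbols, is also $w$-periodic, so after $w$ transitions the state returns to its starting value. This closed walk is a cycle of length $l = w$ with $e = d$ erasure edges, showing $\tau \geq d/w$.

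Combining the two inequalities gives $\tau = d/w$, as claimed. The argument is routine once the correct double-counting is set up; the only mild subtlety is the justification that traversing the state cycle repeatedly produces a bona fide noise sequence satisfying the sliding-window bound, which is immediate from the construction of the state diagram in Sec.~\ref{swcm} (the state encodes precisely the erasure pattern of the past $w$ symbols, so every transition out of a state that already contains $d$ erasures must be an error-free transition). No auxiliary machinery beyond Def.~\ref{def:mr} and the definition of the state diagram is needed.
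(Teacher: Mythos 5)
Your proof is correct, but it takes a genuinely more direct route than the paper's. The paper never argues cycle-by-cycle: it invokes the general characterization \eqref{ratio} (built on Lemmas \ref{lem:maxer} and \ref{lem:miner}), namely that $\lim_{n\to\infty}\max E(\mathbf{v})/n=\tau$ for any finite-state erasure channel, and then computes this limit to be $d/w$ for the sliding-window channel --- upper-bounding the erasures of an arbitrary length-$n$ walk by $n(d/w)+w$ via disjoint windows, and lower-bounding via essentially the same periodic $d$-erasures-per-period construction you use. Your argument instead works entirely at the level of Def.~\ref{def:mr}: the cyclic double-counting gives the exact inequality $e/l\leq d/w$ for \emph{every individual} cycle with no $O(w/n)$ slack and no limiting argument, and the explicit length-$w$ cycle with $d$ erasure edges supplies the matching lower bound. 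What your approach buys is self-containedness (no appeal to \eqref{ratio}) and the slightly stronger per-cycle statement; what the paper's buys is reuse of machinery it needs anyway for Theorem \ref{thm:nsec0f}. Two small points you should make explicit: (i) the cyclic windows in your double count are legitimate windows of the bi-infinite periodic extension of the walk, which is valid because every state of the diagram encodes a past window with at most $d$ erasures; (ii) the closed walk induced by the pattern $1^{d}0^{w-d}$ is a simple cycle of length $w$ only because the $w$ cyclic shifts of that word are distinct states (true for $0<d<w$); in general a closed walk decomposes into cycles whose length-weighted average erasure ratio equals $d/w$, so at least one constituent cycle still attains ratio at least $d/w$, which is all the lower bound needs.
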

\begin{proof}
	See Appendix \ref{app:tec}.
\end{proof}
Considering Lemma \ref{lem:tec} and bounds in \eqref{ecc0} yields the following bounds for the sliding-window erasure channel.
\begin{cor}\label{cor:swtb}
	{For a $(w,d)$ sliding-window erasure channel with topological entropy $h_{ch}$, the following holds.}
	\begin{align}
	1-\frac{d}{w}-h_{ch} \leq C_0 \leq 1-\frac{d}{w}. \label{swec0}
	\end{align}
\end{cor}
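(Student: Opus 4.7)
The plan is to obtain Corollary \ref{cor:swtb} as a direct consequence of the two most recent results in the paper: the general bounds on zero-error capacity for finite-state erasure channels in Theorem \ref{thm:ec}, and the identification of the maximal ratio for the sliding-window erasure channel in Lemma \ref{lem:tec}. Since a $(w,d)$ sliding-window erasure channel fits Definition \ref{def:fsechannel} (its underlying state-transition graph is strongly connected, and from each state there are at most two outgoing edges corresponding to erasure and non-erasure), Theorem \ref{thm:ec} applies verbatim and yields
\begin{equation*}
1-\tau-h_{ch} \leq C_0 \leq 1-\tau,
\end{equation*}
where $\tau$ is the maximal ratio of the channel.

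The only remaining step is to substitute the value of $\tau$ for the sliding-window erasure channel. By Lemma \ref{lem:tec}, this value is exactly $d/w$. Plugging $\tau = d/w$ into both sides of the displayed inequality above yields \eqref{swec0}, which is the claim of the corollary.

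Because both ingredients are already proved earlier in the paper, no new technical work is needed here; the ``main obstacle'' is essentially bookkeeping, namely verifying that the hypotheses of Theorem \ref{thm:ec} are met (strong connectivity of the state diagram and the erasure-channel structure from Def.~\ref{def:fsechannel}), both of which are built into the construction of the sliding-window erasure channel model in Section \ref{swcm}. In particular, the strong connectivity can be checked by noting that for any state there is a walk of length at most $w$ back to the all-error-free state $\circ\cdots\circ$, and from that state every valid state is reachable in at most $w$ steps. Hence the proof reduces to one sentence combining Theorem \ref{thm:ec} and Lemma \ref{lem:tec}.
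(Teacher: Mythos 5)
Your proposal is correct and matches the paper's own argument exactly: the corollary is obtained by substituting the value $\tau = d/w$ from Lemma~\ref{lem:tec} into the general bounds of Theorem~\ref{thm:ec}. The extra remark verifying strong connectivity of the sliding-window state diagram is a sensible (if implicit in the paper) sanity check but introduces nothing new.
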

\begin{rem}
	According to Theorem \ref{thm:nsec0f}, $ C_{0f}=1-d/w $  for sliding-window erasure channel. However, in this case, it is straightforward to see that the zero-error feedback capacity is upper bounded by $1-d/w$. This is because, for long input sequences, in a worst-case scenario $1-d/w$ proportion of symbols can be erased which  bounds the rate from above. Furthermore, this rate can be achieved by a simple feedback encoding method that re-transmits every erased symbol until it is successfully received. Therefore, the zero-error feedback capacity equals $ 1-d/w $.
\end{rem}
\begin{rem}
	In \cite{fong2019optimal}, the authors use the similar structure of the sliding-window erasure channel as here and use \textit{maximum distance separable}~(MDS) codes to achieve the rate of $ 1-d/w $ without feedback. A subtle but critical point to note here is that only for a few combinations of $(w,d,q)$ such codes exist; e.g. there is no MDS code for binary alphabets and roughly these codes exist for large alphabet sizes \cite{huffman2010fundamentals}. 
\end{rem}
Next, we derive a different lower bound for the zero-error capacity of the channel. 
\begin{thm}\label{thm:c0lec} The zero-error capacity of a sliding  window-erasure channel which arbitrarily erases up to $ d $ symbols in every sliding window of $ w $ symbols is lower bounded by 
	\begin{align}
	C_0 &\geq 1-\frac{1}{w}\log V^w_{d}(q) .\label{bounds_sw}
	\end{align}
\end{thm}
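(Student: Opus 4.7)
The plan is to lower-bound $C_0$ by exhibiting an explicit zero-error block code at blocklength $w$, built via the classical Gilbert--Varshamov sphere-covering argument. The key preliminary observation is that the sliding-window constraint, applied to a single aligned window of $w$ consecutive channel uses, guarantees that \emph{each} such window of $w$ transmitted symbols incurs at most $d$ erasures, regardless of the initial channel state $s_0$ and of what happens in earlier or later blocks. Hence if I partition an arbitrarily long transmission into disjoint blocks of size $w$ and use the same code $C \subseteq \mathcal{X}^{w}$ on each, it is enough to design a single block code that can correct any erasure pattern of weight at most $d$; equivalently, $C$ should have minimum Hamming distance at least $d+1$.

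To obtain such a code with many codewords, I would run the greedy Gilbert--Varshamov procedure on $\mathbb{F}_q^w$: start from an arbitrary word of $\mathcal{X}^w$ and keep adjoining words that lie outside the Hamming balls of radius $d$ around all previously chosen words. Each such ball contains exactly $V^w_d(q)$ points, so the process can only stop once the chosen set $C$ covers all of $\mathcal{X}^w$, giving
\begin{equation*}
|C| \cdot V^w_d(q) \;\ge\; q^w, \qquad \text{and thus} \qquad |C| \;\ge\; q^w / V^w_d(q),
\end{equation*}
with minimum distance at least $d+1$ by construction.

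Next I would verify that the repeated use of $C$ is a zero-error code for the finite-state erasure channel in the sense of Def.~\ref{def:c0}. For any two distinct codewords $c \neq c' \in C$, they disagree on at least $d+1$ coordinates, so any erasure pattern of Hamming weight $\le d$ leaves at least one coordinate of disagreement unerased. Therefore $c$ and $c'$ produce distinct received words, uniformly over every initial state $s_0$ and every admissible noise realisation. Plugging this into Def.~\ref{def:c0} at blocklength $n+1 = w$ yields
\begin{equation*}
C_0 \;\ge\; \frac{1}{w}\log|C| \;\ge\; 1 - \frac{1}{w}\log V^w_d(q),
\end{equation*}
which is the claimed bound.

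The only genuinely subtle point is the compatibility check in the first paragraph: one must be careful that the channel memory does not invalidate independent block-by-block decoding. This is fine because the inter-block sliding windows only tighten the set of admissible erasure patterns relative to the aligned-window bound; they never allow more than $d$ erasures inside any single aligned block. Apart from this observation, the argument is a direct instantiation of Gilbert--Varshamov in the erasure setting, and no new ideas or heavy machinery are required.
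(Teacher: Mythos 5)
Your proof is correct, and it shares the paper's key structural step but replaces the core counting argument with a different one. Like the paper, you first reduce the sliding-window channel to its aligned, disjoint-block relaxation (the paper calls this the \emph{consecutive-window} erasure channel and isolates the reduction as a separate proposition, arguing that the sliding-window constraint only shrinks the set of admissible erasure patterns, hence confusable input pairs). Where you diverge is in lower-bounding the size of a zero-error code for the lifted length-$w$ block: you run the greedy Gilbert--Varshamov procedure to produce an explicit code of minimum distance $d+1$ with $|C|\ge q^w/V_d^w(q)$ codewords and then argue directly that distance $d+1$ corrects any $d$ erasures, whereas the paper invokes Shannon's probabilistic lower bound $C_0 \ge -\log\min_{p}\, p^T\mathbf{A}\,p$ on the memoryless lifted channel and evaluates it at the uniform input distribution. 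The two computations coincide numerically because the confusability graph of the lifted erasure channel is regular (every input has exactly $V_d^w(q)$ inputs within Hamming distance $d$, itself included), so the uniform-input Shannon bound is exactly the sphere-covering count. Your route is arguably more self-contained and constructive --- it exhibits the code and the combinatorial reason it works --- while the paper's route is a one-line application of a known bound once the lifting is in place. One small simplification you could make: since $C_0$ in Def.~2 is a supremum over blocklengths, a single zero-error block of length $w$ already yields $C_0\ge\frac{1}{w}\log|C|$, so the discussion of repeated block-by-block use (and the associated memory-compatibility check) is not needed for the capacity bound, though it is not wrong.
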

\begin{table*}[t]
	\centering
	\caption{\small{Bounds on the zero-error capacity of the studied channels.}}
	\scalebox{1}{\begin{tabular}{ c | c | c | c} 
			\hline
			Channel & Type & Lower bound & Upper bound \\ \hline  
			\multirow{2}{*}{Finite-state erasure} &  General (Thm. \ref{thm:ec}) & $ 1-\tau-h_{ch} $ & $ 1-\tau$  \\ 
			&Sliding-window erasure  (Thm. \ref{thm:c0lec})& $1-(1/w)\log V^w_{d}(q) $ & $ 1-d/w $  \\ \hline
			
			\multirow{2}{*}{Finite-state additive noise} &  General  (Thm. \ref{thm:fanctp}) &  $ 1-2h_{ch} $ & $ 1-h_{ch} $  \\
			
			&Sliding-window symmetric  (Thm. \ref{thm:nssc0f}) & $1-(1/w)\log V^w_{2d}(q)  $ & $ 1-(d/w) \log(q-1) $   \\ \hline 
	\end{tabular}}
\end{table*} 
\begin{figure}[t]
	\centering
	\includegraphics[width=75mm]{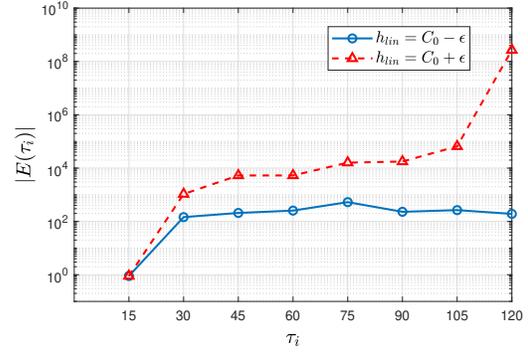}
	\caption{\small{State estimation errors over a $ (3,1) $ sliding-window erasure channel.}}
	\label{fig:NE}
\end{figure}
		\begin{proof}
			See Appendix \ref{app:swe}.
		\end{proof}
		Figure \ref{fig:cap_fse}(a) demonstrates the capacity bounds discussed so far, for a binary ($ q=2 $) sliding-window erasure channel. Note that the general topological entropy lower bound gives a tighter lower bound for small $ d/w $.	
\subsection{Numerical example}
Consider a scalar plant 
$
X(t+1)=a X(t)+V(t), |V(t)|<1, |X(0)|<1. 
$
Assume that the state can be measured with no noise, i.e., $ Y=X $. The state is estimated over a binary $ (3,1) $ sliding window erasure channel (see Fig. \ref{fig:transition}) for which $ \tau=1/3 $. For this channel $ C_0=2/3 $. The upper bound $ C_0 \leq 1 - \tau=2/3 $ is clear from Theorem~\ref{thm:ec}. Moreover, a zero-error code can be constructed that achieves this bound. Consider a block of length 3, denoted by $ b_1b_2b_3 $ at which the last bit serves as a parity check symbol such that $ b_3=b_1\oplus b_2$. Since at most one bit among these 3 bits may be erased, the erased bit can be recovered from other received bits. Therefore, the code yields zero-decoding error and, thus, $ C_0=2/3\approx 0.67 $.

According to Theorem~\ref{thm:finitem}, if  $ h_{lin}:=\log_2 |a|< C_0 $, bounded state estimation can be achieved. Moreover, if $ h_{lin}> C_0 $, no coder-estimator can estimate the state of the system. We consider two plants with different topological entropies. One of the plants has an entropy smaller than $ C_0 $, i.e., $ h_{lin}=C_0-\epsilon $, while the other one has a topological entropy larger than $ C_0 $, i.e., $ h_{lin}=C_0+\epsilon $ where $ \epsilon=0.1 $. We construct a coder-estimator that estimates the state with a known bounded estimation error. We use the achievable scheme from \cite[Ch. 8]{matveev2009estimation}. The details of the coding scheme are given in Appendix \ref{app:ne}.
Fig. \ref{fig:NE} shows that the magnitude of estimation error $ E(\tau_i):=X(\tau_i)-\hat{X}(\tau_i) $ for both plants. Note that the error remains bounded for the plant with $ h_{lin}<C_0 $ but grows with time for the plant with $ h_{lin}>C_0 $.

\section{State Estimation over Finite-state Additive Noise Channels} \label{sec:est_anc}
In this section, we investigate bounded state estimation of an LTI system over finite-state additive noise channels. The topological entropy of the channel appears in both necessary and sufficient conditions, reinforcing the links between topological entropies and the bounded state estimation.	
For finite-state additive noise channels, the conditions are as follows:

\begin{prop} 
	\label{prop:nssest} 
	Consider an LTI system in \eqref{lti1} satisfying conditions A1--A5. Assume that outputs are coded and estimated via a  finite-state additive noise channel~(Def. \ref{def:fanchannel}) with topological entropy $h_{ch}$. Then, uniformly bounded estimation errors can be achieved if
	\begin{align}
	h_{lin} + 2h_{ch} < 1 \label{2est1}.
	\end{align}
	Conversely, there exists a sequence of process and measurement noises for which the estimation error grows unbounded if
	\begin{align}
	h_{lin}+h_{ch} \label{2est2} > 1.
	\end{align}
\end{prop}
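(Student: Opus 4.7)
The plan is to mirror the proof of Proposition~\ref{prop:nseest}: first establish matching bounds on the zero-error capacity $C_0$ of a finite-state additive noise channel in terms of the channel's topological entropy $h_{ch}$, then combine them with Theorem~\ref{thm:finitem}. Concretely, the target is the two-sided estimate
\begin{equation*}
1-2h_{ch}\;\le\; C_0\;\le\;1-h_{ch},
\end{equation*}
which is exactly the bound listed as Thm.~fanctp in the summary table.

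For the upper bound on $C_0$, I would exploit Proposition~\ref{thm:out}. In an additive noise channel the set of outputs reachable from codeword $\mathbf{x}$ with initial state $s_0$ is the translate $\mathbf{x}\oplus\mathscr{V}(s_0)$, whose cardinality is at least $\alpha\lambda^n$. In a zero-error code these sets must be pairwise disjoint across codewords and initial states, so a sphere-packing bound inside $\mathcal{X}^n$ gives $|\mathcal{F}|\cdot\alpha\lambda^n\le q^n$, hence $C_0\le 1-h_{ch}$ after taking logarithms in base $q$ and letting the blocklength tend to infinity.

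For the lower bound I would use a Gilbert--Varshamov-style construction. Two inputs $\mathbf{x},\mathbf{x}'$ are confusable precisely when $\mathbf{x}\ominus\mathbf{x}'$ lies in the difference set $\mathscr{V}(s_0)\ominus\mathscr{V}(s_0')$ for some initial states $s_0,s_0'\in\mathcal{S}$. Proposition~\ref{thm:out} bounds each such difference set by $\beta^2\lambda^{2n}$, and taking the union over the at most $|\mathcal{S}|^2$ state pairs yields a confusion set of cardinality $O(\lambda^{2n})$. Greedy selection of mutually non-confusable codewords from $\mathcal{X}^n$ then delivers a zero-error code of size at least $q^n/\bigl(|\mathcal{S}|^2\beta^2\lambda^{2n}\bigr)$, so $C_0\ge 1-2h_{ch}$. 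The main obstacle I expect is controlling the dependence on the unknown initial state: the confusion count must be taken over all $s_0,s_0'$ without inflating the rate loss beyond $2h_{ch}$, and it is precisely the Perron--Frobenius estimate of Proposition~\ref{thm:out} (with $|\mathcal{S}|$-independent exponent) that makes this work asymptotically.

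With the capacity bound in hand the proposition is immediate from Theorem~\ref{thm:finitem}. If $h_{lin}+2h_{ch}<1$ then $h_{lin}<1-2h_{ch}\le C_0$, and the sufficiency clause of Theorem~\ref{thm:finitem} produces a coder-estimator with uniformly bounded error. Conversely, if $h_{lin}+h_{ch}>1$ then $h_{lin}>1-h_{ch}\ge C_0$, so the contrapositive of the necessity clause of Theorem~\ref{thm:finitem} rules out any coder-estimator that keeps the error uniformly bounded, which by definition guarantees the existence of noise and initial-condition realisations driving the estimation error to infinity.
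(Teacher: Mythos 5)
Your proposal is correct and follows essentially the same route as the paper: the paper likewise proves $1-2h_{ch}\le C_0\le 1-h_{ch}$ (Theorem~\ref{thm:fanctp}) by a Perron--Frobenius count of the confusable-input set around each codeword ($O(\lambda^{2n})$ via the noise-to-output bijection) followed by greedy selection of non-adjacent codewords, and then obtains the proposition directly from Theorem~\ref{thm:finitem}. The only cosmetic difference is in the upper bound, where the paper counts output sequences of a feedback code (thereby bounding $C_{0f}\ge C_0$) instead of your direct sphere-packing of the disjoint output sets of a non-feedback code; both arguments are valid and rest on the same cardinality estimate from Proposition~\ref{thm:out}.
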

\begin{rem}
	The inequality  $ h_{lin} + \alpha h_{ch} < 1 $ with  $ \alpha \in \{1,2 \} $ characterizes the bounded estimation condition: as long as the sum of uncertainties are small (below 1), we can achieve bounded estimation; otherwise, no such estimator can be constructed. Note that $ \alpha$ reflects the gap between achievability and converse.
\end{rem}
To prove {Proposition \ref{prop:nssest}, we first} investigate $ C_0 $ of the finite-state additive noise channel. We have the following result on bounding $ C_0 $ of these channels.
\begin{thm}\label{thm:fanctp}
	The zero-error capacity of a finite-state additive noise channel~(Def. \ref{def:fanchannel}) with topological entropy, $h_{ch}$, is bounded by
	\begin{align}
	1-2 h_{ch} \leq C_0 \leq 1-h_{ch}.
	\label{nsslmd}
	\end{align}
\end{thm}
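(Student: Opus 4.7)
The plan is to derive both inequalities directly from Proposition~\ref{thm:out} applied to the additive-noise structure. For any input $\mathbf{x}\in\mathcal{X}^{n+1}$ and initial state $s_0$, the additive law $y(t)=x(t)\oplus v(t)$ gives a bijection $\mathscr{Y}(s_0,\mathbf{x})=\mathbf{x}\oplus\llbracket V(0{:}n)|s_0\rrbracket$, so the output cardinality equals the number of admissible noise sequences from $s_0$ and is sandwiched between $\alpha\lambda^{n+1}$ and $\beta\lambda^{n+1}$ by Proposition~\ref{thm:out}.

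For the upper bound, I would observe that in any zero-error code $\mathcal{F}\subseteq\mathcal{X}^{n+1}$ the unions $\bigcup_{s_0\in\mathcal{S}}\mathscr{Y}(s_0,\mathbf{x})$ must be pairwise disjoint across distinct $\mathbf{x}\in\mathcal{F}$, because neither encoder nor decoder knows the initial state. Since each such union contains $\mathscr{Y}(s_0,\mathbf{x})$ for a fixed $s_0$, its cardinality is at least $\alpha\lambda^{n+1}$, while all unions sit inside $\mathcal{X}^{n+1}$. Therefore
\[
|\mathcal{F}|\,\alpha\,\lambda^{n+1}\;\leq\; q^{n+1}.
\]
Taking logarithms in base $q$, dividing by $n+1$, and letting $n\to\infty$ yields $C_0\leq 1-h_{ch}$.

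For the lower bound, I plan to construct a zero-error code by a difference-packing argument. Let $\mathcal{V}:=\bigcup_{s_0\in\mathcal{S}}\llbracket V(0{:}n)|s_0\rrbracket$ be the aggregate noise set, with $|\mathcal{V}|\leq|\mathcal{S}|\beta\lambda^{n+1}$ by Proposition~\ref{thm:out}. Two distinct codewords $\mathbf{x},\mathbf{x}'$ are confusable exactly when $\mathbf{x}\ominus\mathbf{x}'$ lies in the forbidden set $\Delta:=(\mathcal{V}\ominus\mathcal{V})\setminus\{\mathbf{0}\}$, whose size is at most $|\mathcal{V}|^2\leq|\mathcal{S}|^2\beta^2\lambda^{2(n+1)}$. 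Form the conflict graph on vertex set $\mathcal{X}^{n+1}$ whose edges join pairs with difference in $\Delta$; its maximum degree is at most $|\Delta|$, so a greedy independent-set selection produces a zero-error code with
\[
|\mathcal{F}|\;\geq\;\frac{q^{n+1}}{|\Delta|+1}\;\geq\;\frac{q^{n+1}}{|\mathcal{S}|^2\beta^2\lambda^{2(n+1)}+1}.
\]
Taking logarithms and dividing by $n+1$ gives $(\log|\mathcal{F}|)/(n+1)\geq 1-2h_{ch}-O(1/n)$, hence $C_0\geq 1-2h_{ch}$.

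The main obstacle is conceptual rather than computational: the converse is a one-sided volume bound that only inspects the output cloud of a single codeword, whereas achievability must preclude collisions between \emph{pairs} of codewords when \emph{both} noise realizations start from unknown states. This pair structure squares the noise-set bound, which is precisely the mechanism producing the factor-of-two gap between the two bounds and mirroring the $\alpha\in\{1,2\}$ discrepancy appearing in Proposition~\ref{prop:nssest}.
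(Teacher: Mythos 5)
Your proposal is correct. The lower bound is essentially the paper's argument in different clothing: the paper bounds the set $\mathscr{Q}(\mathbf{c})$ of inputs confusable with a codeword by $\big(\beta|\mathcal{S}|\lambda^n\big)^2$ using the bijectivity of $\mathbf{x}\mapsto\mathbf{x}\oplus\mathbf{v}$ (its Lemma on $|\mathcal{G}(s_0,\mathbf{y})|$) and then greedily selects pairwise non-adjacent codewords covering $\mathcal{X}^n$; your difference-set $\Delta=(\mathcal{V}\ominus\mathcal{V})\setminus\{\mathbf{0}\}$ with $|\Delta|\le|\mathcal{V}|^2$ and the greedy independent set in the conflict graph is the same counting, and your closing remark correctly identifies the squaring of the noise-set volume as the source of the factor of two. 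The upper bound, however, is where you genuinely diverge: the paper proves the stronger statement $C_{0f}\le 1-h_{ch}$ by showing that for a feedback code the map $(m,\mathbf{v})\mapsto\mathbf{y}$ is injective, so $M\,|\mathcal{V}(s_0,n)|\le q^{n+1}$, and then uses $C_0\le C_{0f}$; you instead give a direct output-space packing bound for non-feedback codes, noting that the total output clouds $\bigcup_{s_0}\mathscr{Y}(s_0,\mathbf{x})$ of distinct codewords must be pairwise disjoint and each has size at least $\alpha\lambda^{n+1}$ by Proposition~\ref{thm:out}. Your route is more elementary and suffices for the theorem as stated; the paper's route buys the additional information that even feedback cannot push the rate above $1-h_{ch}$ (which it notes is in fact the exact value of $C_{0f}$). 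One small point you share with the paper but should make explicit: since $C_0$ is a supremum over blocklengths and your per-$n$ bound carries a $+\log(1/\alpha)/(n+1)$ correction, you need superadditivity of zero-error codes (Fekete's lemma) to replace the supremum by the limit before concluding $C_0\le 1-h_{ch}$.
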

\begin{proof}
	See Appendix \ref{app:nsslw}.
\end{proof}
\begin{figure*}
	\subcaptionbox{}{\includegraphics[width=65mm]{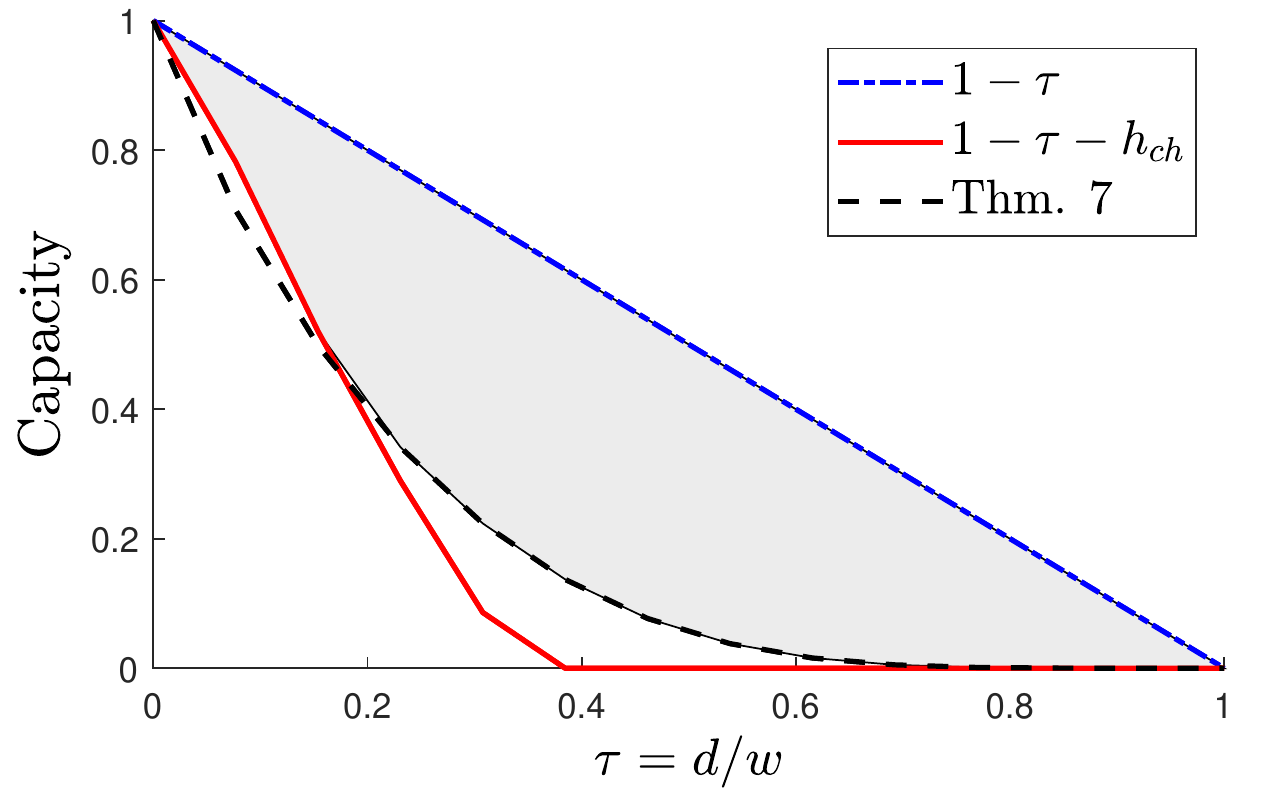}} \hspace{1mm}
	\subcaptionbox{}{\includegraphics[width=65mm]{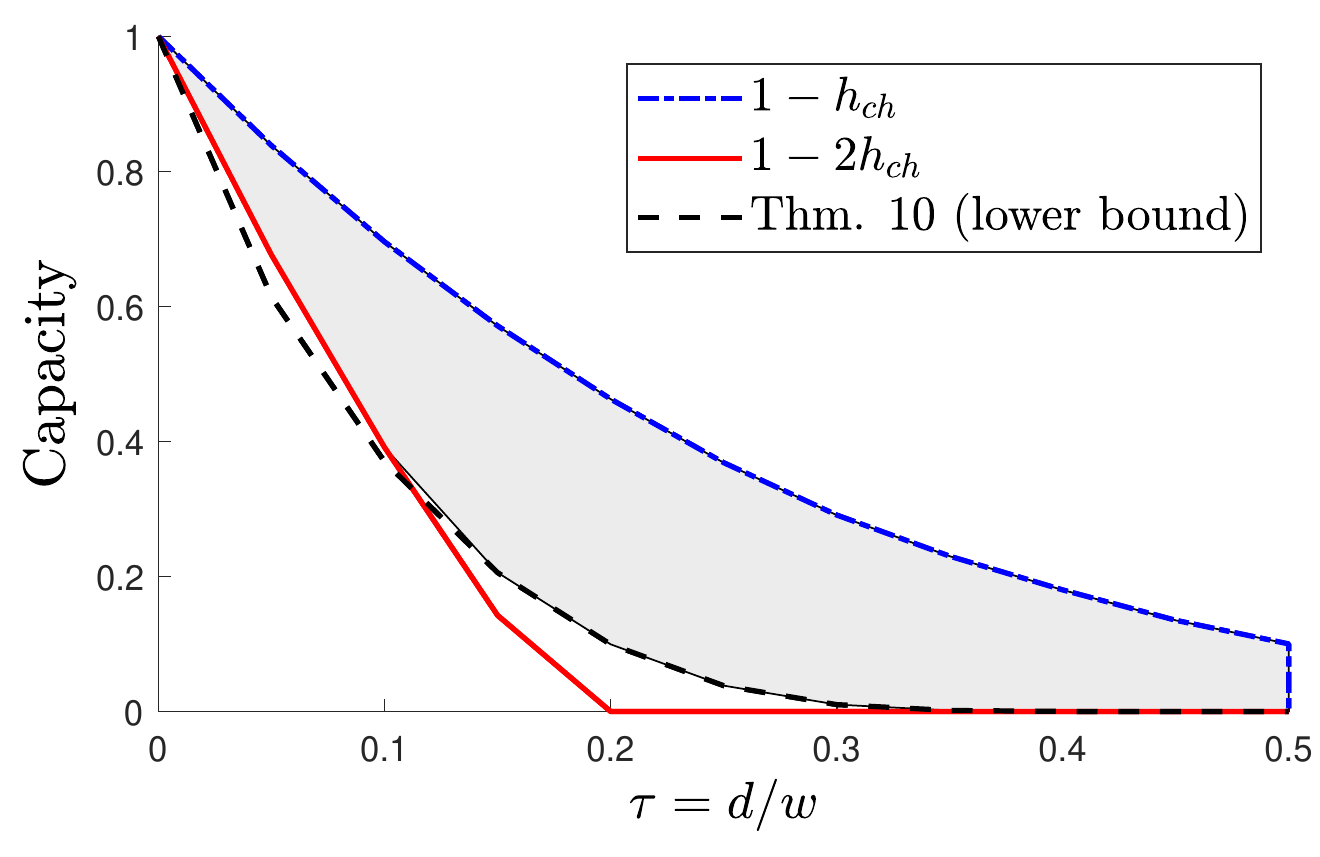}}
	\centering
	\caption{\small{Zero-error capacity bounds for (a) a binary sliding-window erasure channel and (b) a binary sliding-window symmetric channel. $ C_0 $ lies in the gray shaded area.}}
	\label{fig:cap_fse}
\end{figure*}
\begin{rem}
	The bounds in \eqref{nsslmd} show a clear relationship between zero-error capacity and topological entropy such that if there is a high uncertainty in the channel state transition graph, it leads to a linear reduction in zero-error capacity. In other words, if $ 1-C_0 $ considered as the {\em corruption rate} in the channel, then it is bounded as $ h_{ch} \leq 1-C_0 \leq 2h_{ch} $ which closely relates to the asymptotic growth rate of uncertainty in the channel.
\end{rem} 
\begin{rem}
	The proof of {Proposition} \ref{prop:nssest} follows from Theorems \ref{thm:finitem} and \ref{thm:fanctp}.
\end{rem}
In the sequel, the zero-error capacity of the sliding-window symmetric channel as an example of finite-state additive noise channel is discussed and compared with the general results of Theorem \ref{thm:fanctp}.
\subsection{Sliding-window symmetric channel}
In this subsection, the zero-error capacity of the sliding-window additive noise channel as an example of finite-state additive noise channel~(Section \ref{swcm}) is discussed and new bounds are derived.	
The following Theorem gives the bounds for the sliding-window symmetric channel.	
\begin{thm} \label{thm:nssc0f} The zero-error capacity of a sliding-window symmetric channel which have up to $ d $ errors in each sliding window of size $ w $ is bounded by
	\begin{align}
	1-\frac{1}{w}\log V^w_{2d}(q) \leq C_{0} \leq 1-\frac{d}{w} \log(q-1).\label{nssc0f}
	\end{align}
\end{thm}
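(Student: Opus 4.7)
The plan is to establish the two inequalities by independent arguments. The upper bound will follow from an adversarial counting argument that directly bounds the size of any zero-error code, while the lower bound will come from a block code constructed via the Gilbert--Varshamov existence theorem. Neither step needs to go through the general topological-entropy machinery of Theorem~\ref{thm:fanctp}, although the upper bound can equivalently be obtained that way by showing $h_{ch}\geq (d/w)\log(q-1)$ via the same periodic pattern.

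For the upper bound, fix any blocklength $n=kw$ and consider the periodic error pattern that places errors at exactly the positions $\{jw+1,jw+2,\dots,jw+d : 0\leq j< k\}$. Because any $w$ consecutive integers hit each residue class modulo $w$ exactly once, every sliding window of length $w$ contains precisely $d$ of these positions; hence this pattern is a valid noise realization for the channel. At each of the $kd$ chosen error positions the adversary may independently select any of the $q-1$ nonzero elements of $\mathcal{X}$ as the noise value, and since the channel is additive, distinct noise realizations yield distinct outputs. Thus for every fixed input sequence, its output set contains at least $(q-1)^{kd}$ sequences. Any zero-error code $\mathcal{F}$ of blocklength $n$ must have pairwise-disjoint output sets, so $|\mathcal{F}|(q-1)^{kd}\leq q^{kw}$. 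Taking base-$q$ logarithms, dividing by $n$, and then taking the supremum in Def.~\ref{def:c0} gives $C_0\leq 1-(d/w)\log(q-1)$.

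For the lower bound, invoke the Gilbert--Varshamov bound over the alphabet $\mathcal{X}$: by the standard greedy construction, there exists a block code $\mathcal{C}\subset\mathcal{X}^w$ with minimum Hamming distance at least $2d+1$ and
\[
|\mathcal{C}|\;\geq\;\frac{q^w}{V^w_{2d}(q)}.
\]
Such a code corrects up to $d$ symbol errors per block. Since every channel block of length $w$ is itself a sliding window, it carries at most $d$ errors by hypothesis, and therefore is decoded without error by the nearest-codeword rule. Transmitting a stream of independent codewords from $\mathcal{C}$ thus realizes a zero-error code of blocklength $n=kw$ for arbitrary $k$, with rate at least $(1/w)\log|\mathcal{C}|\geq 1-(1/w)\log V^w_{2d}(q)$. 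Taking the supremum over $k$ and $\mathcal{C}$ yields the claimed lower bound.

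The main obstacle is verifying that the periodic error pattern used in the upper bound really is admissible, and that the concatenated block code remains zero-error even though actual noise patterns can straddle block boundaries. Both points reduce to the observation that the channel constraint is imposed on \emph{every} sliding window: for the adversary, it suffices to check that the periodic pattern meets this constraint at every shift, which the residue-class argument above handles; for the decoder, it suffices that each non-overlapping $w$-block, being itself a window, sees at most $d$ errors, so no additional boundary analysis is needed. The only remaining point to double-check is that the greedy Gilbert--Varshamov argument does not require $\mathcal{X}$ to carry a field structure---it does not, as the proof only uses the covering bound on Hamming balls.
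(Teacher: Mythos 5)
Your proof is correct, but it takes a genuinely different route from the paper on both sides of the inequality. For the upper bound, the paper does not use a direct counting argument: it bounds the zero-error \emph{feedback} capacity $C_{0f}$ (which dominates $C_0$) by running the Zhao--Permuter dynamic program \eqref{wit} on a genie-aided version of the channel with a coarsened state representation, showing each step contributes a worst-case gain of $q$ or $q/(q-1)$ and invoking the maximal-ratio machinery of Lemmas \ref{lem:maxer}--\ref{lem:miner} with $\tau=d/w$. Your periodic fooling-pattern argument --- exhibiting, for every codeword, at least $(q-1)^{kd}$ distinct attainable outputs and using disjointness of output sets --- is more elementary and self-contained; what it gives up is that the paper's version simultaneously establishes the bound for $C_{0f}$ and ties into the DP framework used elsewhere. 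Your residue-class check that the periodic pattern satisfies the constraint in \emph{every} window is the right verification, and the injectivity of $\mathbf{v}\mapsto\mathbf{x}\oplus\mathbf{v}$ justifies the count; two small points you gloss over are that you should fix the error-free initial state to legitimize the pattern near $t=0$, and that blocklengths not divisible by $w$ require either a floor-function correction or the superadditivity argument the paper invokes to replace $\sup$ by $\lim$. For the lower bound, the paper first passes to a lifted, memoryless \emph{consecutive-window} channel (Proposition \ref{swcwsl2}) and then applies Shannon's probabilistic lower bound \eqref{lbnd} with the uniform input distribution to its confusability graph, whose row sums are $V^w_{2d}(q)$; you instead construct an explicit Gilbert--Varshamov code of minimum distance $2d+1$ and decode each aligned $w$-block by nearest codeword, observing that every aligned block is itself a window and hence carries at most $d$ errors. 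The two arguments yield the identical quantity $q^w/V^w_{2d}(q)$ --- Shannon's bound under the uniform distribution is essentially an averaged form of the greedy packing count --- and your observation that no boundary analysis is needed actually streamlines the paper's two-proposition detour through the consecutive-window channel. Your remark that no field structure on $\mathcal{X}$ is needed is also correct and consistent with the paper's additive-group formulation \eqref{fan}.
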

Moreover, if $ d \geq w/2 $, then $ C_{0}=0 $.
		\begin{proof}
			See Appendix \ref{app:sws}.
		\end{proof}
		\begin{rem}
	The upper bound in \eqref{nssc0f} can be very loose for small alphabet sizes such that for a binary channel, it is the trivial bound of $ C_{0} \leq 1 $.
\end{rem}

The results of Theorems \ref{thm:ec}-\ref{thm:nssc0f} are summarized in Table I. Fig. \ref{fig:cap_fse} (b) shows the capacity bounds, for a binary sliding-window symmetric channel. Again, the general topological entropy lower bound in \ref{cor:swtb} gives a tighter lower bound for small $ d/w $. The upper bound in \eqref{nssc0f} in this case is equal to 1.	
\section{Conclusion} \label{sec:conclusions}
State estimation of linear time-invariant discrete-time systems over a class of finite-state channels was considered. The zero-error capacity of such channels is characterized in terms of nonstochastic information, leading to a necessary and sufficient condition for having uniformly bounded estimation errors. Two classes of finite-state channels are then described, namely finite-state erasure and additive noise channels. Bounds for the zero-error capacity of these channels were derived using results from feedback capacity and topological entropy theory. These bounds were translated to uniformly bounded state estimation over such channels. Interestingly, the results show that for finite-state uncertain channels, having strictly positive error-free communication rates, uniformly bounded estimation is possible. This contrasts sharply with the impossibility of  almost surely bounded estimation using standard stochastic models.

{Future work includes finding conditions in which a performance metric is guaranteed when estimating the states of a system over a channel. Moreover, Theorem \ref{thm:finitem} suggests that a larger range of unstable systems can be estimated with bounded errors over a channel modeled with memory compared to a memoryless worst-case version of the channel. Therefore, further analysis is needed regarding the modeling of channels for control purposes. Another direction is studying} the uniform stability of linear control systems via finite-state channels.

		\appendices
		\section{Proof of Theorem \ref{thm:fci}~($C_0$ via maximin information)} \label{app:cfi}
		By Def. \ref{def:c0}, $ {\mathscr{D}(n)}\subseteq \mathcal{X}^{n+1}$  is the set of all block codes of length $ n+1 $ that yield zero decoding errors for any channel noise sequence and initial state of the channel. In other words,
\begin{align}
\begin{split}
{\mathscr{D}(n)}= \big\{  \mathcal{F} \in  {\tilde{\mathscr{D}}(\mathcal{X}^{n+1})} : \forall  \, y&(0:n) \in \mathcal{Y}^{n+1},\\
&  | \mathcal{F}\cap \mathcal{G} (y(0:n)) |\leq 1 \big\} ,
\end{split} \label{zec}
\end{align}
where $ {\tilde{\mathscr{D}}(\mathcal{X}^{n+1})} $ is the family of all subsets of $ \mathcal{X}^{n+1} $ and where for each $y(0:n)\in\mathcal{Y}^{n+1}$,  the (set-valued) reverse transition function $ \mathcal{G}(y(0:n))$ gives the set of all sequences in $\mathcal{X}^{n+1}$ that could have produced an output $y(0:n)$. For any $y(0:n)\in \llbracket Y(0:n) \rrbracket$, we have 
\begin{align}
\llbracket X(0:n)|y(0:n)\rrbracket & 
= \llbracket X(0:n)\rrbracket \cap \mathcal{G}(y(0:n)). \label{revG}
\end{align} 
Since $ \llbracket X(0:n)|Y(0:n)\rrbracket_* $ is a partition of $ \llbracket X(0:n) \rrbracket $, thus choosing a single input in each partition, it can be used as a zero-error code. In other words,
\begin{align}
\big|\llbracket X(0:n&)|Y(0:n)\rrbracket_*\big| \nonumber\\
&=\max_{\mathcal{F} \in {\tilde{\mathscr{D}}(\mathcal{X}^{n+1})}: \forall \mathcal{C} \in \llbracket X(0:n)|Y(0:n)\rrbracket_*, |\mathcal{F} \cap \mathcal{C}|\leq 1}\,|\mathcal{F}|\nonumber \\
&\stackrel{(a)}{\leq} \max_{\mathcal{F} \in {\tilde{\mathscr{D}}(\mathcal{X}^{n+1})}: \forall \mathcal{B} \in \llbracket X(0:n)|Y(0:n)\rrbracket, |\mathcal{F} \cap \mathcal{B}|\leq 1}|\mathcal{F}|\nonumber \\
&\stackrel{\eqref{revG}}{=}\max_{\mathcal{F} \in {\tilde{\mathscr{D}}(\mathcal{X}^{n+1})}: \forall y(0:n) \in \llbracket Y(0:n) \rrbracket, |\mathcal{F} \cap \mathcal{G}(y(0:n))|\leq 1}|\mathcal{F}|\nonumber \\
&\stackrel{\eqref{zec}}{\leq} \quad \max_{\mathcal{F} \in {\mathscr{D}(n)}} |\mathcal{F}|\label{supf} \\
&\,\leq \quad q^{C_0(n+1)}. \nonumber 
\end{align}
Therefore, taking logarithm and dividing both sides by $ n+1 $,
\begin{align}
\frac{1}{n+1} I_*[X(0:n);Y(0:n)] \leq C_0,\quad \forall n \in \mathbb{N}_0. \label{minmaxinfoa}
\end{align}
Note that (a) follows from the fact that each partition is expressible by union of some $ \mathcal{B} = \llbracket X(0:n)|y(0:n)\rrbracket)$ \cite[Lemma 3.1]{nair2013nonstochastic}. 
Next, it is shown that $ \forall n \in \mathbb{N}_0 $, there is an uncertain variable $ X(0:n) $ for which \eqref{supf} is an equality. 
Let $\mathcal{F}^*$ be a maximum-cardinality set in ${\mathscr{D}(n)}$, and set $\llbracket X(0:n)\rrbracket = \mathcal{F}^*$.
By construction for any zero-error code, no point in $ \mathcal{F}~=~\llbracket~X(~0~:~n~)~\rrbracket $ is $ \llbracket X(0:n)|Y(0:n)\rrbracket $-overlap connected. Thus the overlap partition $ \llbracket X(0:n)|Y(0:n) \rrbracket_* $ is a family of $ |\mathcal{F}^*|=|\llbracket X(0:n)\rrbracket | $ singletons, ensuring equality in \eqref{supf}. Taking logs and then a supremum over $n\in\mathbb{N}_0$ completes the proof.

		\section{Proof of Theorem \ref{thm:finitem}~(bounded estimation over finite-state uncertain channels)} \label{app:e}
		Suppose $ D(t)=\zeta (t,Y(0,t)) \in \mathcal{X}, t \in \mathbb{N}_0 $ be the channel's input where $ \zeta $ is an encoder operator. Each symbol $ D(t) $ is then transmitted over the channel. The received symbol $ Q(t) \in \mathcal{Y} $ is decoded and a causal prediction $ \hat{X}(t+1) $ of $ X(t+1) $ is produced by means of another operator $ \eta $ as $\hat{X}(t+1)=~\eta(t, Q(0:t)) \in \mathbb{R}^{n_x}, \,\hat{X}(0)=0$. We denote the estimation error as $ E(t):=X(t)-\hat{X}(t) $. 
		
		{\em 1) Necessity}:
		Assume a coder-estimator achieves uniform bounded estimation error. By change of coordinates, it can be assumed that $ A $ matrix is in {\em real Jordan canonical form} which consists of $ \varrho $ square blocks on its diagonal, with the $ j $-th block $ A_j \in \mathbb{R}^{n_j\times n_j} ,\, j=1,\dots,\varrho$. Let $ X_j(t),\hat{X}_j(t),E_j(t) \in \mathbb{R}^{n_j} $ and so on, be the corresponding $ j $-th component.
		Let $ \kappa \in \{1,\dots,n_x\} $ denote the number of eigenvalues with magnitude larger than $ 1 $, including repeats. From now on, we will only consider the unstable subsystem, as the stable part plays no role in the analysis. 
		By definition, uniformly bounded errors are achieved with any uniformly bounded $ \llbracket V(t) \rrbracket $ and $ \llbracket W(t) \rrbracket $, and any $ \llbracket X(0) \rrbracket $ contained in some $ \mathbf{B}_l $.  So set $\llbracket V(t)\rrbracket = \llbracket W(t)\rrbracket=\{0\}  $ and let $\llbracket X(0) \rrbracket $ be constructed as follows. Pick $
		\epsilon \in (0, 1-\max_{i:\lambda_i|>1} |\lambda_i|^{-1})$,
		arbitrary $ \nu \in \mathbb{N} $, and divide the interval $ [-l,l] $ on the $ i $-th axis into $ k_i:=\lfloor|(1-\epsilon)\lambda_i|^{\nu}\rfloor, \, i \in \{1,\dots,\kappa\}  $
		equal subintervals of length $ 2l/k_i $. Let $ p_i(s),\, s=\{1,\dots,k_i \}$ denote the midpoints of the subintervals and inside each subinterval construct an interval $ \mathbf{I}_i(s) $ centered at $ p_i(s) $ with a shorter length of $ l/k_i $. A hypercube family is defined as
		\begin{align}
			\mathscr{H}&=\bigg\{\prod_{i=1}^{\kappa} \mathbf{I}_i(s_i) :  s_i \in\{1,\dots,k_i\}, i \in \{1,\dots,\kappa\}\bigg\}, \label{hyper}
		\end{align}
		in which any two hypercubes are separated by a distance of $ l/k_i $ along the $ i $-th axis for each $ i \in \{1,\dots,\kappa\}$ (see Fig. \ref{fig:eq48}). Finally, set $  \llbracket X(0)\rrbracket =\cup_{\mathbf{L}\in \mathscr{H}} \mathbf{L} \subset \mathbf{B}_l \subset \mathbb{R}^{\kappa}$.
		\begin{figure}[t]
			\centering
			\includegraphics[width=50mm]{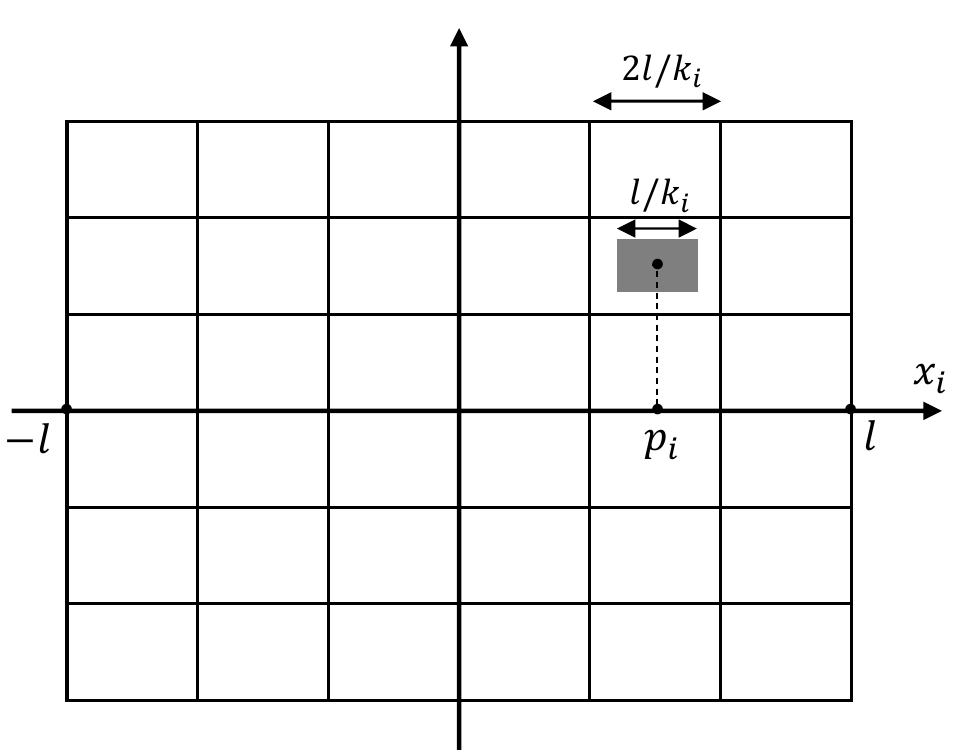}
			\caption{\small{The structure of each hypercube $ \mathbf{L} \in \mathscr{H} $ in \eqref{hyper}.}}
			\label{fig:eq48}
		\end{figure}	
		
		Let diam($ \cdot $) denote the set diameter under the $ l_{\infty} $ norm and given the received sequence $ q(0:t-1) $, we have
		\begin{align}
			\text{diam}  &\llbracket E_j(t)\rrbracket \geq \text{ diam} \llbracket E_j(t)| q(0:t-1)\rrbracket \label{condi}\\
			=&\text{ diam} \llbracket X_j(t)-\eta_j(t, q(0:t-1))| q(0:t-1)\rrbracket \nonumber\\
			\geq&  \text{ diam} \llbracket  A^t_jX_j(0)| q(0:t-1)\rrbracket \label{condi2}\\
			\geq& \, \sup_{u,v \in \llbracket  X_j(0)| q(0:t-1)\rrbracket}\frac{\norm{A^t_j(u-v)}_2}{\sqrt{n_x}}\nonumber\\
			\geq& \, \sup_{u,v \in \llbracket  X_j(0)| q(0:t-1)\rrbracket}\frac{\sigma_{min}(A^t_j)\norm{u-v}_2}{\sqrt{n_x}}\nonumber\\
			\geq& \,\, \sigma_{min}(A^t_j)\frac{\text{diam} \llbracket X_j(0)| q(0:t-1)\rrbracket}{\sqrt{n_x}}\label{diamf},
		\end{align}
		where $ \sigma_{min}(\cdot) $ denotes smallest singular value. \eqref{condi} holds since conditioning reduces the range \cite{nair2013nonstochastic}. Note that \eqref{condi2} follows from the fact that translating does not change the range. 	Using Yamamoto identity \cite[Thm. 3.3.21]{roger1994topics}, $ \exists t_{\epsilon} \in \mathbb{N}_0 $ such that $ \forall t\geq t_{\epsilon} $ the following holds
		\begin{align}
			\sigma_{min}(A^t_j) \geq (1-\frac{\epsilon}{2})^t |\lambda_{min}(A_j)|^t, \;j=1,\dots,p. \label{yamam}
		\end{align}
		By bounded state estimation error hypothesis $ \exists \phi>0  $, s.t. $ \phi \geq \sup \llbracket \norm{E(t)}\rrbracket
		\geq \sup \llbracket \norm{E_j(t)}\rrbracket\geq 0.5 \,\text{diam}\sup \llbracket E_j(t)\rrbracket $. Therefore,
		\begin{align}
			\phi \geq& \bigg((1-\frac{\epsilon}{2})|\lambda_{min}(A_j)|\bigg)^t\frac{\text{diam} \llbracket X_j(0)| q(0:t-1)\rrbracket}{2\sqrt{n_x}}.
		\end{align}
		Now, we show that for large enough $ \nu $, the hypercube family $ \mathscr{H} $	\eqref{hyper} is an $ \llbracket X(0)| Q(0:\nu-1)\rrbracket $-overlap isolated partition of $ \llbracket X(0)\rrbracket $. By contradiction, suppose that $ \exists \mathbf{L}\in \mathscr{H} $ that is overlap connected in $ \llbracket X(0)| Q(0:\nu-1)\rrbracket $ with another hypercube in $ \mathscr{H} $. Thus there exists a conditional range $ \llbracket X(0)| q(0:\nu-1)\rrbracket $ containing both a point $ u_j\in \mathbf{L} $ and a point $ v_j $ in some $ \mathbf{L}' \in \mathscr{H}\backslash \mathbf{L}$. Henceforth
		\begin{align*}
			\norm{u_j-v_j}\leq& \text{ diam}\llbracket X_j(0)| q(0:\nu-1)\rrbracket\nonumber\\
			\leq& \frac{2\sqrt{n_x}\phi}{((1-{\epsilon /2})|\lambda_{min}(A_j)|)^{\nu}} \;j=\{1,\dots,p\}, \nu \geq t_{\epsilon}.
		\end{align*}
		Notice that by construction any two hypercubes in $ \mathscr{H} $ are separated by a distance of $ l/k_i $, which implies
		\begin{align*}
			\norm{u_j-v_j} \geq \frac{l}{k_i} &=\frac{l}{\lfloor(1-\epsilon)|\lambda_i|\rfloor^{\nu}} \geq \frac{l}{|(1-\epsilon)|\lambda_{min}(A_j)||^{\nu}}
		\end{align*}
		The right hand side of this equation would exceed the right hand side of \eqref{yamam}, when $ \nu  $ is large enough that \[ \big( \frac{1-\epsilon/2}{1-\epsilon}\big)^{\nu}> 2\frac{\sqrt{n_x}\phi}{l}, \] yielding a contradiction.		
		Hence, for large enough $ \nu $, no two sets of $ \mathscr{H} $ are $ \llbracket X(0)| Q(0:\nu-1)\rrbracket $-overlap connected. So,
		\begin{align}
			I_*[X(0):&Q(0:\nu-1)]=\log|\llbracket X(0)| Q(0:\nu-1)\rrbracket_*|\nonumber\\
			&\geq \log |\mathscr{H}|\nonumber\\
			&=\log \bigg(\prod_{i=1}^{\kappa} k_i=\prod_{i=1}^{\kappa}\lfloor |(1-\epsilon)\lambda_i|^{\nu} \rfloor\bigg)\nonumber\\	
			&\geq \log \bigg( \prod_{i=1}^{\kappa} 0.5 |(1-\epsilon)\lambda_i|^{\nu}\bigg) \label{half}\\
			&=\nu \bigg( \kappa \log(1-\epsilon) -\frac{\kappa}{\nu}\log 2+\sum_{i=0}^{\kappa}\log|\lambda_i| \bigg),\label{istar}
		\end{align}
		where \eqref{half} holds since $ \lfloor x \rfloor > x/2, \forall x > 1$. Furthermore, condition A3 implies a Markov chain, i.e., $ X(0) \leftrightarrow S(0 :\nu) \leftrightarrow Q(0:\nu) $. Hence,
		\begin{align*}
			I_*[X(0);Q(0:\nu-1)] &\leq I_*[X(0:\nu-1);Q(0:\nu-1)] \\
			&\leq I_*[S(0:\nu-1);Q(0:\nu-1)] \\
			&\leq \nu C_0.
		\end{align*}
		Considering this and \eqref{istar} yields
		\begin{align*}
			C_0 \geq {\kappa}\log(1-\epsilon) + \sum_{i=0}^{\kappa}\log|\lambda_i|-\frac{\kappa}{\nu}\log 2.
		\end{align*}
		By letting $ \nu \rightarrow \infty $ and the fact that $ \epsilon $ can be made arbitrarily small yields $ C_0\geq h_{lin} $.
		
		{\em 2) Sufficiency}: 
		By \eqref{estdis} and \eqref{c0def},
		$ \forall \delta \in (0,C_0-h_{lin}), \, \exists t^*,\nu >0 \,(\nu> t^*) $ and a zero-error code book $ \mathcal{F} \subseteq\mathcal{X}^{\nu} $ such that
		\begin{align}
			h_{lin}<C_0 -\delta \leq \frac{1}{\nu} \log|\mathcal{F}|.\label{pc0}
		\end{align}
		Down-sample \eqref{lti1} by $ \nu $, the equivalent LTI system is
		\begin{align}
			X((k+1)\nu)&=A^{\nu}X(k\nu)+U'_{\nu}(k)+V'_{\nu}(k), \label{ltis}\\
			Y(k\nu)&=CX(k\nu)+W(k\nu), \, k\in \mathbb{N}_0, \label{ltio}
		\end{align}
		where the accumulated control term 
		\[U'_{\nu}(k)=\sum_{i=0}^{r} A^{\nu-1-i}BU(k\nu +i) \]
		and disturbance term 
		\[V'_{\nu}(k)=\sum_{i=0}^{r} A^{\nu-1-i}V(k\nu +i) \] 
		can be shown to be uniformly bounded over $  k \in \mathbb{N}_0 $ for each $ r \in [0:\nu-1] $. 
		Note that for a finite-state uncertain channel, a zero-error code $ \mathcal{F} $ can be used in consecutive blocks yielding no error at the decoder. This follows from considering all possible initial states of the channel, i.e., $ \llbracket S(0) \rrbracket =\mathcal{S} $ which implies $ \llbracket Y(k\nu:k\nu+r) |x(0:r)\rrbracket  \subseteq \llbracket Y(0:r) |x(0:r)\rrbracket$. This is because at the start of subsequent blocks $ \llbracket S(k\nu) \rrbracket \subseteq \mathcal{S} $.
		By \eqref{pc0}, $ |\mathcal{F}| $ codewords can be transmitted without error for which satisfies $ \log|\mathcal{F}|> \nu h_{lin} $. By the ``data rate theorem" for LTI systems with bounded disturbances over error-less channels (see e.g. \cite{tatikonda2004control2}) then there exists a coder-estimator for the equivalent LTI system of \eqref{ltis}-\eqref{ltio} with uniformly bounded estimation error for $ k \in \mathbb{N}_0 $.
		It readily gives the uniform boundedness for every $ t \in \mathbb{N}_0 $ of the \eqref{lti1}.
		
		\section{Proof of Lemma \ref{lem:maxer}}\label{app:a}
		Let $n$ be the length of the walk\footnote{The number of elements in a walk is called the {\em length} of the walk and denoted by $ \#\varpi$.} through the strongly connected state transition graph. If $ n>|\mathcal{S}| $ then according to the pigeonhole principle, there is at least one repeated vertex, and therefore, the walk must contain a cycle. Hence, any walk $ \varpi $ of length $n$ goes through some non-repeated vertices (forming an acyclic {\em path}) and some cycles. In other words, the walk contains a path and some cycles. The length of $ \varpi $ is the sum of the two sublists length, i.e.,
		\begin{align}
			n =\#\mathbf{\varpi} &= \#\varpi_{nc} + \# \varpi_c, \label{wakc}
		\end{align}
		where $ \varpi_{nc} $ is the list of edges passing through non-repeated vertices and $ \#\varpi_c := \sum_{i}\#\varpi_c(i)$ is the sum of the length of all the visited cycles (including repeated cycles) indexed by $ i $. A sample walk is shown in Fig. \ref{fig:walks} where any cycle is simplified by a self-loop. 
		The walk length through non-repeated vertices can not exceed the total number of vertices, i.e. $\#\varpi_{nc} \leq |\mathcal{S}|$.	
		The reason is that if there was a state visited twice then the whole path can be considered as a cycle. As an example, in Fig. \ref{fig:walks}, $ s_i=s_j $ then the walk would be another cycle (colored blue). Since length of the walk associated with cycles, as a sublist is smaller than the whole walk, we have
		\begin{align}
			n =\# \varpi 	&= \# \varpi_{nc} + \# \varpi_c \geq \# \varpi_c = \sum_{i}\#\varpi_c(i), \label{cwak}
		\end{align}
		where the sum is over all the cycles in the walk. Next, we show that the number of erasure-edges in every walk is upper bounded by $E(\varpi) \leq \tau n+ |\mathcal{S}|$.
		For walks through non-repeated vertices, we have $ E(\varpi_{nc}) \leq \#\varpi_{nc} = |\mathcal{S}| $ (visited erasure edges are a sublists of the walk). For walks within cycles, 
		\begin{align}
			E(\varpi_c) &= \sum_{i} \#\varpi_c(i) \tau_i \leq \tau \sum_{i} \#\varpi_c(i) \leq \tau n.	\label{5-4}	
		\end{align}
		The first inequality follows by Def. \ref{def:mr}. Therefore, considering \eqref{wakc} and \eqref{5-4} we get $E(\varpi) = E(\varpi_{nc})+E(\varpi_c) \leq \tau n + |\mathcal{S}|$.
		\begin{figure}[t]
			\centering
			\begin{tikzpicture}[->, >=stealth', auto, semithick, node distance=5cm]
				\tikzstyle{every state}=[fill=white,draw=black,thick,text=black,scale=0.3]
				\tikzstyle{surround} = [circle,fill=cblue!70,draw=black]
				\node[state]   (S1)       {};
				\node[state]    (S2)[right of=S1]   {};
				\node[state]    (S3)[right of=S2, label=below:$s_i$]   {};
				\node[state]    (S4)[right of=S3]   {};
				\node[state]    (S5)[right of=S4, label=below:$s_j$]   {};
				\node[state]    (S6)[right of=S5]   {};
				\node[state]    (S7)[above right of=S2]   {};
				\node[state]    (S8)[above left of=S2]   {};
				\node[state]    (S9)[above left of=S7]   {};
				\node[state]    (S10)[above right of=S4]   {};
				\node[state]    (S11)[above left of=S4]   {};
				\node[state]    (S12)[above left of=S10]   {};
				\path
				(S1) edge[] node [right] {} (S2)
				(S2) edge[bend right,looseness=1] node [] {} (S7)
				edge[] node [right] {} (S3)
				(S3) edge[] node [] {} (S4)
				(S4) edge[bend right] node [right] {} (S10)
				edge[] node [right] {} (S5)
				(S5) edge[] node [right] {} (S6)
				(S7) edge[bend right, dotted] node [] {} (S9)
				(S9) edge[bend right] node [] {} (S8)
				(S8) edge[bend right] node [] {} (S2)
				(S10) edge[bend right, dotted] node [] {} (S12)
				(S12) edge[bend right] node [] {} (S11)
				(S11) edge[bend right] node [] {} (S4);
				\begin{pgfonlayer}{background}
					\node[surround,yshift=22pt,looseness=20,scale=.9] [fit =(S3)(S4)(S5)] {};
				\end{pgfonlayer}
			\end{tikzpicture}
			\caption{\small{Combination of non-repeated vertices and cycles in a walk through a strongly connected graph. All non-repeated states are distinct, otherwise, another cycle exists that include all the walk done in between; e.g. if in this graph $ s_i=s_j $, then the blue circle corresponds to a bigger cycle, containing the walk starting from $ s_i $ and ending in $ s_j $. }}
			\label{fig:walks}
		\end{figure}
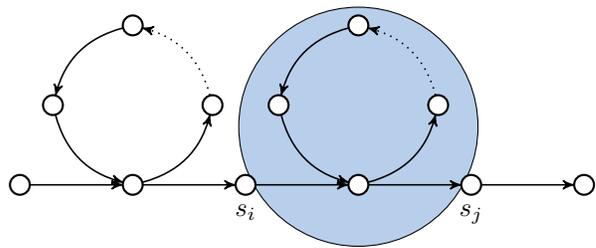
	
		\section{Proof of Lemma \ref{lem:miner}} \label{app:b}
		Consider the set of states denoted by $ \mathcal{S}_m $, that build the cycle with maximal ratio $ \tau=e/l $; e.g. in Fig. \ref{fig:transition}, $ \mathcal{S}_m=\{s_2,s_3,s_4\} $ shapes the cycle with maximal ratio. Any walk on this cycle has at least $ \lfloor{n/l}\rfloor e $ erasure edges. In other words,
		\begin{align*}
		E(s_0,\mathbf{y}) \geq \lfloor{n/l} \rfloor e > \tau n - l,\, s_0 \in \mathcal{S}_m. 
		\end{align*}
		
		Furthermore, since in a strongly connected graph any state is reachable within $ |\mathcal{S}| $ steps, then starting from $\forall s_0 \in \mathcal{S}$ we have $E(s_0,\mathbf{y}) > \tau n - l - |\mathcal{S}|$.
		\section{Proof of Theorem \ref{thm:nsec0f}~($ C_{0f} $ of finite-state erasure channel)} \label{app:c}  
		The set of states, $ \mathcal{S} $ of the channel can be partitioned into two subsets $ \mathcal{S}_I $ and $ \mathcal{S}_{II} $, where, $ \mathcal{S}_I $ is the set of states that have two outgoing edges, one error-free transmission and the other one with erasure, and $ \mathcal{S}_{II} $ is the set of states that have a single error-free outgoing edge; e.g. in Fig. \ref{fig:transition}, $ \mathcal{S}_I=\{s_1,s_4\}  $ and $ \mathcal{S}_{II}=\{s_2,s_3\} $. {We obtain $ C_{0f} $ of the channel using the result of \cite{zhao2010zero}, in which it is shown that
	\begin{align}
	C_{0f}= \liminf_{k \rightarrow \infty} \frac{1}{k} \min_{s \in \mathcal{S}} \log W(k,s)\label{c0Fs},
	\end{align}
	where $ \forall s \in \mathcal{S} $ and for $ k \in \mathbb{N},\, W(k,s) $ is a mapping $ \mathbb{Z}^+ \times \mathcal{S} \mapsto \mathbb{R}^+ $ and is obtained iteratively (with initial value $ W(0,s)=1, \forall s \in \mathcal{S} $) from the dynamic programming (DP) equation in
	\begin{align}
	\begin{split}
	W(k,s)= \max_{P_{X|S}} &\min_{s' \in \mathcal{S}} \Bigg\{ W(k-1,s') \\ \times &\bigg(\max_{y \in \mathcal{Y}} \sum_{x \in \mathcal{G}(y,s'|s)}P_{X|S}(x|s)\bigg)^{-1} \Bigg\} \label{wit},
	\end{split}
	\end{align}
	with $ P_{X|S}(\cdot|\cdot) $ being a probability mass function on $\mathcal{X} $ for each state  $ s \in \mathcal{S} $. The subset of the inputs that can result in the output $y$ is denoted by $  \mathcal{G}(y,s'|s)=\{x| x \in \mathcal{X}, P_X(y,s'|x,s)>0\} $, where $ s $ is the current state and $ s' $ is the next state of the channel.
	For the finite-state erasure channel, $ \mathcal{G}(y,s'|s) =\{y\} $ if no erasure occurs (hence, $ x=y $) and $\mathcal{G}(y,s'|s)= \mathcal{X} $, otherwise.}		
Assume that the current state $s \in \mathcal{S}_{I}$, then there is two outgoing edges. {Let $ s_e $ and $ s_s $ be the  end-point of erasure and error-free edges, respectively. Hence}

\begin{align}
\begin{split}
W(k,s)&=\max_{P_{X|S}} \min\bigg\{W(k-1,s_e)\big(\max_{y \in \mathcal{Y}} \sum_{x \in \mathcal{X}} P_{X|S}(x|s)\big)^{-1}\\
&\qquad , W(k-1,s_s)\big(\max_{y \in \mathcal{Y}} P_{X|S}(x=y|s)\big)^{-1}\bigg\}
\end{split}\nonumber\\
\begin{split}
&\stackrel{(a)}{=} \min\bigg\{W(k-1,s_e)\times 1, \\ &\qquad W(k-1,s_s)\max_{P_{X|S}} \big(\max_{y \in \mathcal{Y}} P_{X|S}(x=y|s)\big)^{-1}\big\}\end{split} \nonumber \\
&= \min\bigg\{ W(k-1,s_e), W(k-1,s_s)\times q \bigg\}, \forall s \in \mathcal{S}_{I}. \label{its3}
\end{align}	
Note that $\sum_{x \in \mathcal{X}}P_{X|S}(x|s)$ is equal to $1$, as the summation is on all input alphabet. Furthermore, because one of the elements (i.e. $W(k-1,s_e)$) is constant w.r.t. the input distribution in the minimization argument, then the max-min operation can be swapped to get (a). At last, the uniform distribution is the solution of
$\max_{P_{X|S}}\big(\max_{y \in \mathcal{Y}}P_{X|S}(x=y|s)\big)^{-1}$ 
which equals $q$ and gives \eqref{its3}. Note that, \eqref{its3} shows the edge with erasure, multiplies a gain of 1. Whereas, the edge with error-free transmission multiplies a gain of $ q $. Furthermore, if current state $ s\in \mathcal{S}_{II} $, it leads to	
\begin{align}
W(k,s)&= \max_{P_{X|S}} \Bigg\{W(k-1,s') \big(\max_{y \in \mathcal{Y}} P_{X|S}(x=y|s)\big)^{-1} \Bigg\}\nonumber\\
&= W(k-1,s') \times q,\; \forall s \in \mathcal{S}_{II}. \label{its4}
\end{align}
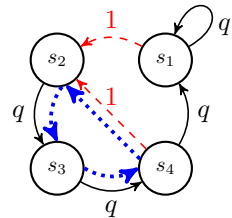
\begin{figure}[t]
	\centering
	\begin{tikzpicture}[->, >=stealth', auto, semithick, node distance=1.8cm]
	\tikzstyle{every state}=[fill=white,draw=black,thick,text=black,scale=0.8]
	\node[state]   (S1)       {$  s_1 $};
	\node[state]    (S2)[left of=S1]   {$ s_2 $};
	\node[state]    (S3)[below of=S2]   {$ s_3 $};
	\node[state]    (S4)[right of=S3]   {$ s_4 $};
	\path
	(S1) edge[out=30,in=70,looseness=8] node [below right] {$ q $} (S1)
	edge[dashed, red,bend right]  node [above] {$ 1 $} (S2)
	(S2) edge[bend right]  node [left] {$ q $}  (S3)
	(S2.280) edge[blue, dotted, bend right, line width=.5mm]   (S3.90)
	(S3) edge[bend right]  node [below] {$ q $} (S4)
	(S3.0) edge[blue, dotted, bend right, line width=.5mm]   (S4.180)
	(S4) edge[bend right]  node [right] {$ q $} (S1)
	(S4) edge[dashed, red]  node [above] {$ 1 $} (S2)
	(S4.160) edge[blue, dotted, line width=.5mm]   (S2.290);
	\end{tikzpicture}
	\caption{\small{The gain $a(s,s') \in \{1,q\}$ associated with each edge in a $ (w=3,d=1) $ sliding-window erasure channel. The cycle shaped with the states obtained from the solution of DP problem for $ s \in \mathcal{S}_m $ is highlighted with a blue-dotted line.}}
	\label{fig:transition2}
\end{figure}
From \eqref{its3} and \eqref{its4} it yields that at each iteration, a gain is multiplied to the cost-to-go function, i.e., $ W(k-1,s') $. We denote this gain with $ a(s,s') $. This gain is obtained by solving $ \max_{P_{X|S}} (\max_{y \in \mathcal{Y}} \sum_{x \in \mathcal{G}(y,s'|s)} P_{X|S}(x|s))^{-1} $ for given $ s $ and $ s' $ that results in
\begin{align}
a(s,s')=
\begin{cases}
1 & \text{if } \mathcal{G}(y,s'|s)=\mathcal{X},
\\
q & \text{if } \mathcal{G}(y,s'|s)=y.
\end{cases} \label{gain}
\end{align}
In other words, if going from $ s $ to $ s' $ is a path or edge with erasure (or when $ \mathcal{G}(y,s'|s)= \mathcal{X}$) the gain is $a(s,s')=1 $ and if it is for error-free edge (or when $ \mathcal{G}(y,s'|s)= y$) then $a(s,s')=q $.
In Fig. \ref{fig:transition2} the associated gain for each edge is shown. The red-dashed lines that represent erasure have a gain of $ 1 $ and other edges which represent error-free transmissions have a gain of $ q $.

Therefore, starting from any initial state, solving the DP problem of \eqref{wit} for finite-state erasure channel, corresponds to the output sequence with the maximum number of erasures that gives minimum overall gain, i.e. assuming $ s(0)=s $,
\begin{align}
W(k,s)&=\min_{s(0:k-1) \in \mathscr{S}(k)} \prod_{i=0}^{k-1} a(s(i),s(i+1)), \label{gain2}
\end{align}	
where $ \mathscr{S}(k) $ denotes the set of all the state sequences of length $ k $ subject to the state transition graph $ \mathscr{G} $. 

For calculating $ C_{0f} $ using \eqref{c0Fs}, we need to find the output sequence with maximum number of erasures. According to \eqref{gain2}, \eqref{maxeras} and \eqref{worsteras}, for any initial state $ s $ and $ k $ number of steps, we have
$q^{k-\tau k - |\mathcal{S}|} <  W(k,s)< q^{k-\tau k + l + |\mathcal{S}|}$
By taking a logarithm and dividing it by $n$,
\begin{align*}
1-\tau -\frac{l}{k} <& \frac{1}{k}\log W(k,s)< 1-\tau +\frac{l+|\mathcal{S}|}{k}.
\end{align*}
When $ k \rightarrow \infty $ the upper and lower bounds meet in $ 1-\tau $ which gives $	C_{0f}=\lim_{k \rightarrow \infty} \frac{1}{k} \min_{s \in \mathcal{S}} \log W(k,s) =1-\tau.$
		
		\section{Proof of Theorem  \ref{thm:ec}~(finite-state erasure channel topological bound)} \label{app:nselw}
		Let $ \mathbf{c}_1 \in \mathcal{X}^n$ be the first codeword for which adjacent inputs denoted by $ \mathscr{Q}(\mathbf{c}_1) $ depend on the number and position of erasures of each output sequence in $\mathscr{Y}_T(\mathbf{c}_1):= \cup_{s_0 \in \mathcal{S}} \mathscr{Y}(s_0,\mathbf{c}_1) $. Let $\mathcal{G}(\mathbf{y})$ denotes the set of input sequences that produce output sequence $ \mathbf{y}=y(0:n-1) $. Hence,
		\begin{align}
		\mathscr{Q}(\mathbf{c}_1)&=\bigcup_{\mathbf{y} \in \mathscr{Y}_T(\mathbf{c}_1) } \mathcal{G}(\mathbf{y}). \label{q1}
		\end{align}
		Also, we have $\mathcal{G}(\mathbf{y}) :=\bigcup_{s_0 \in \mathcal{S}} \mathcal{G}(s_0,\mathbf{y})$, where $ \mathcal{G}(s_0,\mathbf{y}) $ is the subset of inputs that can result in output $\mathbf{y}$ with initial state $ s_0 $.
		Therefore, $|\mathcal{G}(\mathbf{y})| \leq \sum_{s_0 \in \mathcal{S}} |\mathcal{G}(s_0,\mathbf{y})|$, which gives,
		\begin{align}
		|\mathscr{Q}(\mathbf{c}_1)|&\leq \sum_{\mathbf{y} \in \mathscr{Y}_T(\mathbf{c}_1)}\sum_{s_0 \in \mathcal{S}} |\mathcal{G}(s_0,\mathbf{y})|. \label{qsz}
		\end{align}
		Next, we give an upper bound on $ |\mathcal{G}(s_0,\mathbf{y})| $.
		\begin{lem} \label{lem:gsize}
			For the finite-state erasure channel, the following inequality holds
			\begin{align}
			|\mathcal{G}(s_0,\mathbf{y})| \leq \gamma q^{\tau n}, \label{gsz}
			\end{align}
			where $ \gamma >0 $ is a constant number.
		\end{lem}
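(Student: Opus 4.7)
The plan is to observe that in a finite-state erasure channel, once the initial state $s_0$ is fixed, the output sequence $\mathbf{y}$ encodes essentially all the information about the noise realization, and the only freedom left for the input sequence lies exactly at the erased positions.

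First, I would argue that given $s_0$ and $\mathbf{y}=y(0:n-1)$, the underlying noise sequence $\mathbf{v}=v(0:n-1)$ is uniquely determined. Indeed, each position with $y(t)\neq *$ forces $v(t)=0$, and each position with $y(t)=*$ forces $v(t)=1$. Since every edge in the state-transition graph (Def.~\ref{def:fsechannel}) carries a distinct noise label among its outgoing edges from any state, the resulting noise sequence pins down the walk $\varpi$ through $\mathscr{G}$ and hence the entire state trajectory $s(0:n)$.

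Second, I would count the inputs consistent with $(s_0,\mathbf{y})$. At every non-erasure position ($v(t)=0$) the channel output equals the channel input, so $x(t)=y(t)$ is forced. At every erasure position ($v(t)=1$), the output is $*$ regardless of the input, so $x(t)$ is unconstrained and can be any of the $q$ symbols in $\mathcal{X}$. Writing $E(\mathbf{v})$ for the number of erasures in the (now determined) noise sequence, this gives
\begin{equation*}
|\mathcal{G}(s_0,\mathbf{y})| \leq q^{E(\mathbf{v})}.
\end{equation*}

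Finally, I would invoke Lemma~\ref{lem:maxer}, which bounds the number of erasures in any valid noise sequence by $E(\mathbf{v}) \leq \tau n + |\mathcal{S}|$. Substituting yields
\begin{equation*}
|\mathcal{G}(s_0,\mathbf{y})| \leq q^{\tau n + |\mathcal{S}|} = \gamma\, q^{\tau n},
\end{equation*}
with the constant $\gamma := q^{|\mathcal{S}|}$ independent of $n$, $s_0$, and $\mathbf{y}$. The argument is essentially bookkeeping; the only non-trivial ingredient is the uniqueness of the noise-to-walk correspondence (which relies on the labelling convention in Def.~\ref{def:fsechannel} that distinct outgoing edges carry distinct noise values) together with the previously established worst-case erasure count of Lemma~\ref{lem:maxer}. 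I do not anticipate a genuine obstacle; the main point to be careful about is making the ``noise sequence is determined by $(s_0,\mathbf{y})$'' step explicit, since without it one might na\"{\i}vely double-count inputs over different noise realizations.
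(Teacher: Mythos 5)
Your proposal is correct and follows essentially the same route as the paper's proof: the number of erasures determines the count of consistent inputs (a factor of $q$ per erased position, with non-erased positions forced), and Lemma~\ref{lem:maxer} caps that count at $\tau n + |\mathcal{S}|$, giving $\gamma = q^{|\mathcal{S}|}$. Your version merely spells out the noise-determination step that the paper leaves implicit.
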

		\begin{proof}
			Number of erasures in each output sequence determines the size of $ \mathcal{G}(s_0,\mathbf{y}) $. Henceforth, using the result of Lemma \ref{lem:maxer}, the set of inputs that can produce any output is upper bounded by $|\mathcal{G}(s_0,\mathbf{y})| \leq q^{\tau n+ |\mathcal{S}|}= \gamma q^{\tau n}$, where $ \gamma=q^{|\mathcal{S}|} $.
		\end{proof}
		Therefore, substituting \eqref{gsz} into \eqref{qsz} yields
			\begin{align*}
			|\mathscr{Q}(\mathbf{c}_1)|	\leq \sum_{\mathbf{y} \in \mathscr{Y}_T(\mathbf{c}_1)} \sum_{s_0 \in \mathcal{S}} \gamma q^{\tau n}&\leq \sum_{\mathbf{y} \in \mathscr{Y}_T(\mathbf{c}_1)} |\mathcal{S}| \times \big(\gamma q^{\tau n}\big)\\
			&= | \mathscr{Y}_T(\mathbf{c}_1)| |\mathcal{S}| \times \big(\gamma q^{\tau n}\big).
			\end{align*}
			Because $\mathscr{Y}_T(\mathbf{c}_1):= \cup_{s_0 \in \mathcal{S}} \mathscr{Y}(s_0,\mathbf{c}_1) $. According to \eqref{outlam}, for any initial state the number of outputs is upper-bounded by $ \beta \lambda^n $. Therefore,
		\begin{align*}
		|\mathscr{Q}(\mathbf{c}_1)|&\leq \big(|\mathcal{S}| (\beta \lambda^n) \big)(\gamma |\mathcal{S}| q^{\tau n}) =\gamma \beta|\mathcal{S}|^2 (q^{\tau}\lambda)^n.
		\end{align*}
		Choosing non-adjacent inputs as the codebook results in error-free transmission. The above argument is true for other codewords, i.e., $|\mathscr{Q}(\mathbf{c}_i)| \leq \zeta \times (q^{\tau}\lambda)^n, i \in \{1,\dots, M \},$ where, $ \zeta:=\gamma \beta|\mathcal{S}|^2 $  and $ M $ is the number of codewords in the codebook constructed with above method such that $\mathcal{X}^n = \bigcup_{i=1}^M \mathscr{Q}(\mathbf{c}_i)$.\footnote{{This holds by construction. If $\mathcal{X}^n \neq \bigcup_{i=1}^M \mathscr{Q}(\mathbf{c}_i)$, then $ \exists x_{1:n} \in$ \mbox{$ \mathcal{X}^n  \setminus \bigcup_{i=1}^M \mathscr{Q}(\mathbf{c}_i) $} . We can choose $ x_{1:n} $ as another codeword (it is non-adjacent to other codewords).}}
		Considering the size of these sets, we have
		\begin{align*}
		q^n = |\bigcup_{i=1}^M \mathscr{Q} (\mathbf{c}_i)| \leq \sum_{i=1}^{M} |\mathscr{Q} (\mathbf{c}_i)|\leq M 	\times \big( \zeta (q^{\tau}\lambda)^n \big).
		\end{align*}
		Therefore, the number of distinguishable inputs is lower bounded by $M \geq q^n/(\zeta (q^{\tau}\lambda)^n)$.
		Hence, it builds a lower bound for the zero-error capacity
		\begin{align}
		C_0 \geq\frac{\log M}{n} &\geq 1-\tau-\log \lambda - \frac{\log \zeta}{n}.
		\end{align}
		If $n$ is large last term vanishes and result the lower bound in \eqref{ecc0}. Note that the result for $ C_{0f} $ in Theorem \ref{thm:nsec0f} builds an upper bound for the zero-error capacity (without feedback), hence, the bounds on the zero-error capacity of finite-state erasure channel in \eqref{ecc0} hold.
		
		\section{Proof of Lemma \ref{lem:tec}} \label{app:tec}
		In the sliding window erasure channel for every $ w $ transmissions at most $ d $ erasures can happen. Therefore, having $ n > w $ transmissions the number of erasures can not exceed $ n(d/w) + w $. Therefore, for any walk $ \varpi(0:n-1) $ through the associated graph, we have 
		\[\frac{E(\varpi(0:n-1))}{n} \leq \frac{d}{w} +\frac{w}{n}.\]
		
		On the other hand, starting from error-free state, i.e., no erasure in past $ w $ transmission and having $ d $ erasures periodically, will result in 
		\[\frac{E(\varpi(0:n-1))}{n} \geq \frac{d}{w} +\frac{d}{n}.\] 
		Therefore, 
		\begin{align}
		\lim_{n \rightarrow \infty} \max_{\varpi(0:n-1)} \frac{E(\varpi(0:n-1))}{n} = \frac{d}{w}. \label{limdn} 
		\end{align}
		Comparing \eqref{limdn} with \eqref{ratio} results in $ \tau = d/w $.
		\section{Proof of Theorem \ref{thm:fanctp}~(finite-state additive noise channel topological bounds)} \label{app:nsslw}
		The lower and upper bounds are proven in the sequel subsections, separately.
		\subsection{The lower bound}
		First, we give the following Lemma.
		\begin{lem} \label{lemnss}
			Let $ \mathcal{G}(s_0,\mathbf{y}) $ be subset of the inputs that can result in output $\mathbf{y}=y(0:n-1)$ with initial state $ s_0 $ for the finite-state additive noise channel. The following holds
			\begin{align}
			\alpha \lambda^n \leq |\mathcal{G}(s_0,\mathbf{y})| \leq \beta \lambda^n,
			\end{align}
		\end{lem}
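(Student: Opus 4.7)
The plan is to reduce the counting of $\mathcal{G}(s_0,\mathbf{y})$ to the counting already carried out in Proposition~\ref{thm:out}, via the additive structure of the channel. For a finite-state additive noise channel the defining relation is $y(t)=x(t)\oplus v(t)$, so $\oplus$ admits an inverse $\ominus$ on the alphabet. Thus for any fixed output sequence $\mathbf{y}=y(0{:}n-1)$ and initial state $s_0$, an input $\mathbf{x}=x(0{:}n-1)$ lies in $\mathcal{G}(s_0,\mathbf{y})$ if and only if the noise sequence defined by $v(t):=y(t)\ominus x(t)$ is a valid trajectory of the state machine from $s_0$. This gives a bijection between $\mathcal{G}(s_0,\mathbf{y})$ and the set of length-$n$ noise sequences issuing from $s_0$ in the finite-state machine governing~$v$.

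Next, I would identify this latter set with the set of directed walks of length $n$ in $\mathscr{G}$ starting at $s_0$: by Definition~\ref{def:fanchannel}, each outgoing edge from a state is labelled by a distinct noise value, so walks from $s_0$ are in one-to-one correspondence with valid noise sequences (and hence, given $\mathbf{y}$, with the elements of $\mathcal{G}(s_0,\mathbf{y})$). The cardinality of this set of walks is exactly $\mathbf{z}_0^{\top}\mathcal{A}^{n}\mathbf{1}$, where $\mathbf{z}_0$ is the indicator of $s_0$ and $\mathbf{1}$ is the all-ones vector.

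Finally, I would invoke the Perron-Frobenius theorem for the irreducible adjacency matrix~$\mathcal{A}$ (strong connectivity is assumed in Definition~\ref{def:fanchannel}), exactly as in the proof of Proposition~\ref{thm:out}: there exist strictly positive left and right Perron eigenvectors $u,v$ with $\mathcal{A}^n/\lambda^n\to vu^{\top}$, so every entry of $\mathcal{A}^n$ grows like $\Theta(\lambda^n)$. Setting
\[
\alpha:=\inf_{n,s_0}\lambda^{-n}\,\mathbf{z}_{s_0}^{\top}\mathcal{A}^n\mathbf{1},\qquad
\beta:=\sup_{n,s_0}\lambda^{-n}\,\mathbf{z}_{s_0}^{\top}\mathcal{A}^n\mathbf{1},
\]
both are finite and positive, and yield the claimed inequalities $\alpha\lambda^n\le|\mathcal{G}(s_0,\mathbf{y})|\le\beta\lambda^n$ uniformly in $s_0$ and~$\mathbf{y}$.

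The only mildly subtle step is the bijection in the first paragraph: one must check that the distinct-label assumption on outgoing edges really does give a one-to-one correspondence between noise values and edges out of each state, so that a noise sequence uniquely determines the walk (and vice versa). Once that is in place, the bounds are an immediate consequence of Perron-Frobenius and mirror the proof of Proposition~\ref{thm:out}, with the role of outputs (for a fixed input) replaced by inputs (for a fixed output). Notably, the bounds are independent of~$\mathbf{y}$, which is the point of the lemma and what makes it usable in the subsequent capacity estimate.
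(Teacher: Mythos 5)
Your proposal is correct and follows essentially the same route as the paper: the key step in both is that, for fixed $\mathbf{y}$, the additive relation $y(t)=x(t)\oplus v(t)$ puts $\mathcal{G}(s_0,\mathbf{y})$ in bijection with the set of valid noise sequences (equivalently, walks from $s_0$), whose cardinality is $\Theta(\lambda^n)$ by Perron--Frobenius. The only cosmetic difference is that the paper passes through the intermediate identity $|\mathcal{G}(s_0,\mathbf{y})|=|\mathcal{V}(s_0,n)|=|\mathscr{Y}(s_0,\mathbf{x})|$ and then cites Proposition~\ref{thm:out} for the bounds, whereas you re-derive the walk count directly; both yield the same constants.
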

		where $ \alpha $ and $ \beta $ are constants appeared in \eqref{outlam}.
		\begin{proof}
			The output sequence, $ \mathbf{y} $, is a function of input sequence, $ \mathbf{x}=x(0:n-1) $, and channel noise, $ \mathbf{v}=v(0:n-1) $, which can be represented as the following
			\begin{align}
			\mathbf{y} = \mathbf{x} \oplus \mathbf{v},  \label{adnoise}
			\end{align}
			where $ \mathbf{v} \in \mathcal{V}(s_0,n) $. The set of all output sequences $ \mathscr{Y}(s_0,\mathbf{x}) $ can be obtained as $\mathscr{Y}(s_0,\mathbf{x})=\{ \mathbf{x} \oplus \mathbf{v}| \mathbf{v} \in \mathcal{V}(s_0,n)\}$.
			Since for given $ \mathbf{x} $, \eqref{adnoise} is bijective, we have the following
			\begin{align}
			|\mathscr{Y}(s_0,\mathbf{x})|=| \mathcal{V}(s_0,n)|. \label{yvsize}
			\end{align}
			On the other hand, the subset of the inputs that can result in output $\mathbf{y}$ with initial state $ s_0 $, $ \mathcal{G}(s_0,\mathbf{y}) $ is defined as $\mathcal{G}(s_0,\mathbf{y})=\{ \mathbf{x} |\mathbf{x} \oplus \mathbf{v} = \mathbf{y}, \mathbf{v} \in \mathcal{V}(s_0,n)\}.$		
			Again, fixing $ \mathbf{y} $, the mapping $\mathbf{x} \to \mathbf{v}$ in \eqref{adnoise} is bijective, hence $|\mathcal{G}(s_0,\mathbf{y})|=| \mathcal{V}(s_0,n)|$. Combining it with \eqref{yvsize} yields $ |\mathcal{G}(s_0,\mathbf{y})|= |\mathscr{Y}(s_0,\mathbf{x})| $. Moreover, Proposition \ref{thm:out} gives the bounds on $ |\mathscr{Y}(s_0,\mathbf{x})| $. 
		\end{proof}
		
		Similar to Appendix \ref{app:nselw}, let $ \mathbf{c}_1 \in \mathcal{X}^n $ be the first codeword for which adjacent inputs denoted by $ \mathscr{Q}(\mathbf{c}_1) $. Again, each output sequence is in $\mathscr{Y}_T(\mathbf{c}_1):= \cup_{s_0 \in \mathcal{S}} \mathscr{Y}(s_0,\mathbf{c}_1) $. Hence,
		\begin{align}
		\mathscr{Q}(\mathbf{c}_1)&=\bigcup_{\mathbf{y} \in \mathscr{Y}_T(\mathbf{c}_1) } \mathcal{G}(\mathbf{y}), \label{q3}
		\end{align}
		where, $ \mathcal{G}(\mathbf{y}):=\bigcup_{s_0 \in \mathcal{S}} \mathcal{G}(s_0,\mathbf{y}) $,
			which gives 
			\begin{align*}
			|\mathscr{Q}(\mathbf{c}_1)|\leq\quad& \sum_{\mathbf{y} \in \mathscr{Y}_T(\mathbf{c}_1)}\sum_{s_0 \in \mathcal{S}} |\mathcal{G}(s_0,\mathbf{y})|\\
			\stackrel{\text{Lem.}\,\ref{lemnss}}{\leq}\,& \sum_{\mathbf{y} \in \mathscr{Y}_T(\mathbf{c}_1)} \sum_{s_0 \in \mathcal{S}} \beta \lambda^{ n}\\
			=\quad& \sum_{\mathbf{y} \in \mathscr{Y}_T(\mathbf{c}_1)} |\mathcal{S}| \times \beta \lambda^{ n}\\
			=\quad& | \mathscr{Y}_T(\mathbf{c}_1)|  \times \big(|\mathcal{S}| \beta \lambda^{ n}\big)\\
			=\quad& |  \cup_{s_0 \in \mathcal{S}} \mathscr{Y}(s_0,\mathbf{c}_1)|  \times \big(|\mathcal{S}| \beta \lambda^{ n}\big)	\\
			\stackrel{\text{Thm.}\,\ref{thm:out}}{\leq}&\, |\mathcal{S}|\big(\beta \lambda^{ n}\big)  \times \big(|\mathcal{S}| \beta \lambda^{ n}\big)= \big( \beta|\mathcal{S}| \lambda^n \big)^2.
			\end{align*}
		Again, similar to the proof of finite-state erasure channel in Appendix \ref{app:nselw}, choosing non-adjacent inputs as the codebook results in error-free transmission. The above argument is true for other codewords, i.e., $
		|\mathscr{Q}(\mathbf{c}_i)|\leq \big( \beta|\mathcal{S}| \lambda^n \big)^2, i \in \{1,\dots, M \}$,where $ M $ is the number of codewords in the codebook such that union of corresponding $\mathscr{Q}(\mathbf{c}_i)$ for $i=1,\dots,M,$ covers $ \mathcal{X}^n $. Then,
		\begin{align*}
		q^n = |\bigcup_{i=1}^M \mathscr{Q} (\mathbf{c}_i)| &\leq \sum_{i=1}^{M} |\mathscr{Q} (\mathbf{c}_i)|\leq M 	\times \big( \beta|\mathcal{S}| \lambda^n \big)^2.		
		\end{align*}
		As a result, the number of distinguishable inputs is lower bounded by $M \geq q^n/( \beta|\mathcal{S}| \lambda^n )^2$.
		Therefore, according to zero-error capacity definition
		\begin{align*}
		C_0&\geq \frac{1}{n}\log \frac{q^n}{(\beta |\mathcal{S}|)^2 (\lambda)^{2n}} =1-2\log \lambda - \frac{2}{n} \log (\beta |\mathcal{S}|).
		\end{align*}
		If $n$ is large, the last term vanishes and proves the lower bound in \eqref{nsslmd}.
		\subsection{The upper bound} 
		We prove the upper bound in \eqref{nsslmd}. Here, we show a more general result that holds for zero-error feedback capacity, which itself is an upper bound for $ C_0 $.
		
		We use a similar idea used in \cite{ahlswede1993nonbinary} and \cite{ahlswede2006non} to derive the upper bound.
		Let $ m \in \mathcal{M} $ be the message to be sent and  $ \mathbf{y}=y(0:n) $ be the output sequence such that 
		\begin{align*}
			y(t)=f_{t}(m,y(0:t-1))\oplus v(t), \, t \in \{0,1,\dots, n\},
		\end{align*}
		where $ \mathbf{v}=v(0:n) \in \mathcal{V}(s_0,n) \in \mathcal{X}^{n+1}$ is the additive noise and $ f_{t}(m,\cdot) $ the encoding function. Therefore, the output is a function of encoding function and noise sequence, i.e., $ \mathbf{y}=\psi(f_{0:n}(m,y(0:n-1)),\mathbf{v}) $.
		Let the family of encoding functions $ \mathcal{F}=\{f_{0:n}(m, \cdot)| m \in \mathcal{M}\} $. We denote all possible outputs $ \Psi (\mathcal{F},\mathcal{V}(s_0,n))=\{\mathbf{y}| m \in \mathcal{M}, \mathbf{v} \in \mathcal{V}(s_0,n) \} $. For having a zero-error feedback code any two $ m, m' \in \mathcal{M}, m\neq m'$ and any two $ \mathbf{v}, \mathbf{v}' \in \mathcal{V}(s_0,n)$ must result in $ \psi(f_{0:n}(m,y(0:n-1)),\mathbf{v})\neq \psi(f_{0:n}(m',y'(0:n-1)),\mathbf{v}') $. Note that when $ m = m' $, (even with feedback) at first position that $ v(t) \neq v'(t), t\in \{0,1,\dots, n\},$ will result in $  y(t) \neq y'(t)$. Therefore, assuming the initial condition is known at both encoder and decoder,
		\begin{align*}
			|\Psi (\mathcal{F},\mathcal{V}(s_0,n))| =M|\mathcal{V}(s_0,n)| \leq q^{n+1}.
		\end{align*}
		Therefore, $ M $ is an upper bound on the number of messages that can be transmitted when the initial condition is not available, i.e. $|\mathcal{M}| \leq M$. From  Def. \ref{def:c0f} and using \eqref{outlam} and \eqref{yvsize}, 
		\begin{align*}
			C_{0f} = \sup_{n \in \mathbb{N}_0,\mathcal{F} \in {\mathscr{F}(n)}} \frac{\log |\mathcal{M}|}{n+1}  & \leq  \lim_{n \to \infty}\frac{\log \frac{q^{n+1}}{\alpha \lambda^{n+1}}}{n+1}= 1-\log \lambda.
		\end{align*}
	Here, we could replace $ \sup $ with $ \lim  $ due to supperadditivity of the channel.
		This proves the upper bound in \eqref{nsslmd}.
		
		\section{Coding scheme used in the numerical example } \label{app:ne}
			The decoder computes a state estimate $ \hat{X}(t) $ as well as an upper bound $ \delta(t) $ of its exactness. 
			These operations are duplicated at the encoder as well.
			The encoder employs the quantizer $ \mathcal{Q}_n$ with blocklength $n$ and computes the quantized value $ q(\tau_i) $ of the current scaled estimation error at epochs $\tau_i:=ni, i=0,1,\dots,$ produced by the encoder–decoder pair:
			\begin{align*}
			q(\tau_i)=\mathcal{Q}_n(\varepsilon(\tau_i)),\quad\varepsilon(\tau_i) := \frac{X(\tau_i)-\hat{X}(\tau_i)}{\delta(\tau_i)}. 
			\end{align*}
			Next, the encoder encodes it by means of the encoding technique described above with blocklength $ n=15 $ (every third bit is a parity check, i.e. $ b_{k+2}=b_{k+1}\oplus b_{k}, k=1,4,7,..., 13$) and sends it across the channel during the next epoch $ \left[ \tau_i, \tau_{i+1} \right)  $. 
			\\
			At the decoder, the error-less decoding rule is applied to the data received within the previous epoch $ \left[ \tau_{i-1}, \tau_{i} \right) $ and therefore computes the quantized and scaled estimation error $ q(\tau_{i-1}) $.
			Next, the {\em estimate} $ \hat{X} $ and the {\em exactness bound} $ \delta $ are updated:
			\begin{align*}
			\hat{X}(\tau_i) &= a^n \bigg( \hat{X}(\tau_{i-1}) + \delta(\tau_{i-1}) q(\tau_{i-1})\bigg),\\
			\delta(\tau_i) &= \delta(\tau_{i-1}) 2^{(h_{lin}-C_0)n}+\delta_*,
			\end{align*}
			where $  \delta_* > 2^{h_{lin}n} $ is a constant scalar and $ 2^{(h_{lin}-C_0)n}$ is the contraction rate of the quantizer $ \mathcal{Q}_n $.
			The encoder and decoder are given the common and arbitrarily chosen values $ \hat{X}(0)=0 $ and $ \delta(0)=1$.
		\section{Proof of Theorem \ref{thm:c0lec}~($ C_{0} $ lower bound of sliding window erasure channel)} \label{app:swe}  
		In \cite{saberi2018estimation}, a binary erasure channel is introduced and studied that in every consecutive-window of $ w $ bits at most $ d $ erasures can happen. A generalization of this channel is that the alphabet size can be any $ q \in \{2,3, \dots\} $ defined as {\em consecutive-window erasure channel}. The following Proposition gives a lower bound for the zero-error capacity of this channel.
		\begin{prop}\label{swcwl1}
			The zero-error capacity a consecutive-window erasure channel in which erases up to $ d $ symbols in every consecutive-window of $ w $ symbols is lower bounded by
			\begin{align}
			C^{cw}_0 \geq 1-\frac{1}{w}\log_q V^w_{d}(q). \label{shlb}
			\end{align}
		\end{prop}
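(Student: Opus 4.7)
My plan is to construct a zero-error single-block code of length $w$ achieving rate at least $1-(1/w)\log_q V^w_d(q)$, and then concatenate it across the independent blocks of the consecutive-window channel. There are two ingredients: (i) reducing zero-error coding over one $w$-window to a minimum-Hamming-distance constraint, and (ii) invoking the Gilbert-Varshamov bound to exhibit a code of the required distance and size.

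First, I would characterize confusability of inputs within a single window. Suppose $\mathbf{x},\mathbf{x}'\in\mathcal{X}^w$ are sent and suffer erasure patterns $E,E'\subseteq\{1,\dots,w\}$ with $|E|,|E'|\leq d$. Since the erasure symbol $*$ does not belong to $\mathcal{X}$, the positions of $*$ in the output are exposed to the receiver, forcing $E=E'$; on the unerased positions the outputs equal $\mathbf{x}$ and $\mathbf{x}'$ respectively, and must therefore agree. Hence two inputs can be confused iff they differ in at most $d$ coordinates, i.e.\ $d_H(\mathbf{x},\mathbf{x}')\leq d$. So any code $\mathcal{C}\subseteq\mathcal{X}^w$ of minimum Hamming distance $d+1$ is a zero-error code for a single window.

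Second, I would apply the standard Gilbert-Varshamov argument: greedily pick codewords, each outside the union of Hamming balls of radius $d$ around the previously chosen ones. The procedure terminates only when these balls cover $\mathcal{X}^w$, yielding a code $\mathcal{C}$ with
\begin{align*}
|\mathcal{C}| \geq \frac{q^w}{V^w_d(q)}.
\end{align*}
Because the consecutive-window channel acts independently on disjoint blocks of $w$ symbols (each constrained to at most $d$ erasures), the $k$-fold concatenation $\mathcal{C}^k$ is a zero-error code of length $kw$. Applying Def.~\ref{def:c0} gives
\begin{align*}
C_0^{cw} \geq \frac{\log_q|\mathcal{C}|}{w} \geq 1-\frac{1}{w}\log_q V^w_d(q),
\end{align*}
as desired.

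The only subtle step I anticipate is the confusability reduction: one must be careful that the erasure marker $*$ genuinely reveals the erasure positions, so that $E=E'$ is forced and asymmetric erasure patterns cannot produce matching outputs from distinct inputs. Once that observation is in place, the Gilbert-Varshamov construction and the block-independence of the consecutive-window channel make the remainder routine.
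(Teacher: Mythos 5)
Your proof is correct, and it reaches the bound by a genuinely different route from the paper. The paper also lifts the channel to a memoryless one on blocks of length $w$, but then invokes Shannon's quadratic-form lower bound $C_0 \geq -\log\min_{p}\sum_{i,j}\mathbf{A}_{ij}p(x_i)p(x_j)$ evaluated at the uniform input distribution, using the fact that every row of the confusability matrix sums to $V^w_d(q)$; this yields $q^{2w}/(q^w V^w_d(q)) = q^w/V^w_d(q)$ distinguishable inputs. You instead characterize confusability explicitly as $d_H(\mathbf{x},\mathbf{x}')\leq d$ (your argument that the erasure marker $*$ forces $E=E'$ is sound, since $*\notin\mathcal{X}$) and then run the Gilbert--Varshamov greedy construction to get a minimum-distance-$(d+1)$ code of size at least $q^w/V^w_d(q)$. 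The two arguments give the identical count here precisely because the confusability graph is regular (vertex-transitive, in fact), so Shannon's bound at the uniform distribution degenerates to the sphere-covering count underlying Gilbert--Varshamov. What your approach buys is an explicit, constructive combinatorial argument that also makes the single-window zero-error criterion (minimum Hamming distance $d+1$) transparent; what the paper's approach buys is a one-line appeal to a general-purpose bound that would still apply if the confusability graph were irregular, where the optimal $p$ could beat the uniform one. Your concatenation step is also fine: since the erasure constraint applies independently to each disjoint $w$-block, per-block decoding of $\mathcal{C}^k$ incurs no error, matching the paper's observation that the lifted channel is memoryless.
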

		\begin{proof}
			Considering every window of $ w $ transmissions, the channel acts as a mapping from input space $ \mathcal{X}^w $ to $ \mathcal{Y}^w $, where $ \mathcal{Y}=\mathcal{X} \cup \{*\} $. Note that this lifted channel is memoryless and therefore, the results for memoryless channel can be applied. 
			
			We use the lower bound for zero-error capacity given by Shannon in \cite{shannon1956zero}. Let $ p=[p(x_1), p(x_2), ... , p(x_{|\mathcal{X}|})] $ and 
			\[\Xi=\{p \in \mathbb{R}^{|\mathcal{X}|} \,|\, \mathbbm{1}^Tp=1, \, p \succeq 0 \}\] be the input probability vector and simplex of probability vectors on the input set $ \mathcal{X} $, respectively. Shannon proved the following lower bound for zero-error capacity
			\begin{equation}\label{lbnd}
			C_{0} \geq -\log\min_{p \in \Xi}\sum_{i,j} \mathbf{A}_{ij}\,p(x_i)p(x_j),
			\end{equation}
			where $ \mathbf{A}_{ij} $ is the $ (i,j) $-th element of the adjacency matrix, $ \mathbf{A} $. These elements are equal to one if $ i $-th and $ j $-th input are adjacent and zero otherwise.
			The minimization in \eqref{lbnd} can be restated as a quadratic optimization problem as follows
			\begin{align}
			\min_{p \in \Xi} \quad & p^T\mathbf{A}\,p. \label{min1}
			\end{align}
			Generally, $ \mathbf{A} $ is indefinite and the problem of \eqref{min1} is not convex.
			Although, uniformly distributed inputs may not yield the minimum in  Shannon's lower bound, \eqref{lbnd}; they still yield an upper bound for the solution of \eqref{min1} and hence a lower bound on it and hence on the zero-error capacity of the channel. 
			
			It is straightforward to see that the sum of each row in the adjacency matrix is $ V^w_{d}(q):= \sum_{i=0}^{d}{w \choose i}(q-1)^i $ and accordingly, \[\sum_{i,j} \mathbf{A}_{ij}=q^w  V^w_d(q).\]
			Hence, considering a uniform distribution on the input space and \eqref{lbnd}, we have
			\begin{align*}
			C^{cw}_0 &\geq -\frac{1}{w}\log_q \bigg(q^w V^w_d(q)  \frac{1}{|\mathcal{X}|^2}\bigg) =
			1-\frac{1}{w}\log_q V^w_d(q). 
			\end{align*}
		\end{proof}
		Moreover, the following Proposition shows that the zero-error capacity of the sliding-window channel is lower bounded by consecutive-window erasure channel.
		\begin{prop} \label{swcwl2}
			The zero-error capacity of a ($ w,d $) sliding-window erasure channel is lower bounded by the zero-error capacity of a consecutive-window erasure channel in which erases up to $ d $ symbols in every consecutive-window of $ w $ symbols.
		\end{prop}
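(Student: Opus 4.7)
The plan is a simple noise-set inclusion argument: the sliding-window adversary is strictly more constrained than the consecutive-window adversary, so any zero-error code for the latter transfers directly to the former, and taking suprema over blocklengths then gives the desired inequality.

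First, I would formalize the admissible erasure patterns for each channel. For blocklength $n+1$, let $\mathcal{N}^{sw}(n+1) \subseteq \{0,1\}^{n+1}$ denote the collection of patterns $v(0\!:\!n)$ for which every contiguous substring of length $w$ contains at most $d$ ones, and let $\mathcal{N}^{cw}(n+1) \subseteq \{0,1\}^{n+1}$ denote the collection in which only the disjoint aligned blocks $v(kw\!:\!(k+1)w-1)$, $k=0,1,\ldots$ (with the possibly shorter trailing block constrained in the same way), are required to contain at most $d$ ones. Since each of these disjoint blocks is itself a sliding window of length $w$, the defining constraint of $\mathcal{N}^{sw}$ immediately implies that of $\mathcal{N}^{cw}$, so $\mathcal{N}^{sw}(n+1) \subseteq \mathcal{N}^{cw}(n+1)$.

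Next, I would transfer this inclusion to the zero-error codebooks. A block code $\mathcal{F} \subseteq \mathcal{X}^{n+1}$ is zero-error for an erasure channel iff, for every pair of distinct codewords, no two admissible erasure patterns can produce the same channel output. Because $\mathcal{N}^{sw}(n+1) \subseteq \mathcal{N}^{cw}(n+1)$, any confusion pair realizable under sliding-window-admissible noises is automatically realizable under consecutive-window-admissible noises; equivalently, any $\mathcal{F}$ that is zero-error for the consecutive-window channel is a fortiori zero-error for the sliding-window channel. Using the notation of Def.~\ref{def:c0}, this gives $\mathscr{D}^{cw}(n) \subseteq \mathscr{D}^{sw}(n)$ for every $n \in \mathbb{N}_0$, whence
\[
C_0^{sw} \;=\; \sup_{n \in \mathbb{N}_0,\,\mathcal{F} \in \mathscr{D}^{sw}(n)} \frac{\log|\mathcal{F}|}{n+1} \;\geq\; \sup_{n \in \mathbb{N}_0,\,\mathcal{F} \in \mathscr{D}^{cw}(n)} \frac{\log|\mathcal{F}|}{n+1} \;=\; C_0^{cw},
\]
which is the claim. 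The argument is purely combinatorial and presents no substantive obstacle; the only point requiring care is to note that the particular alignment of the disjoint blocks used to define $\mathcal{N}^{cw}$ is immaterial, since for any alignment each disjoint block remains one of the sliding windows of length $w$, and thus the set inclusion $\mathcal{N}^{sw} \subseteq \mathcal{N}^{cw}$ holds irrespective of indexing.
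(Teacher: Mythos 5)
Your proposal is correct and follows essentially the same route as the paper: both arguments rest on the observation that the sliding-window constraint admits fewer erasure patterns (equivalently, fewer achievable outputs and hence fewer confusable input pairs) than the consecutive-window constraint, so any zero-error code for the latter remains zero-error for the former. The only cosmetic difference is that you phrase the inclusion at the level of noise-pattern sets for arbitrary blocklengths, while the paper phrases it at the level of output sets $\mathcal{Y}_{SW}\subset\mathcal{Y}_{CW}$ and restricts to blocklengths that are multiples of $w$, which suffices for the supremum.
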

		\begin{proof} 
			We claim that every zero-error code of length (block code of size) $ wK $, where $ K\in \mathbb{N} $ for the consecutive-window erasure channel is a zero-error code for sliding window erasure channel, as well. 
			
			Consider when we have one window, i.e., $ K=1 $, two channels are the same since the output set for consecutive-window erasure channel, $ \mathcal{Y}_{CW} $ is equal to output set of sliding-window erasure channel, $ \mathcal{Y}_{SW} $, i.e., $ \mathcal{Y}_{CW}=\mathcal{Y}_{SW} $. However, when $ K=2 $, the sliding window channel impose more constraints on occurring erasures such that less combination of erasure and received letters can be formed in output sequence; e.g. in the consecutive-window erasure channel, $ 2d $ consecutive erasures can happen in the output. However, in the sliding-window case these outputs cannot be formed. This argument holds for any $ K>2 $ as well and hence $ \mathcal{Y}_{SW} \subset \mathcal{Y}_{CW} $. Therefore, less outputs can be produced by the combinations of erasures and received symbols, makes some inputs distinguish comparing to consecutive-window erasure channel. Therefore, since less adjoint inputs appears, the zero-error code for consecutive-window erasure channel will ensure error-free transmission for sliding window channel.
			By this, we can send information at least with rate of $ C^{cw}_0 $.
		\end{proof}
		From Propositions \ref{swcwl1} and \ref{swcwl2}, the lower bound in \eqref{bounds_sw} is concluded.
		
		\section{Proof of Theorem \ref{thm:nssc0f}~($ C_{0} $ bounds of sliding window symmetric channel)} \label{app:sws}  
		The upper and lower bounds are discussed separately in the following subsections.
		\subsection{Upper bound}
		For deriving an upper bound for the zero-error feedback capacity, we assume that the decoder has access to the state information via a {\em side channel}. In other words, by gifting the state information via a genie aided channel, we use \eqref{c0Fs} to derive an upper bound for the zero-error feedback capacity. However, with the state diagram constructed in section \ref{swcm}, each output leads to a different state. Therefore, by having the current state information, the decoder can determine the previous input, that is $	x(t-1)=g(s(t),y(t-1))$.
		
		Hence, channel will perform error-less and therefore $ C_{0f} =1 $. To derive a tighter upper bound we use another form of state representation for the sliding-window symmetric channel which is similar to the finite-state erasure state diagram. In this model, instead of having a one-to-one correspondence between the outputs and states, each erroneous state has two outgoing edges, one for error-free transmission ($  y(t) = x(t) $) and another one when there is an error ($ y(t) \neq x(t) $).
		
		Note with this assumption and having the state information at the decoder, the rest of the analysis is similar to the finite-state erasure channel (Appendix \ref{app:c}) with this difference that for the non-binary channel, an error can take $ q-1 $ different values.
		
		For this channel, $ \mathcal{G}(y,s'|s)=y $ when the transmission is error-free and $ \mathcal{G}(y,s'|s)=\mathcal{X}\backslash\{y\} $ otherwise. Using the same line of reasoning in Appendix \ref{app:c}, an upper bound on the solution of DP in \eqref{wit} can be derived which is stated in the following lemma.  
		\begin{lem}
			For a $ (w,d) $ sliding-window symmetric channel the solution of \eqref{wit} satisfies
			\begin{align*}
			W(k,s)&\leq\frac{q^k}{( q-1 )^{ k\frac{d}{w}-w-|\mathcal{S}|}}.
			\end{align*}
		\end{lem}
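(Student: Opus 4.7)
My approach would mirror the analysis of the finite-state erasure channel in Appendix \ref{app:c}, but with gain values adapted to the symmetric error model. The key observation motivating the choice of state representation is already in the paragraph preceding the lemma: if each output uniquely determines a state transition, then feedback plus state-at-decoder trivially makes the channel error-free and gives the useless bound $C_{0f}\le 1$. So first I would adopt the coarser representation in which every erroneous transition out of a state is collapsed into a single edge that lumps together all $q-1$ possible non-zero error values. With this representation each state has at most two outgoing edges, and for an erroneous edge the reverse-transition set satisfies $\mathcal{G}(y,s'|s)=\mathcal{X}\setminus\{y\}$, while for an error-free edge $\mathcal{G}(y,s'|s)=\{y\}$.

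Second, I would compute the per-edge gain $a(s,s')$ arising from the inner max-min in \eqref{wit}, exactly as in \eqref{gain}. For an error-free edge the uniform input distribution achieves $a(s,s')=q$, as in Appendix \ref{app:c}. For an erroneous edge,
\[
\max_{P_{X|S}}\Bigl(\max_{y\in\mathcal{Y}}\sum_{x\neq y}P_{X|S}(x|s)\Bigr)^{-1}=\max_{P_{X|S}}\Bigl(\max_{y}\bigl(1-P_{X|S}(y|s)\bigr)\Bigr)^{-1},
\]
and the inner maximum is minimized by the uniform distribution (since all the sets $\mathcal{X}\setminus\{y\}$ have the same cardinality, so no asymmetric input distribution can do better). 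This gives $a(s,s')=q/(q-1)$.

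Third, following the derivation leading to \eqref{gain2}, I would express
\[
W(k,s)=\min_{s(0:k-1)\in\mathscr{S}(k)}\prod_{i=0}^{k-1}a(s(i),s(i+1))=\min_{\varpi}\frac{q^{k}}{(q-1)^{e(\varpi)}},
\]
where $e(\varpi)$ counts the erroneous edges in the length-$k$ walk $\varpi$. Hence to upper-bound $W(k,s)$ it suffices to exhibit a single walk with many erroneous edges. The analogue of Lemma \ref{lem:tec} for the sliding-window symmetric channel yields maximal ratio $\tau=d/w$, and the analogue of Lemma \ref{lem:miner} (whose proof uses only strong connectivity and the definition of maximal ratio, not the specific erasure semantics) produces a walk with $e(\varpi)>(d/w)k-w-|\mathcal{S}|$, using $l=w$ for the cycle of maximal ratio. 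Substituting this into the display above gives the claimed bound.

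\textbf{Main obstacle.} The subtlest step is the gain computation for erroneous edges: unlike the erasure case, where the erasure edge has gain $1$ regardless of the input distribution, here the inner objective depends on $P_{X|S}$ in a nontrivial way, and one must verify that no skewed distribution beats the uniform one. A secondary care point is justifying that the coarser state representation is legitimate for upper-bounding $C_{0f}$---that is, that collapsing distinct erroneous transitions into a single edge only enlarges the adversary's choice set and hence can only decrease capacity---so the bound derived under the modified model is a valid upper bound on the original $C_{0f}$, and therefore on $C_0$.
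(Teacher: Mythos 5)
Your proposal matches the paper's proof essentially step for step: the same coarse (erasure-like) state representation with at most two outgoing edges per state, the same per-edge gains $q$ and $q/(q-1)$ obtained from the uniform input distribution, and the same maximal-ratio walk with $\tau=d/w$ and $l=w$ feeding into the product-of-gains bound. The only slip is asserting $W(k,s)=\min_{\varpi}\prod_{i}a(s(i),s(i+1))$ with equality --- here, unlike the erasure case, both terms in the inner min depend on $P_{X|S}$, so pushing the max inside the min only yields $W(k,s)\leq\min_{\varpi}\prod_{i}a(s(i),s(i+1))$ (as the paper notes); since that is the direction you actually use, the claimed bound still follows.
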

		\begin{proof}
			The states (similar to finite-state erasure channel in Appendix \ref{app:c}) can be partitioned into two subsets $ \mathcal{S}_I $ and $ \mathcal{S}_{II} $ ,where, $ \mathcal{S}_I $ contains states that next action can take them in two states, one error-free transmission and the other one, with error. 
			
			Similar to finite-state erasure channel, when $ s \in \mathcal{S}_I$, we have two possible edges ending in state $s_s$ for error-free transmission and $s_e$ for transmission with error. Thus, the solution of \eqref{wit} $ \forall s \in \mathcal{S}_I$ is as follows	
			\begin{align}
			\begin{split}
			W(k,s)=&\max_{P_{X|S}} \min\bigg\{ W(k-1,s_s)\big(\max_{y \in \mathcal{Y}} P_{X|S}(x=y|s)\big)^{-1}\\
			&\qquad, W(k-1,s_e)\big(\max_{y \in \mathcal{Y}} \sum_{x \in \mathcal{X}\backslash\{y\}} P_{X|S}(x|s)\big)^{-1}\bigg\}
			\end{split}\nonumber\\
			\begin{split}
			\stackrel{(a)}{\leq}& \min\bigg\{ W(k-1,s_s)\max_{P_{X|S}}\big(\max_{y \in \mathcal{Y}} P_{X|S}(x=y|s)\big)^{-1}\\
			& \,,W(k-1,s_e)\max_{P_{X|S}} \big(\max_{y \in \mathcal{Y}} \sum_{x \in \mathcal{X}\backslash\{y\}} P_{X|S}(x|s)\big)^{-1}\bigg\}
			\end{split}\nonumber\\
			\stackrel{(b)}{\leq}&\min \big\{ W(k-1,s_s)\times q, W(k-1,s_e)  \times \frac{q}{q-1} \big\}, \label{iterup}
			\end{align}	
			where, (a) holds since distributing $\max_{P_{X|S}}$ inside the minimization, builds an upper-bound for the max-min problem. Moreover, the uniform distribution is the solution of both elements inside the minimization, specifically \[\max_{P_{X|S}} (\max_{y \in \mathcal{Y}} \sum_{x \in \mathcal{X}\backslash\{y\}} P_{X|S}(x|s))^{-1}=q/(q-1)\] which justifies (b).
			
			If the next state $ s' $ is in $ \mathcal{S}_{II} $,
			\begin{align}
			W(k,s)&= \max_{P_{X|S}} \Bigg\{W(k-1,s') \big(\max_{y \in \mathcal{Y}}P_{X|S}(x=y|s)\big)^{-1} \Bigg\}\nonumber\\
			&=q \times W(k-1,s') ,\; \forall s \in \mathcal{S}_{II} \label{its41}
			\end{align}
			which we have worst-case gain as follows
			\begin{align}
			a(s,s')=
			\begin{cases}
			\frac{q}{q-1} & \text{if } \mathcal{G}(y,s'|s)=\mathcal{X}\backslash\{y\},\\
			q & \text{if } \mathcal{G}(y,s'|s)=y.
			\end{cases}
			\end{align}
			Consider the cycle with maximal ratio for sliding window channel is $ \tau=d/w $ and therefore by same line of reasoning for finite-state erasure channel (Lemma \ref{lem:miner}), there is an output sequence that has at least $ \tau k -l- |\mathcal{S}|$ number of errors (not erasures) where here $ \tau=d/w $ and $ l=w $. Further, since \eqref{iterup} is an upper bound rather that equality we have
			\begin{align*}
			W(k,s&)\leq\min_{s(0:k-1) \in \mathscr{S}(k)} \prod_{i=0}^{k-1} a(s(i),s(i+1)),\text{ s.t. } s(0)=s\\
			&\leq q^{k-\tau k + l + |\mathcal{S}|} \bigg( \frac{q}{q-1} \bigg)^{\tau k-l-|\mathcal{S}|}=\frac{q^k}{( q-1 )^{\tau k-l-|\mathcal{S}|}}.
			\end{align*}	
		\end{proof}
		Since the state information is not available for the original channel, this gives an upper bound for the zero-error feedback capacity. 
		\begin{align*}
		C_{0f}&\leq\lim_{k \rightarrow \infty} \frac{1}{k} \log_q W(k,s) \\
		&\leq 1-\lim_{k \rightarrow \infty} \frac{\log_q( q-1 )^{ k\frac{d}{w}-w-|\mathcal{S}|}}{k}=1-\frac{d}{w}\log_q (q-1).
		\end{align*}	
		
		Next, we show that if $ d \geq w/2 $, $ C_{0f}=0 $. We use the notation used to derive the upper bound for the finite-state additive noise channel in Appendix \ref{app:nsslw}.
		
		Since $ d \geq w/2 $, i.e., the number of errors is equal or larger than the error-free ones,  considering any two encoding function (say $ f_m(0:n-1) $ and $ f_{m'}(0:n-1) $) can cause same output, i.e. \[f_m(0:n-1) \oplus v(0:n-1)= f_{m'}(0:n-1) \oplus v'(0:n-1) .\] In other words, there are enough errors that lead to at least one same output. Therefore, $  C_{0f}=0  $ and thus $ C_0=0 $ for $ d \geq w/2 $.
		\subsection{Lower bound}
		First, we define {\em consecutive-window symmetric channel} which is an extension of consecutive-window erasure channel introduced in Appendix \ref{app:swe}.  The difference here is that instead of erasure an error can happen which is any symbol mapped to any symbol other than sent one. We have the following lower bound on zero-error capacity of this channel.
		\begin{prop}\label{swcwsl1}
			The zero-error capacity of a consecutive-window symmetric channel in which up to $ d $ errors can occur in every consecutive-window of $ w $ symbols is lower bounded by
			\[C^{cw}_0 \geq 1-\frac{1}{w}\log_q V^w_{2d}(q).\]
		\end{prop}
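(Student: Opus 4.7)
The plan is to mirror the approach of Proposition~\ref{swcwl1} for the erasure case. Because the consecutive-window symmetric channel is by definition memoryless over blocks of length $w$ (the error budget resets each window), it can be lifted to a memoryless super-channel with input and output alphabet $\mathcal{X}^w$. I would then invoke Shannon's lower bound \eqref{lbnd} on the lifted channel, using the uniform input distribution on $\mathcal{X}^w$ as a feasible (not necessarily optimal) choice, and divide by $w$ to pass from the lifted channel back to the per-symbol rate.

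The key step is to identify the adjacency structure of the lifted channel: two super-symbols $x,x' \in \mathcal{X}^w$ are adjacent iff there exist noise vectors $v,v' \in \mathcal{X}^w$ each of Hamming weight at most $d$ with $x \oplus v = x' \oplus v'$, i.e.\ $v \oplus v' = x \oplus x'$. I claim the set of vectors expressible as $v \oplus v'$ with $|v|,|v'| \le d$ is exactly the Hamming ball of radius $2d$ around the zero vector. The forward inclusion is immediate from the triangle inequality for Hamming weight. For the reverse inclusion, given $u$ with $|u| = k \le 2d$, I would partition the support of $u$ into two subsets of sizes $a$ and $k-a$ with $\max(0,k-d) \le a \le \min(d,k)$ (such an $a$ exists because $k \le 2d$), and define $v$ to agree with $u$ on the first subset and be zero elsewhere, and likewise $v'$ on the second; then $v \oplus v' = u$ and both have weight at most $d$.

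Consequently the $(x,x')$ entry of the adjacency matrix $\mathbf{A}$ equals one exactly when $d_H(x,x') \le 2d$, so each row of $\mathbf{A}$ sums to $V_{2d}^w(q) = \sum_{i=0}^{2d}\binom{w}{i}(q-1)^i$, giving $\sum_{i,j}\mathbf{A}_{ij} = q^w V_{2d}^w(q)$. Plugging the uniform distribution $p(x)=q^{-w}$ into Shannon's bound yields
\begin{align*}
C_0^{cw} \;\ge\; -\frac{1}{w}\log_q\!\left(q^w V_{2d}^w(q)\cdot q^{-2w}\right) \;=\; 1 - \frac{1}{w}\log_q V_{2d}^w(q),
\end{align*}
where the factor $1/w$ accounts for the fact that one use of the lifted channel is $w$ uses of the original channel.

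The main obstacle is the adjacency characterization, specifically the constructive decomposition of any weight-$\le 2d$ vector into two weight-$\le d$ summands; everything else is a direct translation of the erasure argument, with the combinatorial weight factor $(q-1)^i$ now appearing because a $q$-ary error at a given position can take $q-1$ distinct values (just as in the erasure case each ``erased'' position was counted without a multiplicity, whereas here it contributes $(q-1)^i$ to the volume). Once the row sum is identified, the remainder of the proof is algebraic bookkeeping identical in form to Proposition~\ref{swcwl1}.
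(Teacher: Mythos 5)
Your proof is correct and follows essentially the same route as the paper's: lift to a memoryless super-channel over $\mathcal{X}^w$, identify the adjacency neighbourhood of each input as the Hamming ball of radius $2d$ (so each row of $\mathbf{A}$ sums to $V_{2d}^w(q)$), and plug the uniform distribution into Shannon's lower bound \eqref{lbnd}. The only difference is cosmetic: you establish the radius-$2d$ characterization constructively by splitting the support of the difference vector into two weight-at-most-$d$ pieces, whereas the paper simply invokes the standard coding-theory fact that Hamming distance $2d$ suffices to correct $d$ errors.
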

		\begin{proof}
			From coding theory we know that in a block of $ w $ symbols, if the codewords have the Hamming distance of $ 2\times d $, then up to $ d $ errors can be corrected. Therefore, the number of adjacent inputs  for every codeword is equal to $ V_{2d}^w (q) :=1+{w \choose 1}(q-1)+ ... +{w \choose 2d}(q-1)^{2d} $, including the codeword itself. 
			
			According to definition of adjacency, sum of each row is the number of adjacent inputs; therefore, $ V_{2d}^w(q)$ is the sum of rows  of the adjacency matrix. Summing on all elements gives \[\sum_{i,j} \mathbf{A}_{ij}=|\mathcal{X}|V_{2d}^w(q) .\]  Considering a uniform distribution on the input space and \eqref{lbnd} yield
			\begin{align*}
			C_0 &\geq -\frac{1}{w}\log_q\bigg(q^wV_{2d}^w  \frac{1}{|\mathcal{X}|^2}\bigg) =1-\frac{1}{w}\log_qV_{2d}^w(q). 		\end{align*}
		\end{proof}
		Moreover, the following Proposition shows that the zero-error capacity of the sliding-window channel is lower bounded by consecutive-window erasure channel. The proof follows from the same line of reasoning as in Proposition \ref{swcwl2}.
		\begin{prop} \label{swcwsl2}
			The zero-error capacity of a $ (w,d)$ sliding-window symmetric channel is lower bounded by the zero-error capacity of a consecutive-window symmetric channel in which up to $ d $ errors in every consecutive-window of $ w $ symbols.
		\end{prop}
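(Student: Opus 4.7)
The plan is to replicate, essentially verbatim, the reasoning used to establish Proposition \ref{swcwl2} for the erasure case, with erasure replaced by additive error. The goal is to show that every zero-error block code of length $wK$ (with $K\in\mathbb{N}$) for the consecutive-window symmetric channel remains a zero-error code when used over the $(w,d)$ sliding-window symmetric channel, which immediately implies $C_0^{SW}\geq C_0^{CW}$.

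First I would compare the two noise sets. Let $\mathcal{V}_{CW}(n)\subseteq\mathcal{X}^n$ denote the set of admissible additive-noise sequences of length $n=wK$ for the consecutive-window channel, namely those that contain at most $d$ nonzero symbols in each of the $K$ disjoint windows $[1,w],[w+1,2w],\dots,[(K-1)w+1,Kw]$, and let $\mathcal{V}_{SW}(n)$ be the corresponding set for the sliding-window channel, i.e.\ noise sequences with at most $d$ nonzero symbols in \emph{every} length-$w$ window within $[1,n]$. Since the sliding-window constraint implies the disjoint-window constraint, we have $\mathcal{V}_{SW}(n)\subseteq\mathcal{V}_{CW}(n)$.

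Second I would lift this containment to the output side. For any input word $\mathbf{x}\in\mathcal{X}^n$, the set of possible outputs under each channel is $\{\mathbf{x}\oplus\mathbf{v}:\mathbf{v}\in\mathcal{V}_\star(n)\}$ for $\star\in\{SW,CW\}$, so the sliding-window output set is contained in the consecutive-window output set. Consequently, the confusability (adjacency) relation on $\mathcal{X}^n$ induced by the sliding-window channel is a sub-relation of that induced by the consecutive-window channel: if two codewords $\mathbf{x}\neq\mathbf{x}'$ cannot produce a common output via $\mathcal{V}_{CW}(n)$, a fortiori they cannot via $\mathcal{V}_{SW}(n)$. Hence any zero-error code over the consecutive-window channel is a zero-error code over the sliding-window channel, which by Def.\ \ref{def:c0} yields $C_0^{SW}\geq C_0^{CW}$.

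The only subtlety, rather than a genuine obstacle, is checking that the block-length restriction $n=wK$ causes no loss: since the supremum in Def.\ \ref{def:c0} is over $n\in\mathbb{N}_0$, restricting to blocklengths that are multiples of $w$ attains the same capacity in the limit, because any zero-error code of length $n$ can be padded into one of length $w\lceil n/w\rceil$ with rate loss $O(w/n)\to 0$. Everything else follows directly from the noise-set inclusion, exactly as in Proposition \ref{swcwl2}.
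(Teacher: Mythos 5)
Your proposal is correct and follows essentially the same route as the paper, which proves this proposition by invoking the argument of Proposition~\ref{swcwl2}: the sliding-window noise constraint is strictly stronger than the consecutive-window one, so every zero-error code for the consecutive-window channel transfers unchanged to the sliding-window channel. Your version is in fact slightly tighter than the paper's, since you make the noise-set inclusion $\mathcal{V}_{SW}(n)\subseteq\mathcal{V}_{CW}(n)$ explicit, lift it to the confusability relation via the additive structure $\mathbf{y}=\mathbf{x}\oplus\mathbf{v}$, and handle the blocklength-multiple-of-$w$ padding issue that the paper leaves implicit.
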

		
		Propositions \ref{swcwsl1} and \ref{swcwsl2} gives the lower bound in \eqref{nssc0f}.
		
		%
		\bibliographystyle{IEEEtran}
		\bibliography{ieeetranb}	
\end{document}